%% file: main.tex
\documentclass[11pt,reqno]{amsart}

\usepackage[T1]{fontenc}
\usepackage{lmodern}
\usepackage{microtype}
\usepackage{amsmath,amssymb,mathtools,bm,mathrsfs}
\usepackage{booktabs,array,tabularx}
\usepackage{graphicx}
\usepackage{placeins}
\usepackage{flafter}
\usepackage{needspace}
\usepackage{enumitem}
\usepackage{geometry}
\usepackage{etoolbox,xspace}
\usepackage{xcolor}
\usepackage[breaklinks,colorlinks]{hyperref}
\usepackage[nameinlink,noabbrev]{cleveref}

\definecolor{DarkPaper}{HTML}{111318}
\definecolor{DarkText}{HTML}{E8EAF0}
\definecolor{DarkLink}{HTML}{78C7FF}
\definecolor{DarkCite}{HTML}{9FE6C3}
\definecolor{DarkURL}{HTML}{D5B8FF}
\pagecolor{DarkPaper}
\color{DarkText}
\hypersetup{
  pdftitle={Benign projective landscapes for measured quantum divergences},
  pdfauthor={Domingos S. P. Salazar},
  pdfsubject={Measured quantum divergences, accessible information, adaptive decoding, and benign nonconvexity},
  pdfkeywords={no spurious local maxima, strict saddle, benign nonconvexity, landscape analysis, measured quantum divergence, accessible information, adaptive decoding, measurement order},
  linkcolor=DarkLink,
  citecolor=DarkCite,
  urlcolor=DarkURL
}
\graphicspath{{figures/}}
\allowdisplaybreaks[2]
\newtheorem{theorem}{Theorem}[section]
\newtheorem{proposition}[theorem]{Proposition}
\newtheorem{lemma}[theorem]{Lemma}
\newtheorem{corollary}[theorem]{Corollary}

\newtheorem{question}[theorem]{Question}
\newtheorem{algorithm}[theorem]{Algorithm}
\newtheorem*{informaltheorem}{Main theorem (informal)}
\theoremstyle{definition}
\newtheorem{definition}[theorem]{Definition}
\newtheorem{example}[theorem]{Example}

\theoremstyle{remark}
\newtheorem{remark}[theorem]{Remark}

\DeclareMathOperator{\Tr}{Tr}
\DeclareMathOperator{\spec}{spec}
\DeclareMathOperator{\supp}{supp}

\DeclareMathOperator{\spanop}{span}
\newcommand{\cH}{\mathcal H}
\newcommand{\cD}{\mathcal D}
\newcommand{\cP}{\mathcal P}
\newcommand{\cL}{\mathcal L}
\newcommand{\cF}{\mathfrak F}
\newcommand{\Id}{\mathbf 1}
\newcommand{\RR}{\mathbb R}
\newcommand{\CC}{\mathbb C}
\newcommand{\M}{\mathsf M}
\newcommand{\POVM}{\mathsf{POVM}}

\newcommand{\OFL}{\mathsf{OFL}}
\newcommand{\dd}{\,\mathrm d}
\newcommand{\eps}{\varepsilon}
\newcommand{\ket}[1]{\lvert #1\rangle}
\newcommand{\bra}[1]{\langle #1\rvert}
\newcommand{\proj}[1]{\lvert #1\rangle\!\langle #1\rvert}
\newcolumntype{Y}{>{\raggedright\arraybackslash}X}

\title[Benign projective landscapes for measured quantum divergences]{Benign Projective Landscapes for Measured Quantum Divergences}
\author{Domingos S. P. Salazar}
\address{Unidade de Educa\c{c}\~ao a Dist\^ancia e Tecnologia, Universidade Federal Rural de Pernambuco, 52171-900 Recife, Pernambuco, Brazil}
\date{August 16, 2026}
\subjclass[2020]{81P45, 94A17, 90C26, 47A63}
\keywords{accessible information, adaptive decoding, benign nonconvexity, landscape analysis, measured quantum divergence, measurement order, observational entropy, projective measurement}

\begin{document}
\begin{abstract}
We study nonconvex optimization of measured quantum $f$-divergences over
rank-one projective measurements.  For smoothly operator-Fenchel liftable
generators and faithful states, every projective local maximum and every
second-order stationary point is globally optimal over all POVMs.  The
criterion is blockwise: a critical PVM is optimal exactly when the compressed
states are proportional on each equal-score block; otherwise an explicit
two-vector rotation has positive ascent curvature.  Operator-convex generators
admit a positive atomic curvature resolution, and the quadratic $\chi^2$ case
yields two-sided residual bounds.  For binary accessible information, this
framework proves the known adaptive-capacity equality and shows that every
nonidentical qubit ensemble has exactly two stationary projective measurements,
proving conjectures of Keil and Thai--Dall'Arno.  A rare-prior limit connects
weighted Jensen--Shannon information to relative entropy and yields a finite
counterexample to the proposed equivalence between observational-entropy and
all-ensemble mutual-information orders.  The landscape theorem also covers
measured R\'enyi divergences of finite positive order and measured relative
entropy.
\end{abstract}

\maketitle

\begin{center}
\begin{minipage}{0.94\textwidth}
\small\textbf{AI-use disclosure.}
GPT 5.6 Sol was used to assist with literature retrieval, exact symbolic
arithmetic checks, and preparation of an initial draft. Responsibility for
the mathematical statements and the final manuscript rests with the author.
\par\bigskip
\textbf{Outline.}\par\smallskip
\emph{Foundations:}
\hyperref[sec:introduction]{motivation and main result},
\hyperref[sec:preliminaries]{measured divergences and the operator lift},
\hyperref[sec:lift]{score operators}.
\par
\emph{Core landscape theory:}
\hyperref[thm:benign-score-block]{the benign score-block theorem},
\hyperref[sec:atomic]{positive atomic curvature},
\hyperref[thm:quadratic-robust]{quadratic error bounds},
\hyperref[thm:renyi-landscape]{R\'enyi landscapes}.
\par
\emph{Binary information:}
\hyperref[sec:shor]{projective sufficiency},
\hyperref[thm:binary-landscape]{the accessible-information landscape},
\hyperref[thm:keil]{the qubit classification},
\hyperref[thm:thai-conjectures]{quasi-concavity and bisection}.
\par
\emph{Consequences and outlook:}
\hyperref[thm:adaptive-capacity]{adaptive capacity},
\hyperref[thm:measurement-orders]{measurement orders},
\hyperref[sec:outlook]{limitations and open questions},
\hyperref[app:calculus]{appendices}.
\end{minipage}
\end{center}

\input{sections/01-introduction}
\input{sections/02-preliminaries}
\input{sections/03-score-lift}
\input{sections/04-landscape}
\input{sections/05-atomic}
\input{sections/09-renyi}
\input{sections/03-shor}
\input{sections/06-accessible}
\input{sections/07-qubit}
\input{sections/08-thai}
\input{sections/07-consequences}
\input{sections/10-outlook}
\appendix
\input{appendices/A-calculus}
\input{appendices/B-triangular}
\input{appendices/C-boundary}
\input{appendices/D-literature}

\FloatBarrier
\bibliographystyle{amsplain-links}
\bibliography{references}
\end{document}

%% file: sections/01-introduction.tex
\section{Introduction}\label{sec:introduction}

\begin{figure}[!t]
 \centering
 \includegraphics[width=0.72\textwidth]{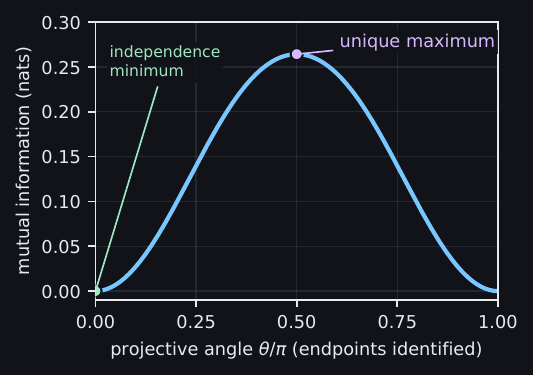}
 \caption{Mutual information for equal-prior faithful qubits with Bloch
 vectors $r_{0,1}=(\pm2\sqrt3/5,0,2/5)$, plotted over the projective circle.
 Antipodal axes represent the same PVM.  The marked independence minimum
 and information-optimal maximum are the only stationary PVMs.}
 \label{fig:qubit-landscape}
\end{figure}

For a binary qubit ensemble, the projective measurements that matter lie on
a circle: the measurement axis runs in the Bloch plane of the two states,
with antipodal axes identified.  Mutual information along this circle is
continuous and nonconvex, and it is smooth away from singular axes.  The
typical picture is nevertheless as
simple as possible---one independence minimum and one information-optimal
maximum, with no other stationary measurement; see Fig.~\ref{fig:qubit-landscape}.
We prove that this picture holds for every nonidentical binary qubit
ensemble, including singular boundary cases.  This proves the
two-stationary-point conjecture posed as Conjecture~2 on p.~77 of Keil's
2009 thesis \cite{KeilThesis2009}.  Keil's 2008 preprint, published in
2024, had already proved projective sufficiency for binary qubit ensembles
and the corresponding local-maximum structure \cite{Keil2024}.  Through
the coordinate introduced by Thai and Dall'Arno, the result also proves
their Conjectures~1 and 2, proposed in their December 2025 preprint
\cite{ThaiDallArno2026}.

Accessible-information optimization has a long history, including general
outcome bounds, binary-state analyses, ensemble-dependent upper and lower
bounds, exact symmetric-source results, and numerical receiver methods
\cite{Davies1978,Levitin1995,Fuchs1996,FuchsCaves1994,
JozsaRobbWootters1994,SasakiBarnettJozsaOsakiHirota1999,
RehacekEnglertKaszlikowski2005}.  These results supply important global and
first-order information; the issue here is the full critical-point geometry
of the projective objective.

The qubit result is the two-dimensional shadow of a general landscape
theorem.  Let $\rho$ and $\sigma$ be density operators and let $f$ be a
strictly convex generator.  A rank-one projective measurement
$P=(P_j)_{j=1}^d$ produces
\begin{equation}\label{eq:intro-probabilities}
 p_j=\Tr(\rho P_j),\qquad q_j=\Tr(\sigma P_j),\qquad t_j=p_j/q_j,
\end{equation}
and the objective $F_f(P)=\sum_jq_jf(t_j)$.  Even when a projective
measurement is known to attain the full POVM optimum, optimization over
orthonormal bases remains nonconvex.  Our result identifies exactly why
this restricted nonconvexity has no spurious local maxima.

The binary lift also revisits two questions outside static landscape
geometry, with different priority status.  Shor's Conjecture~2, posed in
the 2002 preprint and published in 2004, asks whether adaptive
single-system decoding can exceed separate-measurement capacity for a
channel with two mixed outputs \cite{ShorAdaptive2004}.  Xiao's August
2021 public report states and presents an argument for both concavity of
binary accessible information in the prior and the equality
$C_{1,A}=C_{1,1}$ \cite{Xiao2021}.  We give a different
operator-perspective proof and an explicit transcript-potential reduction,
with no priority claim for the equality.  Separately, Teixid\'o-Bonfill,
Schindler, and \v{S}afr\'anek conjectured in their 2023 preprint that
all-ensemble mutual-information order is equivalent to
observational-entropy order \cite{TeixidoSchindlerSafranek2025}.  Combining
the known less-noisy/relative-entropy equivalence with their published
Example~16, our rare-prior argument gives an explicit finite ensemble
witness against that conjectured equivalence.

\Needspace{0.48\textheight}
\begin{informaltheorem}
Let $f\in\cF_{\OFL}$ and let $\rho,\sigma\in\cD_d^\circ$.  For a critical
rank-one PVM $P=(P_j)$, form the score operator and its distinct-score blocks
by
\begin{equation}\label{eq:intro-score}
 \Gamma_P=\sum_j\vartheta(t_j)P_j
 =\sum_{\alpha=1}^m s_\alpha B_\alpha,
 \qquad t_\alpha=\vartheta^{-1}(s_\alpha).
\end{equation}
Then $P$ is globally optimal over all POVMs if and only if
\begin{equation}\label{eq:intro-rigidity}
 B_\alpha(\rho-t_\alpha\sigma)B_\alpha=0
 \qquad\text{for every }\alpha.
\end{equation}
If this condition fails, some $B_\alpha$ contains basis vectors $j\ne k$
for which a phase-adjusted two-level rotation satisfies
\begin{align}
 \frac{\dd^2}{\dd\theta^2}F_f(P(\theta))\Big|_{\theta=0}
 &=4f''(t_\alpha)
 \left|\bra j(\rho-t_\alpha\sigma)\ket k\right|^2\notag\\
 &\quad\times\left(\frac1{q_j}+\frac1{q_k}\right)>0.
 \label{eq:intro-curvature}
\end{align}
Consequently every projective local maximum and every second-order
stationary PVM is globally optimal over all POVMs, and every nonglobal
critical PVM has an explicit positive-curvature pair rotation.
\end{informaltheorem}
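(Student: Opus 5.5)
The plan has three parts: an operator-Fenchel upper bound for all POVMs, an equality analysis that gives the rigidity criterion, and a second-variation computation along pair rotations that gives the curvature formula.

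\textbf{Step 1: the global bound.} I take from the definition of $\cF_{\OFL}$ the lift underlying the score operator. There is a strictly increasing score map $\vartheta$ and a conjugate-type function $g$ such that, for every Hermitian $\Gamma$ with spectrum in the range of $\vartheta$ and every POVM $M$,
\begin{equation*}
 F_f(M)\le \Tr\bigl(\rho\,\Gamma\bigr)-\Tr\bigl(\sigma\, g(\Gamma)\bigr)=:\Phi(\Gamma).
\end{equation*}
Scalar Fenchel duality gives $q f(p/q)\ge p s-q g(s)$ with equality iff $s=\vartheta(p/q)$. Operator convexity of $g$, or an operator Jensen step for $\Tr\sigma\, g(\cdot)$ over compressions, lets a POVM-valued choice of $s_j$ be dominated by a single operator $\Gamma$; for a PVM one simply takes $\Gamma=\sum_j s_j P_j$.

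For a PVM $P$ and $\Gamma=\Gamma_P$, every scalar Fenchel step is tight, so $F_f(P)=\Phi(\Gamma_P)$. Hence $P$ is globally optimal as soon as $\Gamma_P$ maximizes the concave functional $\Phi$. The stationarity condition for $\Phi$ is $\rho=\mathrm{D}g(\Gamma_P)^*[\sigma]$, where $\mathrm{D}g(\Gamma_P)^*$ is the adjoint of the Fréchet derivative of $g$ at $\Gamma_P$. I would expand this through the Daleckii--Krein divided-difference formula in the eigenbasis of $\Gamma_P$:
\begin{itemize}
\item on-diagonal blocks give $B_\alpha\rho B_\alpha=g'(s_\alpha)\,B_\alpha\sigma B_\alpha=t_\alpha B_\alpha\sigma B_\alpha$;
\item off-diagonal blocks give $B_\alpha\rho B_\beta=g^{[1]}(s_\alpha,s_\beta)\,B_\alpha\sigma B_\beta$.
\end{itemize}

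\textbf{Step 2: criticality and the rigidity condition.} Next I identify what criticality of the PVM already supplies. Varying $P$ by $P_j\mapsto e^{\theta X}P_je^{-\theta X}$ with $X$ anti-Hermitian, the first derivative is $\sum_j\vartheta(t_j)\,\Tr\bigl((\rho-t_j\sigma)[X,P_j]\bigr)$. Vanishing for all $X$ should yield exactly the off-diagonal relations $\bra j\rho\ket k=g^{[1]}(s_j,s_k)\bra j\sigma\ket k$ whenever $s_j\ne s_k$, with the identity $g^{[1]}(s,s')=\bigl(\vartheta(t)\!-\!\vartheta(t')\bigr)^{-1}\cdots$ worked out from Fenchel duality. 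Within a block the derivative is automatically zero, since the coefficient $\vartheta(t_j)-\vartheta(t_k)$ vanishes. Thus critical plus \eqref{eq:intro-rigidity} is exactly stationarity of $\Phi$, hence global optimality. Conversely, if \eqref{eq:intro-rigidity} fails, Step 3 gives a strictly ascending direction, so $P$ is not even a local maximum. This also gives the "only if" direction, because the PVM value equals $\Phi$ and any POVM beating it must be an ascent point.

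\textbf{Step 3: the curvature formula.} Suppose $B_\alpha(\rho-t_\alpha\sigma)B_\alpha\ne0$. Since $\Tr$ of each diagonal entry vanishes ($p_j=t_\alpha q_j$), some off-diagonal entry $c=\bra j(\rho-t_\alpha\sigma)\ket k\ne0$ with $j,k$ in $B_\alpha$. Rotate $\ket j,\ket k$ by $\theta$ with phase chosen so that $c$ becomes real. Because $t_j=t_k=t_\alpha$, the first-order changes $\dot p_j=2\,\mathrm{Re}(\cdots)$ satisfy $\dot p_j-t_\alpha\dot q_j=2|c|$ and $\dot p_k-t_\alpha\dot q_k=-2|c|$. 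Second-order terms in $p,q$ contribute $f(t)-tf'(t)$ and $f'(t)$ weights that telescope to zero, since the sum over $j,k$ of second-order changes is a trace-preserving quantity multiplied by the common $t_\alpha$. What remains is $f''(t_\alpha)\sum_{i\in\{j,k\}}(\dot p_i-t_\alpha\dot q_i)^2/q_i$, which is \eqref{eq:intro-curvature}.

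The consequences follow directly. A local maximum or second-order stationary point has no positive curvature direction, so \eqref{eq:intro-rigidity} holds and $P$ is globally optimal.

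\textbf{Main obstacle.} I expect the delicate step to be Step 1: showing that the lifted bound dominates all POVMs, not just PVMs, and that critical-plus-rigid really is stationarity of $\Phi$ for off-diagonal blocks. Both hinge on the precise smooth OFL hypothesis. I would first try to verify the divided-difference identity directly from the PVM criticality equation.
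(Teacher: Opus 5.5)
Your architecture matches the paper's: the exact lift $F_f(P)=\cL(\Gamma_P)$; criticality killing the cross-block part of the lift gradient; rigidity killing its diagonal blocks; concavity giving global optimality; and the perspective Hessian along an in-block pair rotation giving the curvature. Your Step 3 is the paper's pair-rotation lemma. Step 2, however, fails as written, because the first variation is not $\sum_j\vartheta(t_j)\Tr((\rho-t_j\sigma)[X,P_j])$. In your own setting ($\psi=\mathrm{id}$, $g=f^*$, $s_j=f'(t_j)$), differentiating $\sum_jq_jf(p_j/q_j)$ gives $\sum_j[s_j\dot p_j-f^*(s_j)\dot q_j]$. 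Since $f^*(f'(t))=tf'(t)-f(t)$, your expression omits $\sum_jf(t_j)\dot q_j$. Along a $(j,k)$ rotation your formula would impose $(s_j-s_k)\bra j\rho\ket k=(s_jt_j-s_kt_k)\bra j\sigma\ket k$. Stationarity of the lift instead needs the coefficient $g^{[1]}(s_j,s_k)=\frac{s_jt_j-s_kt_k}{s_j-s_k}-\frac{f(t_j)-f(t_k)}{s_j-s_k}$, so the identity you planned to work out from Fenchel duality does not exist. The paper gets the correct relation from an envelope argument. Every PVM satisfies $F_f(P)=\cL(\Gamma_P)$, and each $s_j$ maximizes $s\mapsto p_j\psi(s)-q_jg(s)$, so $\Tr(P_jG_P)=0$. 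The eigenvalue variation therefore drops out, and the first variation along $e^{i\varepsilon K}Pe^{-i\varepsilon K}$ is $i\Tr([\Gamma_P,G_P]K)$. Criticality then gives $(s_\alpha-s_\beta)B_\alpha G_PB_\beta=0$ for $\alpha\ne\beta$.

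Step 1 also misstates the lift, in two ways.
\begin{itemize}
\item The quantifier is reversed. Scalar Fenchel duality gives $\Phi(\Gamma)\le F_f(P_\Gamma)$ for the spectral PVM of $\Gamma$, and operator Jensen gives, for each POVM $M$, some $\Gamma_M$ with $F_f(M)\le\Phi(\Gamma_M)$. An upper bound valid for every $\Gamma$ would make every PVM optimal.
\item More substantively, $f\in\cF_{\OFL}$ does not provide $\Tr(\rho\Gamma)-\Tr(\sigma g(\Gamma))$ with $g$ operator convex. The lift is $\Tr\rho\,\psi(\Gamma)-\Tr\sigma\,g(\Gamma)$, with $\psi$ operator concave and $g=f^*\circ\psi$ operator convex. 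Setting $\psi=\mathrm{id}$ requires $f^*$ itself to be operator convex. That already fails for $f(t)=t\log t$, where $f^*(z)=e^{z-1}$, and this is exactly why the coordinate $\psi$ appears in the definition.
\end{itemize}
With $\psi$ restored, your block relations become $B_\alpha G_PB_\alpha=\psi'(s_\alpha)B_\alpha(\rho-t_\alpha\sigma)B_\alpha$ and $\psi^{[1]}(s_\alpha,s_\beta)B_\alpha\rho B_\beta=g^{[1]}(s_\alpha,s_\beta)B_\alpha\sigma B_\beta$. The ``if'' direction then goes through once POVM domination is taken from the imported Fang--Fawzi--Fawzi theorem, whose operator-Jensen proof needs both operator concavity of $\psi$ and operator convexity of $g$.

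Your ``only if'' argument is sound and differs from the paper's. You use that failed rigidity gives positive curvature along an in-block rotation at a critical point, so $P$ is not even a local maximum. The paper instead notes that global optimality makes $\Gamma_P$ an interior maximizer of the differentiable concave lift, so $G_P=0$, and then invokes $\psi'(s_\alpha)\ne0$. Your route needs neither the lift nor $\psi'\ne0$ for that direction.
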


\noindent The exact statement and proof are given in
Theorem~\ref{thm:benign-score-block}.

\subsection{Lift, block, and atom}

The proof has three layers.  First, the scalar Fenchel optimizer for each
outcome is assembled into the score operator in \eqref{eq:intro-score}.
For the smoothly operator-Fenchel liftable generators introduced below,
Fang, Fawzi, and Fawzi's operator-Jensen theorem embeds the full measured
divergence into a concave optimization over one Hermitian operator
\cite{FangFawziFawzi2026}.

Second, at a critical PVM the equal eigenvalues of $\Gamma_P$ define score
blocks $B_\alpha$.  Inside a block, all outcomes have one likelihood ratio
$t_\alpha$.  We prove that the critical PVM is globally optimal precisely
when \eqref{eq:intro-rigidity} holds.
If this proportionality fails, an off-diagonal element of
$B_\alpha(\rho-t_\alpha\sigma)B_\alpha$ selects two basis vectors and an
explicit pair rotation with positive second variation.

Third, for operator-convex $f$, the scalar curvature is a positive
superposition of the resolvent-atom curvatures
\begin{equation}\label{eq:intro-atom}
 h_\lambda(t)=\frac{(t-1)^2}{(1-\lambda)t+\lambda}.
\end{equation}
The classical operator-convex integral representation
\cite{HansenPedersen1982,BhatiaMatrix1997}, in the resolvent-atom form used
by Salazar for the positive decomposition of Petz quantum $f$-divergences
\cite{SalazarAtomic2026}, permits the escape curvature in the present
projective problem to be resolved atom by atom: every atom detects the same
pair rotation with the same sign.

This theorem turns an existence statement into a local-to-global criterion.
For the quadratic atom $h_1(t)=(t-1)^2$, completing the matrix square makes
the criterion quantitative: Theorem~\ref{thm:quadratic-robust} gives
explicit two-sided bounds on both the global value gap and the distance to
the unique optimal score operator in terms of the commutator, the score-block
residuals, and the minimum score gap.  A separated-score corollary controls
the distance to an optimal PVM.
The binary assertion now commonly called Shor's orthogonal-measurement
conjecture asks whether some optimal von Neumann measurement exists for
every binary ensemble.  Shor's 2000 report records the underlying
Fuchs--Peres numerical evidence and discusses Levitin's broader conjecture,
without formulating this surviving binary statement as a numbered
conjecture \cite{Levitin1995,Shor2000}.  Wang, Wang, and Chen explicitly noted that the
2025 Fang--Fawzi--Fawzi theorem already implies binary projective
sufficiency \cite{WangWangChen2026};
Section~\ref{sec:shor} records the short weighted Jensen--Shannon
specialization.  Theorem~\ref{thm:benign-score-block} adds the critical-point
geometry: any local
search that reaches second-order stationarity has found the global POVM
optimum, and every critical point where it has not comes with a computable
escape rotation.

\begin{table*}[!t]
 \caption{Notation used throughout the paper.}
 \label{tab:notation}
 \centering
 \setlength{\fboxsep}{6pt}
 \fbox{\begin{minipage}{0.96\textwidth}
 \small
 \begin{tabularx}{\linewidth}{@{}lY@{}}
 \toprule
 Symbol & Meaning \\
 \midrule
 $\rho,\sigma$ & Quantum states defining the measured divergence. \\
 $P=(P_j)$ & Ordered rank-one PVM, a point of $U(d)/U(1)^d$. \\
 $p_j,q_j,t_j$ & Outcome probabilities and likelihood ratio in
 \eqref{eq:intro-probabilities}. \\
 $F_f$ & Projective $f$-divergence objective. \\
 $\psi,g=f^*\!\circ\psi,\vartheta$ & Operator coordinate, convex companion,
 and scalar score map. \\
 $\Gamma_P$ & Hermitian score operator associated with $P$. \\
 $\cL_f,G_P$ & Concave matrix lift and its gradient at $\Gamma_P$. \\
 $s_\alpha,B_\alpha,t_\alpha$ & Distinct score, its spectral block, and the
 corresponding likelihood ratio. \\
 $A_\alpha$ & Block residual
 $B_\alpha(\rho-t_\alpha\sigma)B_\alpha$. \\
 $R_P,\Phi_{\mathcal E}$ & Posterior effect and its concave binary-information
 functional. \\
 $\nu_f,w_{a,b}$ & Positive atomic measure and the triangular
 Jensen--Shannon density. \\
 \bottomrule
 \end{tabularx}
 \end{minipage}}
\end{table*}

\subsection{Applications and attribution}

For binary accessible information the score is the posterior probability,
and the operator-convex atomic density is an explicit triangular Green
kernel.  Joint concavity of the posterior program in the prior and effect
gives a different operator-perspective proof of prior-value concavity.  A
transcript potential then gives a proof in the present framework of Shor's
2002/2004 mixed-state adaptive-capacity equality $C_{1,A}=C_{1,1}$
\cite{ShorAdaptive2004}.  Xiao's 2021 public report already states and
presents an argument for both prior concavity and this equality
\cite{Xiao2021}; we make no first-proof claim.  The contribution here is
the joint operator perspective and the explicit transcript-potential
reduction.

The classical less-noisy characterization identifies all-ensemble
mutual-information order with pointwise relative-entropy order
\cite{KornerMarton1977,MakurPolyanskiy2018}.  Teixid\'o-Bonfill, Schindler,
and \v{S}afr\'anek conjectured in 2023 that this order is equivalent to
observational-entropy order, while their Example~16 already supplied a
qubit pair with observational-entropy dominance and failure of the
corresponding relative-entropy order \cite{TeixidoSchindlerSafranek2025}.
The present boundary argument writes a concrete finite rare-prior ensemble
witnessing the mutual-information reversal and proves that the complete
normalized triangular density converges in $L^1$ to the relative-entropy
atomic density.  No priority is claimed for the less-noisy equivalence or
for the published measurement-pair separation.

For qubits, score-block rigidity leaves only the independence minimum and
the unique global maximum.  The same general theorem gives benign
projective landscapes for measured R\'enyi divergences of every finite positive
order and for measured relative entropy.

Priority and mathematical validity are kept separate.  The imported
matrix lift, the positive atomization, and projective sufficiency are prior;
the results established here are the score-block local-to-global theorem, its
explicit Hessian and atomic resolution, the quadratic residual error bound,
and the qubit and R\'enyi landscape consequences.  Appendix~\ref{app:literature} compares these claims with prior
work, including the precise overlap with the Wang--Wang--Chen preprint.

\paragraph{Logical roadmap.}
The formal result graph has one imported root.  The operator-Fenchel
projectivization theorem feeds the exact score lift; its envelope identity
gives the commutator first variation, score-block rigidity, and the explicit
pair rotation that together prove Theorem~\ref{thm:benign-score-block}.
Four branches then separate cleanly.  Positive atomization gives the
curvature resolution and quadratic stability bounds.  Power generators give
the measured R\'enyi and relative-entropy landscapes.  The weighted
Jensen--Shannon generator gives the binary landscape, the complete qubit
classification, and the Thai--Dall'Arno consequences.  Joint prior--effect
concavity and the rare-prior limit give the adaptive and measurement-order
results.  Thus every application points back either to the flagship
critical-point theorem or to a precisely identified feature of the same
posterior lift.

\subsection{Organization}

Section~\ref{sec:preliminaries} fixes notation and states the imported
projectivization theorem.  Sections~\ref{sec:lift}--\ref{sec:landscape}
develop the score calculus, prove the benign score-block theorem, and give
the exact optimality test and a worked three-dimensional saddle.
Section~\ref{sec:atomic} gives the positive atomic resolution and quadratic
stability bounds, followed in Section~\ref{sec:renyi} by the broad measured
R\'enyi and relative-entropy consequences.  Section~\ref{sec:shor} records
the prior binary projective-sufficiency implication, and
Section~\ref{sec:accessible} develops the binary landscape and triangular
kernel.  Sections~\ref{sec:qubit}--\ref{sec:thai} prove the qubit conjectures.
Section~\ref{sec:lift-consequences} gives the posterior/adaptive and
rare-prior measurement-order branches.  Section~\ref{sec:outlook} states
the limitations and open questions.  The appendices collect
matrix calculus, the triangular density, singular qubit analysis, and the
comparison with prior work.

%% file: sections/02-preliminaries.tex
\section{Measured divergences, projective measurements, and operator lifts}\label{sec:preliminaries}

\subsection{States and measurements}

Throughout, $\cH\cong\CC^d$ is finite dimensional and $\M_d^{\mathrm{sa}}$ denotes the real vector space of Hermitian matrices. The density operators are
\begin{equation}
 \begin{aligned}
 \cD_d&=\{\rho\in\M_d^{\mathrm{sa}}:\rho\ge0,\ \Tr\rho=1\},\\
 \cD_d^\circ&=\{\rho\in\cD_d:\rho>0\}.
 \end{aligned}
\end{equation}
A finite POVM is a family $M=(M_y)_{y\in\mathcal Y}$ with $M_y\ge0$ and $\sum_yM_y=\Id$. A rank-one PVM is an ordered family
\begin{equation}
 P=(P_1,\ldots,P_d),\qquad P_j=\proj{e_j},
\end{equation}
for an orthonormal basis $(e_j)_{j=1}^d$. The space of ordered rank-one PVMs is the complete flag manifold
\begin{equation}
 \cP_d\cong U(d)/U(1)^d.
\end{equation}
All objectives below are invariant under permutations of outcomes, so ordering is only a convenient smooth cover.
We use the standard quotient-manifold geometry; background on orthogonality
constraints and optimization on matrix manifolds can be found in
\cite{EdelmanAriasSmith1998,AbsilMahonySepulchre2008,Boumal2023}.

A tangent path can be written
\begin{equation}
 P_j(\eps)=e^{i\eps K}P_je^{-i\eps K},\qquad K=K^*.
\end{equation}
Its derivative is $\dot P_j=i[K,P_j]$. Diagonal entries of $K$ generate only phase rotations and therefore represent the zero tangent direction.

\subsection{Classical and measured \texorpdfstring{$f$}{f}-divergences}

Following the classical divergence framework of Ali--Silvey and Csisz\'ar
\cite{AliSilvey1966,Csiszar1967}, let
$f:(0,\infty)\to\RR$ be convex. We use the convention
\begin{equation}\label{eq:classical-f}
 D_f(p\Vert q)=\sum_y q_yf\!\left(\frac{p_y}{q_y}\right)
\end{equation}
when $q_y>0$, with the usual lower-semicontinuous boundary extension otherwise. We normalize $f(1)=0$. Adding a multiple of $t-1$ does not change \eqref{eq:classical-f} for probability distributions.

For $\rho,\sigma\in\cD_d$, define
\begin{align}
 D_f^{\mathbb M}(\rho\Vert\sigma)
 &=\sup_{M\in\POVM}D_f(p^M_\rho\Vert p^M_\sigma),\label{eq:measured-f}\\
 D_f^{\mathrm{proj}}(\rho\Vert\sigma)
 &=\sup_{P\in\cP_d}D_f(p^P_\rho\Vert p^P_\sigma).\label{eq:projective-f}
\end{align}
For a fixed $P\in\cP_d$ and faithful states, put
\begin{equation}\label{eq:pqt}
 p_j=\Tr(\rho P_j),\qquad q_j=\Tr(\sigma P_j),\qquad t_j=\frac{p_j}{q_j},
\end{equation}
and
\begin{equation}\label{eq:Ff}
 F_f(P)=\sum_{j=1}^dq_jf(t_j).
\end{equation}

\begin{lemma}[Rank-one refinement]\label{lem:rank-one-refinement}
For convex $f$, refining a projective measurement cannot decrease the classical $f$-divergence. Consequently the supremum over all finite PVMs equals the supremum over rank-one PVMs.
\end{lemma}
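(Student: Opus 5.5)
The plan is to reduce the lemma to the data-processing (grouping) inequality for classical $f$-divergences, which follows from Jensen's inequality.

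\emph{Setting up the refinement.} Let $Q=(Q_y)_{y\in\mathcal Y}$ be a finite PVM. Pick an orthonormal basis of each range $Q_y\cH$; together these form a basis $(e_j)_{j=1}^d$ of $\cH$. This gives a rank-one PVM $P$ and a surjection $\pi:\{1,\dots,d\}\to\mathcal Y$ with
\[
Q_y=\sum_{j\in\pi^{-1}(y)}P_j .
\]
Consequently
\[
p^Q_y=\sum_{j\in\pi^{-1}(y)}p_j,\qquad q^Q_y=\sum_{j\in\pi^{-1}(y)}q_j ,
\]
so the outcome distribution of $Q$ is obtained from that of $P$ by merging outcomes along $\pi$.

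\emph{Comparing the divergences.} Fix $y$ with $q^Q_y>0$ and write $\lambda_j=q_j/q^Q_y$ for $j\in\pi^{-1}(y)$ with $q_j>0$. Convexity of $f$ gives
\[
q^Q_y\,f\!\left(\frac{p^Q_y}{q^Q_y}\right)=q^Q_y\,f\Bigl(\sum_j\lambda_j t_j\Bigr)\le\sum_j q_j f(t_j),
\]
provided all $p$-mass in the block lies where $q_j>0$. Summing over $y$ yields $D_f(p^Q\Vert q^Q)\le D_f(p^P\Vert q^P)$. The case $\rho,\sigma$ faithful, where every $q_j>0$, is immediate from this. Equality of the two suprema then follows at once: every finite PVM is dominated by some rank-one refinement, and rank-one PVMs are themselves finite PVMs.

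\emph{Main obstacle.} The only delicate step is the boundary convention when some $q_j=0$. Here I would use the perspective extension
\[
0\cdot f(p/0)=p\,f'(\infty),\qquad f'(\infty)=\lim_{t\to\infty}f(t)/t\in(-\infty,\infty].
\]
This extension is the lower-semicontinuous, positively homogeneous, jointly convex and sublinear function $(p,q)\mapsto q f(p/q)$. Sublinearity gives
\[
\Bigl(\sum_j p_j\Bigr)f\!\left(\frac{\sum_j p_j}{\sum_j q_j}\right)\cdot\frac{\sum_j q_j}{\sum_j p_j}\le\sum_j q_j f(p_j/q_j)
\]
blockwise, in the sense that $\phi\bigl(\sum_j p_j,\sum_j q_j\bigr)\le\sum_j\phi(p_j,q_j)$. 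This subadditivity is exactly the grouping inequality and covers every case uniformly. I would therefore phrase the proof directly through subadditivity of the perspective $\phi$, which makes the faithful case a special instance.
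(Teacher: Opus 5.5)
Your proposal is correct and follows the paper's own argument. You write the coarse PVM as a merging of a rank-one refinement and apply Jensen's inequality (equivalently, classical data processing) within each coarse cell. Your explicit treatment of outcomes with $q_j=0$, via subadditivity of the closed perspective $\phi(p,q)=qf(p/q)$, supplies the boundary convention that the paper leaves implicit in its appeal to the data-processing inequality.
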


\begin{proof}
A coarse outcome is obtained from its refined outcomes by a stochastic map. The classical data-processing inequality for $f$-divergences, equivalently the Jensen argument applied within each coarse cell, gives the claim.
\end{proof}

\subsection{Fenchel duality and matrix functional calculus}

The Fenchel conjugate of the chosen global extension of $f$ is
\begin{equation}
 f^*(z)=\sup_{t\in\RR}\{tz-f(t)\}.
\end{equation}
We use standard finite-dimensional convex-duality conventions
\cite{Rockafellar1970}.
When $f\in C^2(0,\infty)$ and $f''>0$, the map $f'$ is strictly increasing, and on its range
\begin{equation}\label{eq:fenchel-equality}
 f(t)=t f'(t)-f^*(f'(t)),\qquad (f^*)'(f'(t))=t.
\end{equation}

For a $C^1$ scalar function $h$ on an interval and a Hermitian matrix $A$ with spectrum in that interval, $Dh_A$ denotes the Fr\'echet derivative of the functional calculus. If
\begin{equation}
 A=\sum_\alpha a_\alpha E_\alpha
\end{equation}
uses the distinct eigenvalues, then
\begin{equation}\label{eq:divided-difference}
 Dh_A(X)=\sum_{\alpha,\beta}h^{[1]}(a_\alpha,a_\beta)E_\alpha X E_\beta,
\end{equation}
where
\begin{equation}
 h^{[1]}(x,y)=
 \begin{cases}
 \dfrac{h(x)-h(y)}{x-y},&x\ne y,\\[1ex]
 h'(x),&x=y.
 \end{cases}
\end{equation}
The map $Dh_A$ is self-adjoint for the Hilbert--Schmidt inner product. Two consequences used repeatedly are
\begin{align}
 E_\alpha Dh_A(X)E_\alpha&=h'(a_\alpha)E_\alpha X E_\alpha,\label{eq:block-derivative}\\
 [A,Dh_A(X)]&=[h(A),X].\label{eq:commutator-derivative}
\end{align}
They are immediate from \eqref{eq:divided-difference}; a proof is recalled in Appendix~\ref{app:calculus}.
For the matrix functional calculus and operator-convexity background, see
\cite{BhatiaMatrix1997,HansenPedersen1982,Effros2009}.

\subsection{Operator-Fenchel liftability}

\begin{definition}[Operator-Fenchel liftable generator]\label{def:ofl}
A proper lower-semicontinuous convex function
$f:\RR\to\RR\cup\{+\infty\}$ with $(0,\infty)\subset\operatorname{dom}f$
is called \emph{smoothly operator-Fenchel liftable}, abbreviated
$f\in\cF_{\OFL}$, if the following hold.
\begin{enumerate}
\item $f(1)=0$, $f\in C^2(0,\infty)$, and $f''(t)>0$ for every $t>0$.
\item The interior $I_f$ of $\operatorname{dom}f^*$ is the range of $f'$.
\item There are an open interval $J$ and a $C^2$ diffeomorphism
\begin{equation}
 \psi:J\longrightarrow I_f
\end{equation}
with $\psi'(s)\ne0$ for all $s\in J$, such that $\psi$ is operator concave and
\begin{equation}
 g:=f^*\circ\psi
\end{equation}
is operator convex on $J$.
\end{enumerate}
The associated \emph{score map} is
\begin{equation}\label{eq:score-map}
 \vartheta(t)=\psi^{-1}(f'(t)),\qquad t>0.
\end{equation}
\end{definition}

The hypotheses are a smooth finite-dimensional interior specialization of
the sufficient condition in \cite[Theorem 2]{FangFawziFawzi2026}.\footnote{The
printed theorem of Fang--Fawzi--Fawzi assumes a one-to-one operator-concave
map $\psi:J\to\operatorname{dom}f^*$ for a proper lower-semicontinuous
convex $f$, with $f^*\!\circ\psi$ operator convex.  Definition~\ref{def:ofl}
adds strict $C^2$ regularity and a diffeomorphic parametrization of
$I_f=\operatorname{int}\operatorname{dom}f^*$; it is therefore not a
verbatim restatement.  For faithful finite-dimensional pairs, all outcome
likelihood ratios lie in a compact subinterval of $(0,\infty)$, so the
operator-Jensen proof is confined to this smooth interior.}
Strict convexity of $f$ and injectivity of $\psi$ imply that $\vartheta$ is
injective.

For $\Gamma\in\M_d^{\mathrm{sa}}$ with $\spec\Gamma\subset J$, define
\begin{equation}\label{eq:lift-functional}
 \cL_f^{\rho,\sigma}(\Gamma)
 =\Tr\rho\,\psi(\Gamma)-\Tr\sigma\,g(\Gamma).
\end{equation}
Operator concavity of $\psi$ and operator convexity of $g$ imply that $\cL_f^{\rho,\sigma}$ is concave on the convex spectral domain
\begin{equation}
 \Omega_J=\{\Gamma\in\M_d^{\mathrm{sa}}:\spec\Gamma\subset J\}.
\end{equation}
\begin{theorem}[Imported operator-Fenchel projectivization theorem]\label{thm:fff}
Let $f$ satisfy Definition~\ref{def:ofl}.  Then, for faithful
finite-dimensional states $\rho,\sigma$,
\begin{equation}\label{eq:fff-equality}
 D_f^{\mathbb M}(\rho\Vert\sigma)
 =D_f^{\mathrm{proj}}(\rho\Vert\sigma)
 =\sup_{\Gamma\in\Omega_J}\cL_f^{\rho,\sigma}(\Gamma).
\end{equation}
\end{theorem}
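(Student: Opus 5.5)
The plan is to prove the chain
\[
\sup_{\Gamma\in\Omega_J}\cL_f^{\rho,\sigma}(\Gamma)\le D_f^{\mathrm{proj}}(\rho\Vert\sigma)\le D_f^{\mathbb M}(\rho\Vert\sigma)\le \sup_{\Gamma\in\Omega_J}\cL_f^{\rho,\sigma}(\Gamma).
\]
The middle inequality is immediate, since every PVM is a POVM (and Lemma~\ref{lem:rank-one-refinement} lets us restrict to rank-one PVMs). The remaining two inequalities carry the content: a lower bound obtained by diagonalizing $\Gamma$, and an upper bound obtained from the operator Jensen inequality.

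\emph{Lower bound.} Fix $\Gamma\in\Omega_J$ and diagonalize it as $\Gamma=\sum_j\gamma_jP_j$ in a rank-one PVM. Because $\Gamma$ commutes with every $P_j$, the functional calculus gives
\[
\cL_f(\Gamma)=\sum_j\bigl(p_j\psi(\gamma_j)-q_jg(\gamma_j)\bigr)=\sum_j q_j\bigl(t_j z_j-f^*(z_j)\bigr),\qquad z_j=\psi(\gamma_j)\in I_f .
\]
The Fenchel--Young inequality $tz-f^*(z)\le f(t)$ bounds each summand by $q_jf(t_j)$. Hence $\cL_f(\Gamma)\le F_f(P)\le D_f^{\mathrm{proj}}$.

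\emph{Upper bound.} Fix a finite POVM $M$ with outcome probabilities $p_y,q_y>0$; faithfulness guarantees positivity after discarding zero effects. Each $t_y=p_y/q_y$ lies in $(0,\infty)$, so we may set $z_y=f'(t_y)\in I_f$ and $s_y=\psi^{-1}(z_y)\in J$. Equality in Fenchel--Young then gives
\[
D_f(p\Vert q)=\sum_y\bigl(p_y\psi(s_y)-q_yg(s_y)\bigr).
\]
Next define $\Gamma=\sum_y s_yM_y$. This is Hermitian, and its spectrum lies in $J$ because it is a convex combination: for a unit vector $v$, the weights $\langle v,M_yv\rangle$ sum to one. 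Now apply the operator Jensen (Hansen--Pedersen) inequality to the unital channel $\Phi(X)=\sum_y\langle\text{diag}\rangle$, i.e.\ via the Naimark/Stinespring contraction $V^*(\oplus_y s_y)V$ with $V^*V=\Id$. This yields
\[
\psi(\Gamma)\ge\sum_y\psi(s_y)M_y\quad\text{($\psi$ operator concave)},\qquad g(\Gamma)\le\sum_y g(s_y)M_y\quad\text{($g$ operator convex)}.
\]
Taking traces against $\rho\ge0$ and $\sigma\ge0$ gives $\cL_f(\Gamma)\ge\sum_y\bigl(p_y\psi(s_y)-q_yg(s_y)\bigr)=D_f(p\Vert q)$. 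Taking the supremum over $M$ closes the chain.

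The main obstacle is the domain bookkeeping for the operator Jensen inequality. The Hansen--Pedersen form $h(V^*XV)\le V^*h(X)V$ for isometries requires $h$ to be operator convex on $J$ with no extra normalization such as $h(0)\le0$. This is the unital, contractive-isometry version, which holds for general operator convex functions on an interval, and I would invoke it as such. I also need $s_y$ to lie in the open interval $J$ rather than at an endpoint; this follows from item~(2) of Definition~\ref{def:ofl}, since $f'(t_y)$ lies in the range of $f'$, which equals $I_f=\psi(J)$. Finally, the edge case of outcomes with $q_y=0$ cannot occur: faithfulness of $\sigma$ forces $q_y>0$ whenever $M_y\ne0$, so no boundary extension of $f$ is needed.
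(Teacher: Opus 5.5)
Your argument is correct and is essentially the mechanism the paper imports from Fang--Fawzi--Fawzi; the paper itself gives only a source-and-scope sketch of it. Diagonalization plus scalar Fenchel--Young bounds the lift by $D_f^{\mathrm{proj}}$, operator Jensen applied to $\Gamma=\sum_y s_yM_y$ bounds every POVM value by the lift, and faithfulness keeps all $s_y$ in $J$. The one thing to tidy is the garbled channel formula: take the isometry $Vv=\bigoplus_y M_y^{1/2}v$, so that $V^*V=\Id$ and $V^*\bigl(\bigoplus_y s_y\Id\bigr)V=\Gamma$, after which the unital (Davis--Choi/Hansen--Pedersen) Jensen inequality applies on $J$ with no normalization at $0$.
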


\begin{proof}[Source and scope]
This is the faithful finite-dimensional smooth-interior specialization of
\cite[Theorem~2]{FangFawziFawzi2026}.  Fang, Fawzi, and Fawzi identify their
projective variational formula with \cite[Theorem~5.7]{HiaiBook2021} and
describe their projectivization criterion as similar to
\cite[Theorem~5.8]{HiaiBook2021}; we use their finite-dimensional
operator-Jensen formulation.  Spectral decomposition and scalar Fenchel
duality give the projective variational equality, and operator Jensen
compares arbitrary POVMs.  The measured R\'enyi and relative-entropy
formulas used in the principal examples were already established by Berta,
Fawzi, and Tomamichel \cite{BertaFawziTomamichel2017}.  The broader
measured-divergence framework and its relation to other quantum
$f$-divergences are developed in
\cite{Petz1986,HiaiMosonyi2017,HiaiBook2021}.  We use all of these
variational ingredients as prior results.  Boundary-state applications
below supply their own regularization argument; the smooth statement alone
does not cover the boundary.
\end{proof}

\begin{remark}[Boundary conventions]
The landscape theorem in Sections~\ref{sec:lift}--\ref{sec:landscape} is stated for faithful states so that every $p_j,q_j$ is positive and the projective objective is smooth on the whole flag manifold. The local proof only needs positivity of the probabilities at the measurement under consideration. We exploit that observation in the qubit boundary analysis of Section~\ref{sec:qubit} and Appendix~\ref{app:boundary}.
\end{remark}

%% file: sections/03-score-lift.tex
\section{Score operators and exact restriction of the concave lift}\label{sec:lift}

Fix $f\in\cF_{\OFL}$ and faithful states $\rho,\sigma\in\cD_d^\circ$. For $P\in\cP_d$, let $(p_j,q_j,t_j)$ be as in \eqref{eq:pqt}.

\subsection{The score operator}

\begin{definition}[Score operator]\label{def:score-operator}
The score operator generated by $P$ is
\begin{equation}\label{eq:score-operator}
 \Gamma_P=\sum_{j=1}^d s_jP_j,
 \qquad s_j=\vartheta(t_j).
\end{equation}
\end{definition}

\begin{lemma}[Exact score lift]\label{lem:exact-score-lift}
For every rank-one PVM $P$,
\begin{equation}\label{eq:exact-lift}
 F_f(P)=\cL_f^{\rho,\sigma}(\Gamma_P).
\end{equation}
Moreover, $s_j$ is the unique maximizer of the scalar concave function
\begin{equation}\label{eq:scalar-lift}
 s\longmapsto p_j\psi(s)-q_jg(s).
\end{equation}
\end{lemma}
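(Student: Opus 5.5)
The plan is to reduce both claims to a scalar computation on each outcome, using that $\Gamma_P$ commutes with every $P_j$. Since $\Gamma_P=\sum_j s_jP_j$ is diagonal in the basis $(e_j)$, functional calculus gives
\[
\psi(\Gamma_P)=\sum_j\psi(s_j)P_j,\qquad g(\Gamma_P)=\sum_j g(s_j)P_j.
\]
Substituting into \eqref{eq:lift-functional} yields
\[
\cL_f^{\rho,\sigma}(\Gamma_P)=\sum_j\bigl(p_j\psi(s_j)-q_jg(s_j)\bigr).
\]
So \eqref{eq:exact-lift} follows once each summand equals $q_jf(t_j)$. Here we need $s_j\in J$, which holds because $t_j>0$ by faithfulness, $f'(t_j)\in I_f$ by Definition~\ref{def:ofl}(2), and $\psi^{-1}$ maps $I_f$ onto $J$.

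For the summand identity, note $\psi(s_j)=f'(t_j)$ by \eqref{eq:score-map}. Then $g(s_j)=f^*(f'(t_j))$, and the Fenchel equality \eqref{eq:fenchel-equality} gives
\[
p_jf'(t_j)-q_jf^*(f'(t_j))=q_j\bigl(t_jf'(t_j)-f^*(f'(t_j))\bigr)=q_jf(t_j).
\]
This proves the first claim.

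For the maximizer claim, the function $\phi_j(s)=p_j\psi(s)-q_jg(s)$ is concave on $J$, since operator concave (convex) functions are in particular scalar concave (convex) and $p_j,q_j>0$. Its derivative is
\[
\phi_j'(s)=\psi'(s)\bigl(p_j-q_j(f^*)'(\psi(s))\bigr).
\]
This uses the chain rule, which needs $f^*$ differentiable on $I_f$. That holds because $f$ is strictly convex and $C^2$, so $(f^*)'=(f')^{-1}$ on the range of $f'$, as in \eqref{eq:fenchel-equality}. Since $\psi'\neq0$, a critical point satisfies $(f^*)'(\psi(s))=t_j$. Because $(f^*)'$ is the inverse of $f'$, this is equivalent to $\psi(s)=f'(t_j)$, that is, $s=s_j$, since $\psi$ is injective. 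Thus $s_j$ is the unique critical point of a concave differentiable function on an open interval, hence the unique maximizer.

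The main obstacle is this last uniqueness step. A concave function can be flat, and its maximizers form an interval, so concavity alone does not give a unique maximizer. The argument avoids this by showing the critical-point set is a single point, using that $(f^*)'$ is strictly increasing on $I_f$ (it is the inverse of the strictly increasing $f'$) and that $\psi$ is injective. One might also check sign consistency, since $\psi$ could be decreasing. The critical-point argument does not depend on the sign of $\psi'$, so this causes no difficulty.
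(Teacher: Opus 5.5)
Your proposal is correct and follows the paper's proof. Functional calculus in the $P$-basis reduces the lift to scalar terms, the Fenchel equality with $\psi(s_j)=f'(t_j)$ identifies each term with $q_jf(t_j)$, and concavity plus the stationarity equation $p_j\psi'(s)-q_jg'(s)=0$ gives the maximizer. Your explicit argument that $s_j$ is the only critical point (using $\psi'\ne0$, injectivity of $(f^*)'=(f')^{-1}$, and injectivity of $\psi$) spells out what the paper compresses into its appeal to strict convexity of $f$ and injectivity of $\psi$.
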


\begin{proof}
By construction, $\psi(s_j)=f'(t_j)$. Fenchel equality \eqref{eq:fenchel-equality} gives
\begin{equation}
 q_jf(t_j)
 =p_j\psi(s_j)-q_jf^*(\psi(s_j))
 =p_j\psi(s_j)-q_jg(s_j).
\end{equation}
Summing and using functional calculus in the $P$-basis yields \eqref{eq:exact-lift}. Differentiating \eqref{eq:scalar-lift} gives
\begin{equation}\label{eq:scalar-stationarity}
 p_j\psi'(s_j)-q_jg'(s_j)=0,
\end{equation}
which is equivalent to $(f^*)'(f'(t_j))=t_j$. The scalar objective in \eqref{eq:scalar-lift} is concave by the scalar consequences of operator concavity and convexity. Strict convexity of $f$ and injectivity of $\psi$ identify the unique Fenchel optimizer.
\end{proof}

The equality \eqref{eq:exact-lift} is more than an equality of optimal values. It embeds the value of \emph{every} projective measurement into the global concave matrix problem. The embedded point is not arbitrary: its eigenvectors are the measurement basis, while its eigenvalues are the outcome-wise Fenchel optimizers.

\subsection{The matrix gradient}

Write
\begin{equation}\label{eq:G-def}
 G_P:=\nabla\cL_f^{\rho,\sigma}(\Gamma_P)
 =D\psi_{\Gamma_P}(\rho)-Dg_{\Gamma_P}(\sigma).
\end{equation}
The equality follows from self-adjointness of the Fr\'echet derivative:
\begin{equation}
 \frac{\dd}{\dd\eps}\Tr\rho\,h(\Gamma+\eps H)\Big|_{\eps=0}
 =\Tr\bigl(Dh_\Gamma(\rho)H\bigr).
\end{equation}
\begin{lemma}[Vanishing score-direction derivative]\label{lem:score-direction}
For every outcome $j$,
\begin{equation}\label{eq:diag-gradient-zero}
 \Tr(P_jG_P)=0.
\end{equation}
\end{lemma}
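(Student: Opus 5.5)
Since $P_j$ is a spectral projection of $\Gamma_P$, the idea is to pull $P_j$ through the Fréchet derivatives using the block identity \eqref{eq:block-derivative}. This reduces $\Tr(P_jG_P)$ to the scalar stationarity condition \eqref{eq:scalar-stationarity} of Lemma~\ref{lem:exact-score-lift}.

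First I would write $\Gamma_P=\sum_\alpha s_\alpha B_\alpha$ using the distinct scores $s_\alpha$. Fix $j$ and let $\alpha$ be the block with $P_j\le B_\alpha$, so that $P_jB_\alpha=B_\alpha P_j=P_j$. For any $C^1$ function $h$ (we use $h=\psi$ and $h=g$), apply \eqref{eq:block-derivative}:
\begin{equation*}
 \Tr\bigl(P_j\,Dh_{\Gamma_P}(X)\bigr)
 =\Tr\bigl(P_jB_\alpha\,Dh_{\Gamma_P}(X)\,B_\alpha\bigr)
 =h'(s_\alpha)\Tr\bigl(P_jB_\alpha XB_\alpha\bigr)
 =h'(s_j)\Tr(P_jX).
\end{equation*}
The same conclusion also follows from self-adjointness of $Dh_{\Gamma_P}$ together with $Dh_{\Gamma_P}(P_j)=h'(s_j)P_j$; this holds because $P_j$ commutes with $\Gamma_P$, so every divided-difference term with $\alpha\ne\beta$ vanishes.

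Next, substitute into \eqref{eq:G-def}:
\begin{equation*}
 \Tr(P_jG_P)=\psi'(s_j)\Tr(\rho P_j)-g'(s_j)\Tr(\sigma P_j)
 =p_j\psi'(s_j)-q_jg'(s_j).
\end{equation*}
This vanishes by \eqref{eq:scalar-stationarity}, which holds because $s_j=\vartheta(t_j)$ is the interior maximizer of $s\mapsto p_j\psi(s)-q_jg(s)$ on the open interval $J$.

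Only one point needs care: \eqref{eq:block-derivative} must be applied with the full spectral projection $B_\alpha$, not with $P_j$ alone. With repeated scores, $P_j$ is a proper subprojection of $B_\alpha$, and this is handled by inserting $B_\alpha$ on both sides as above. The rest is the direct substitution.
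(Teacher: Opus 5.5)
Your proof is correct and follows the paper's argument. The paper likewise uses that $P_j$ commutes with $\Gamma_P$ to apply the block identity \eqref{eq:block-derivative}, obtains $\Tr(P_jG_P)=p_j\psi'(s_j)-q_jg'(s_j)$, and concludes with \eqref{eq:scalar-stationarity}; sandwiching by the full spectral block $B_\alpha$ just makes explicit the repeated-score step that the paper's one-line proof leaves implicit.
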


\begin{proof}
Since $P_j$ commutes with $\Gamma_P$, \eqref{eq:block-derivative} gives
\begin{equation}
 \Tr(P_jG_P)
 =p_j\psi'(s_j)-q_jg'(s_j),
\end{equation}
which vanishes by \eqref{eq:scalar-stationarity}.
\end{proof}

Lemma~\ref{lem:score-direction} is an envelope theorem in matrix form. It permits the score eigenvalues to depend on the measurement without contributing to the first variation. Only the rotation of the score eigenspaces survives.

\subsection{A direct commutator identity}

The same conclusion can be seen without differentiating the score map explicitly. Let
\begin{equation}
 z_j=f'(t_j)=\psi(s_j).
\end{equation}
The derivative of the perspective $qf(p/q)$ is
\begin{equation}
 \dd\bigl(qf(p/q)\bigr)
 =z_j\,\dd p-f^*(z_j)\,\dd q.
\end{equation}
For a unitary basis variation this gives
\begin{equation}\label{eq:direct-first}
 \dot F_f(P)
 =i\Tr\!\left(
 [\psi(\Gamma_P),\rho]K+[\sigma,g(\Gamma_P)]K
 \right).
\end{equation}
Using \eqref{eq:commutator-derivative},
\begin{equation}
 [\Gamma_P,D\psi_{\Gamma_P}(\rho)]=[\psi(\Gamma_P),\rho],
\end{equation}
and similarly for $g$. Thus \eqref{eq:direct-first} is identical to the gradient formula proved in the next section. This identity will be useful when checking signs and conventions.

\begin{remark}[Choice of operator coordinate]
The score operator depends on the coordinate $\psi$, whereas the projective objective does not. Different admissible coordinates reparameterize the score eigenvalues but preserve their equality pattern, because each score map $\vartheta$ is injective. The score blocks, block likelihood ratios, and the two-level escape curvature are therefore coordinate independent.
\end{remark}

%% file: sections/04-landscape.tex
\section{Critical-point geometry of the projective problem}\label{sec:landscape}

For maximization, a second-order stationary PVM is a critical PVM whose
Riemannian Hessian is negative semidefinite on the flag-manifold tangent
space.  The main theorem can now be stated exactly, before its proof is
assembled from the first-variation, rigidity, and pair-rotation results.

\begin{theorem}[Benign score-block landscape]\label{thm:benign-score-block}
Let $f\in\cF_{\OFL}$ and let $\rho,\sigma\in\cD_d^\circ$.  For a critical
rank-one PVM $P$, write
\begin{equation}\label{eq:flagship-score-blocks}
 \Gamma_P=\sum_{\alpha=1}^m s_\alpha B_\alpha,
 \qquad t_\alpha=\vartheta^{-1}(s_\alpha),
\end{equation}
with distinct scores $s_\alpha$.  Then $P$ is globally optimal over all
POVMs if and only if
\begin{equation}\label{eq:flagship-rigidity}
 B_\alpha(\rho-t_\alpha\sigma)B_\alpha=0
 \qquad\text{for every }\alpha.
\end{equation}
If this condition fails, some $B_\alpha$ contains basis vectors $j\ne k$
for which a phase-adjusted two-level rotation satisfies
\begin{align}
 \frac{\dd^2}{\dd\theta^2}F_f(P(\theta))\Big|_{\theta=0}
 &=4f''(t_\alpha)
 \left|\bra j(\rho-t_\alpha\sigma)\ket k\right|^2\notag\\
 &\quad\times\left(\frac1{q_j}+\frac1{q_k}\right)>0.
 \label{eq:flagship-curvature}
\end{align}
Consequently every projective local maximum and every second-order
stationary PVM is globally optimal over all POVMs, and every nonglobal
critical PVM has an explicit positive-curvature pair rotation.
\end{theorem}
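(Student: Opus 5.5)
The plan is to work entirely through the concave lift of Lemma~\ref{lem:exact-score-lift}. First I would compute the first variation of $F_f$ along $P_j(\eps)=e^{i\eps K}P_je^{-i\eps K}$. By the exact lift and the envelope identity of Lemma~\ref{lem:score-direction}, the variation of the score eigenvalues does not contribute. Only the rotation $\dot\Gamma_P=i[K,\Gamma_P]$ survives, giving
\begin{equation}
 \dot F_f(P)=\Tr\bigl(G_P\, i[K,\Gamma_P]\bigr)=i\Tr\bigl([\Gamma_P,G_P]K\bigr),
\end{equation}
which agrees with \eqref{eq:direct-first}. Hence $P$ is critical iff $[\Gamma_P,G_P]=0$, i.e.\ iff $G_P$ is block diagonal with respect to the score blocks $B_\alpha$.

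For the optimality criterion I would use concavity of $\cL_f^{\rho,\sigma}$ on $\Omega_J$ together with Theorem~\ref{thm:fff}. Since $F_f(P)=\cL_f(\Gamma_P)$ and the lift supremum equals $D_f^{\mathbb M}$, the PVM $P$ is globally optimal over POVMs iff $\Gamma_P$ maximizes a concave function on an open convex set, that is, iff $G_P=0$. At a critical point $G_P=\sum_\alpha B_\alpha G_PB_\alpha$. By \eqref{eq:block-derivative}, applied to $\psi$ and $g$ on the block where $\Gamma_P=s_\alpha$,
\begin{equation}
 B_\alpha G_PB_\alpha=\psi'(s_\alpha)B_\alpha\rho B_\alpha-g'(s_\alpha)B_\alpha\sigma B_\alpha.
\end{equation}
The scalar stationarity \eqref{eq:scalar-stationarity} gives $g'(s_\alpha)=t_\alpha\psi'(s_\alpha)$, and $\psi'\neq0$. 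So $G_P=0$ iff $B_\alpha(\rho-t_\alpha\sigma)B_\alpha=0$ for every $\alpha$, which is \eqref{eq:flagship-rigidity}.

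If \eqref{eq:flagship-rigidity} fails, the diagonal entries of $A_\alpha=B_\alpha(\rho-t_\alpha\sigma)B_\alpha$ in the basis vanish, since $p_j-t_\alpha q_j=0$ for $j\in\alpha$. Hence some off-diagonal entry $c=\bra j(\rho-t_\alpha\sigma)\ket k$ with $j\neq k$ in block $\alpha$ is nonzero. I would then compute $F_f$ directly along the two-level rotation $e_j(\theta)=\cos\theta\, e_j+\sin\theta\, e^{i\phi}e_k$, $e_k(\theta)=-\sin\theta\, e^{-i\phi}e_j+\cos\theta\, e_k$, choosing $\phi$ so that $e^{i\phi}\bra j\cdot\ket k$-terms align with $|c|$. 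This is a purely classical second-order Taylor expansion of $q_jf(p_j/q_j)+q_kf(p_k/q_k)$ in which the other outcomes are unchanged. The relevant quantities are
\begin{equation}
 \dot p_j=-\dot p_k=2\operatorname{Re}(e^{i\phi}\rho_{jk}),\qquad \ddot p_j=2(\rho_{kk}-\rho_{jj}),
\end{equation}
and similarly for $q$.

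Since $t_j=t_k=t_\alpha$, the first-order parts $z(\dot p-\dots)$ reduce to a combination of $f'(t_\alpha)$ and $f^*$ terms. The contributions from $\ddot p$ and $\ddot q$ cancel between $j$ and $k$, because $f'(t_j)=f'(t_k)$ and the perspective gradients coincide. What remains is the perspective Hessian $f''(t)\,(\dot p-t\dot q)^2/q$ summed over $j$ and $k$. With $\dot p_j-t_\alpha\dot q_j=2|c|$, this yields $4f''(t_\alpha)|c|^2(1/q_j+1/q_k)$; the exact constant should be checked against the expansion. Positivity then follows from $f''>0$.

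The final consequences are then immediate. A local maximum or second-order stationary point is critical with negative semidefinite Hessian, so the positive-curvature rotation is excluded and \eqref{eq:flagship-rigidity} must hold. The main obstacle is the bookkeeping in the pair-rotation computation: verifying that the second-order terms from the diagonal entries cancel exactly because of the equal likelihood ratio, and fixing the phase convention and normalization constant.
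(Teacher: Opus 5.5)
Your proposal is correct and follows essentially the same route as the paper. You use the envelope-based commutator first variation, then combine the block identity $B_\alpha G_PB_\alpha=\psi'(s_\alpha)A_\alpha$ with concavity of the lift and Theorem~\ref{thm:fff} to get that global optimality holds $\iff G_P=0\iff$ every $A_\alpha=0$; finally, the pair-rotation perspective-Hessian computation has its acceleration terms cancel because both outcomes share the ratio $t_\alpha$. The constant you flagged is exact: with $\phi=-\arg c$ you get $\dot p_j-t_\alpha\dot q_j=-(\dot p_k-t_\alpha\dot q_k)=2|c|$, which gives precisely \eqref{eq:flagship-curvature}.
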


\subsection{First variation}

\begin{theorem}[Commutator form of the first variation]\label{thm:first-variation}
Let $P(\eps)=e^{i\eps K}Pe^{-i\eps K}$ for a Hermitian generator $K$. Then
\begin{equation}\label{eq:first-variation}
 \frac{\dd}{\dd\eps}F_f(P(\eps))\Big|_{\eps=0}
 =i\Tr\!\left([\Gamma_P,G_P]K\right).
\end{equation}
Consequently $P$ is a critical point of $F_f$ on $\cP_d$ if and only if
\begin{equation}\label{eq:stationarity-commutator}
 [\Gamma_P,G_P]=0.
\end{equation}
\end{theorem}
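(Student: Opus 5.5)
The plan is an envelope argument.

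\emph{Envelope step.} Along the path $P(\eps)$, write $\Gamma(\eps)=\Gamma_{P(\eps)}=U_\eps\bigl(\sum_j s_j(\eps)P_j\bigr)U_\eps^*$ with $U_\eps=e^{i\eps K}$. The $s_j(\eps)$ are smooth in $\eps$ because $t_j(\eps)$ is smooth ($q_j>0$ for faithful $\sigma$) and $\vartheta$ is $C^1$. By Lemma~\ref{lem:exact-score-lift}, $F_f(P(\eps))=\cL_f^{\rho,\sigma}(\Gamma(\eps))$. The chain rule then gives
\[
\frac{\dd}{\dd\eps}F_f(P(\eps))\Big|_{\eps=0}=\Tr\bigl(G_P\,\dot\Gamma(0)\bigr).
\]

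\emph{Derivative of $\Gamma$.} Differentiating the conjugated form gives
\[
\dot\Gamma(0)=i[K,\Gamma_P]+\sum_j\dot s_j(0)P_j .
\]
The second term contributes $\sum_j\dot s_j(0)\Tr(P_jG_P)$, and each $\Tr(P_jG_P)$ vanishes by Lemma~\ref{lem:score-direction}. So only the rotation term survives:
\[
\frac{\dd}{\dd\eps}F_f(P(\eps))\Big|_{\eps=0}=i\Tr\bigl(G_P[K,\Gamma_P]\bigr).
\]
Cyclicity of the trace turns this into $i\Tr([\Gamma_P,G_P]K)$, which is \eqref{eq:first-variation}. The signs can be cross-checked against the direct formula \eqref{eq:direct-first} using \eqref{eq:commutator-derivative}. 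Note that $G_P$ is Hermitian, since it is a Fr\'echet derivative of Hermitian-valued functional calculus applied to Hermitian arguments.

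\emph{Criticality.} Set $C=[\Gamma_P,G_P]$. Since $\Gamma_P$ and $G_P$ are Hermitian, $C$ is anti-Hermitian, so $iC$ is Hermitian. Then $i\Tr(CK)=\langle iC,K\rangle_{\mathrm{HS}}$ is a real inner product on Hermitian matrices. The derivative therefore vanishes for every Hermitian $K$ if and only if $C=0$; in the nontrivial direction, take $K=iC$ to get $-\Tr(C^2)=\|C\|_{\mathrm{HS}}^2$.

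It remains to check that testing all Hermitian $K$ is the same as testing all tangent directions of $\cP_d$. Diagonal entries of $K$ in the $P$-basis generate the zero tangent vector. Consistently with this, for diagonal $K$ the pairing is $\Tr(CK)=\sum_jK_{jj}\bra{e_j}C\ket{e_j}=0$, because $\Gamma_P$ is diagonal in this basis and so $C$ has zero diagonal. Hence the first variation depends only on the off-diagonal part of $K$, and criticality on the flag manifold is equivalent to $C=0$, which is \eqref{eq:stationarity-commutator}.

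\emph{Main obstacle.} The delicate point is justifying the chain rule, which needs $\Gamma(\eps)$ to stay in $\Omega_J$ and be differentiable. This holds because $s_j(0)\in J$ with $J$ open, and each $s_j(\eps)=\vartheta(t_j(\eps))$ is a $C^1$ function of $\eps$.
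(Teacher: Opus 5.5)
Your proposal is correct and follows the paper's proof essentially step for step. The exact score lift turns the first variation into $\Tr(G_P\dot\Gamma)$, Lemma~\ref{lem:score-direction} removes the eigenvalue-variation term, cyclicity yields $i\Tr([\Gamma_P,G_P]K)$, and Hermiticity of $i[\Gamma_P,G_P]$ with arbitrary Hermitian $K$ gives the criticality criterion. Your additional remarks are correct refinements of steps the paper leaves implicit: the explicit test direction $K=i[\Gamma_P,G_P]$, the zero diagonal of the commutator matching the null phase directions, and the $C^1$ regularity of $s_j(\eps)$.
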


\begin{proof}
Along the path, write
\begin{equation}
 \Gamma_{P(\eps)}=\sum_js_j(\eps)P_j(\eps).
\end{equation}
At $\eps=0$,
\begin{equation}
 \dot\Gamma=i[K,\Gamma_P]+\sum_j\dot s_jP_j.
\end{equation}
By Lemma~\ref{lem:exact-score-lift},
\begin{equation}
 \frac{\dd}{\dd\eps}F_f(P(\eps))\Big|_0=\Tr(G_P\dot\Gamma).
\end{equation}
The second term vanishes by Lemma~\ref{lem:score-direction}. Cyclicity of trace gives
\begin{equation}
 \Tr(G_P i[K,\Gamma_P])=i\Tr([\Gamma_P,G_P]K),
\end{equation}
which proves \eqref{eq:first-variation}. Since $i[\Gamma_P,G_P]$ is Hermitian and $K$ is arbitrary, the derivative vanishes for every tangent direction exactly when \eqref{eq:stationarity-commutator} holds.
\end{proof}

\begin{remark}
Equation~\eqref{eq:stationarity-commutator} is stronger than a list of scalar stationarity equations: it identifies the obstruction as a failure of the score operator to commute with the gradient of the global concave problem. When the score spectrum has degeneracies, rotations inside an equal-score block are invisible at first order. The second-order analysis below resolves precisely those directions.
\end{remark}

\subsection{Score blocks}

Let
\begin{equation}\label{eq:score-blocks}
 \Gamma_P=\sum_{\alpha=1}^m s_\alpha B_\alpha
\end{equation}
be the spectral decomposition with distinct $s_\alpha$, and define
\begin{equation}\label{eq:block-likelihood}
 t_\alpha=\vartheta^{-1}(s_\alpha).
\end{equation}
Thus $B_\alpha$ is the sum of those rank-one projectors whose classical likelihood ratio equals $t_\alpha$.

\begin{lemma}[Block gradient identity]\label{lem:block-gradient}
For every score block,
\begin{equation}\label{eq:block-gradient}
 B_\alpha G_PB_\alpha
 =\psi'(s_\alpha)B_\alpha(\rho-t_\alpha\sigma)B_\alpha.
\end{equation}
\end{lemma}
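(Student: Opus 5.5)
The identity follows from compressing the divided-difference formula \eqref{eq:divided-difference} for $D\psi_{\Gamma_P}$ and $Dg_{\Gamma_P}$ to a single spectral block of $\Gamma_P$, and then converting $g'(s_\alpha)$ into $t_\alpha\psi'(s_\alpha)$ with the Fenchel relations.

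\emph{Block compression.} By the definition \eqref{eq:G-def},
\[
 G_P=D\psi_{\Gamma_P}(\rho)-Dg_{\Gamma_P}(\sigma).
\]
The spectral projections of $\Gamma_P$ are exactly the score blocks $B_\alpha$ with eigenvalues $s_\alpha$. Apply \eqref{eq:block-derivative} to $h=\psi$ and to $h=g$, with $A=\Gamma_P$ and $E_\alpha=B_\alpha$. This gives
\[
 B_\alpha G_P B_\alpha=\psi'(s_\alpha)B_\alpha\rho B_\alpha-g'(s_\alpha)B_\alpha\sigma B_\alpha .
\]
The only terms of \eqref{eq:divided-difference} that survive the compression are the diagonal ones, and those carry the derivative $h'(s_\alpha)$.

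\emph{Scalar identity.} It remains to show
\[
 g'(s_\alpha)=t_\alpha\,\psi'(s_\alpha).
\]
Since $g=f^*\circ\psi$, the chain rule gives
\[
 g'(s)=(f^*)'(\psi(s))\,\psi'(s).
\]
Take $s=s_\alpha=\vartheta(t_\alpha)$. By \eqref{eq:score-map}, $\psi(s_\alpha)=f'(t_\alpha)$. Then \eqref{eq:fenchel-equality} gives $(f^*)'(f'(t_\alpha))=t_\alpha$, which is the required identity. This step uses that $f'(t_\alpha)$ lies in the interior $I_f$, where $f^*$ is differentiable; Definition~\ref{def:ofl}(2) guarantees this, and $\psi$ is $C^2$ there.

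An alternative route is to take any basis index $j$ in the block and use the scalar stationarity \eqref{eq:scalar-stationarity}: dividing by $q_j>0$ gives $g'(s_\alpha)=t_j\psi'(s_\alpha)$, and $t_j=t_\alpha$ because $\vartheta$ is injective. Either argument works. The first does not even need the block to be nonempty in the basis, so I would present it.

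\emph{Conclusion and main obstacle.} Substituting the scalar identity into the compressed expression gives
\[
 B_\alpha G_PB_\alpha=\psi'(s_\alpha)B_\alpha(\rho-t_\alpha\sigma)B_\alpha,
\]
which is \eqref{eq:block-gradient}. The only point needing care is justifying \eqref{eq:block-derivative} when $B_\alpha$ has rank greater than one. This follows because the divided difference $h^{[1]}(s_\alpha,s_\alpha)=h'(s_\alpha)$ on the diagonal block, and Appendix~\ref{app:calculus} records this. The argument is otherwise routine and does not need criticality of $P$.
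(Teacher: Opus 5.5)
Your proposal is correct and follows the paper's proof. You compress $G_P$ with \eqref{eq:block-derivative} to get $\psi'(s_\alpha)B_\alpha\rho B_\alpha-g'(s_\alpha)B_\alpha\sigma B_\alpha$, then apply the scalar optimality relation $g'(s_\alpha)=t_\alpha\psi'(s_\alpha)$, which the paper takes from \eqref{eq:scalar-stationarity} and which is equivalent to $(f^*)'(f'(t_\alpha))=t_\alpha$, i.e.\ your chain-rule route; your remark that criticality of $P$ is not needed is also right.
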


\begin{proof}
Equation~\eqref{eq:block-derivative} gives
\begin{equation}
 B_\alpha G_PB_\alpha
 =\psi'(s_\alpha)B_\alpha\rho B_\alpha
 -g'(s_\alpha)B_\alpha\sigma B_\alpha.
\end{equation}
The scalar optimality equation at $s_\alpha$ is
\begin{equation}
 t_\alpha\psi'(s_\alpha)=g'(s_\alpha),
\end{equation}
so \eqref{eq:block-gradient} follows.
\end{proof}

We call
\begin{equation}\label{eq:block-residual}
 A_\alpha(P)=B_\alpha(\rho-t_\alpha\sigma)B_\alpha
\end{equation}
the \emph{score-block residual}. Its diagonal entries in the $P$-basis vanish:
\begin{equation}\label{eq:zero-diagonal}
 \bra jA_\alpha(P)\ket j=p_j-t_\alpha q_j=0,
 \qquad P_j\le B_\alpha.
\end{equation}
Thus a nonzero residual is necessarily noncommutative: it is carried by off-diagonal matrix elements inside a degenerate score block.

\subsection{Rigidity and global optimality}

\begin{theorem}[Score-block rigidity]\label{thm:block-rigidity}
Let $f\in\cF_{\OFL}$ and $\rho,\sigma\in\cD_d^\circ$. For a rank-one PVM $P$, the following are equivalent:
\begin{enumerate}
\item $P$ is globally optimal for $D_f^{\mathbb M}(\rho\Vert\sigma)$;
\item $P$ is critical and
\begin{equation}\label{eq:rigidity-condition}
 A_\alpha(P)=0\quad\text{for every score block }\alpha;
\end{equation}
\item $G_P=0$.
\end{enumerate}
\end{theorem}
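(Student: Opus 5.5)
I will prove the cycle (3)$\Rightarrow$(1)$\Rightarrow$(2)$\Rightarrow$(3). The organizing idea is that $\cL_f^{\rho,\sigma}$ is concave on the convex open set $\Omega_J$. Lemma~\ref{lem:exact-score-lift} places every projective value on its graph, and Theorem~\ref{thm:fff} identifies its supremum with $D_f^{\mathbb M}(\rho\Vert\sigma)$.

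\emph{(3)$\Rightarrow$(1).} If $G_P=\nabla\cL_f^{\rho,\sigma}(\Gamma_P)=0$, concavity on the convex domain $\Omega_J$ makes $\Gamma_P$ a global maximizer of $\cL_f^{\rho,\sigma}$. By \eqref{eq:exact-lift} and \eqref{eq:fff-equality} we then get
\[
 F_f(P)=\cL_f^{\rho,\sigma}(\Gamma_P)=\sup_{\Omega_J}\cL_f^{\rho,\sigma}=D_f^{\mathbb M}(\rho\Vert\sigma).
\]

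\emph{(1)$\Rightarrow$(3).} Suppose $P$ is globally optimal. Then $\cL_f^{\rho,\sigma}(\Gamma_P)=\sup_{\Omega_J}\cL_f^{\rho,\sigma}$, so $\Gamma_P$ is an interior maximizer of a differentiable function on the open set $\Omega_J$. Hence $G_P=0$ by Fermat's rule. This uses that $\spec\Gamma_P\subset J$, which holds because each $s_j=\vartheta(t_j)$ lies in $J$ by construction.

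\emph{(3)$\Rightarrow$(2).} If $G_P=0$, then $[\Gamma_P,G_P]=0$, so $P$ is critical by Theorem~\ref{thm:first-variation}. Lemma~\ref{lem:block-gradient} gives $\psi'(s_\alpha)A_\alpha(P)=B_\alpha G_PB_\alpha=0$. Since $\psi'(s_\alpha)\neq0$, this yields $A_\alpha(P)=0$.

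\emph{(2)$\Rightarrow$(3).} This is the substantive direction. Criticality means $[\Gamma_P,G_P]=0$, and because the $s_\alpha$ are distinct, $G_P$ commutes with every spectral projection $B_\alpha$. So $G_P=\sum_\alpha B_\alpha G_PB_\alpha$ has no off-diagonal blocks. Each diagonal block equals $\psi'(s_\alpha)A_\alpha(P)$ by Lemma~\ref{lem:block-gradient}, and this vanishes by hypothesis. Therefore $G_P=0$.

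Lemma~\ref{lem:score-direction} was only needed earlier to establish the first-variation formula. It is consistent with the above, since the diagonal entries of each block vanish by \eqref{eq:zero-diagonal}.

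The only delicate point is the justification in (1)$\Rightarrow$(3). I need the supremum in \eqref{eq:fff-equality} to be compared against a point interior to $\Omega_J$, and $\cL_f^{\rho,\sigma}$ to be differentiable there. The first follows because $J$ is open and $\spec\Gamma_P\subset J$ is a finite set. The second follows from the $C^2$ regularity of $\psi$ and $g$, via \eqref{eq:divided-difference}. With these settled, the remaining steps are purely algebraic.
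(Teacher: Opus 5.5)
Your proof is correct and is essentially the paper's argument. The paper also uses $[\Gamma_P,G_P]=0$ to remove the off-diagonal score blocks of $G_P$ and Lemma~\ref{lem:block-gradient} for the diagonal ones, concavity of $\cL_f^{\rho,\sigma}$ with the exact lift and Theorem~\ref{thm:fff} to pass from $G_P=0$ to global optimality, and first-order optimality at the interior point $\Gamma_P\in\Omega_J$ for the converse, differing only in the order in which the implications are chained.
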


\begin{proof}
Assume (2). Criticality gives $[\Gamma_P,G_P]=0$. Therefore, for $\alpha\ne\beta$,
\begin{equation}
 (s_\alpha-s_\beta)B_\alpha G_PB_\beta=0,
\end{equation}
so all off-diagonal score blocks of $G_P$ vanish. Lemma~\ref{lem:block-gradient} and \eqref{eq:rigidity-condition} show that every diagonal score block vanishes. Hence $G_P=0$, proving (3).

If (3) holds, concavity of $\cL_f^{\rho,\sigma}$ on $\Omega_J$ yields
\begin{equation}
 \cL_f^{\rho,\sigma}(\Gamma)
 \le \cL_f^{\rho,\sigma}(\Gamma_P)
 +\Tr(G_P(\Gamma-\Gamma_P))
 =\cL_f^{\rho,\sigma}(\Gamma_P)
\end{equation}
for all $\Gamma\in\Omega_J$. Lemma~\ref{lem:exact-score-lift} and Theorem~\ref{thm:fff} show that $P$ attains the full POVM optimum, proving (1).

Finally assume (1). Then
\begin{equation}
 \cL_f^{\rho,\sigma}(\Gamma_P)=F_f(P)
 =\sup_{\Gamma\in\Omega_J}\cL_f^{\rho,\sigma}(\Gamma).
\end{equation}
Because $\Gamma_P$ lies in the open spectral domain and the functional is differentiable, $G_P=0$. This proves (3), and Lemma~\ref{lem:block-gradient}, together with $\psi'(s_\alpha)\ne0$, gives \eqref{eq:rigidity-condition}. Equation~\eqref{eq:first-variation} also gives criticality, so (2) follows.
\end{proof}

\begin{corollary}[Simple-score critical points]\label{cor:simple-score}
If $P$ is critical and all likelihood ratios $t_j$ are distinct, then $P$ is globally optimal over all POVMs.
\end{corollary}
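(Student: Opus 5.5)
This is a direct consequence of Theorem~\ref{thm:block-rigidity}; the plan is to invoke it after showing that distinct likelihood ratios force every score block to be one-dimensional.

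First, I use the injectivity of the score map $\vartheta$, noted after Definition~\ref{def:ofl}. If the $t_j$ are pairwise distinct, then so are the scores $s_j=\vartheta(t_j)$. The spectral decomposition \eqref{eq:score-blocks} of $\Gamma_P$ therefore has $m=d$ blocks, and each block is a single rank-one projector, $B_\alpha=P_j$.

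Second, I evaluate the block residual on such a block. For $B_\alpha=P_j$ the residual is a scalar multiple of $P_j$. By \eqref{eq:zero-diagonal} that scalar is
\begin{equation*}
 \bra j(\rho-t_j\sigma)\ket j=p_j-t_jq_j=0,
\end{equation*}
so $A_\alpha(P)=0$ for every $\alpha$.

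Third, $P$ is critical by hypothesis, so condition (2) of Theorem~\ref{thm:block-rigidity} holds. Condition (1) then gives global optimality over all POVMs.

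No step is a genuine obstacle. The only point needing care is that distinct likelihood ratios give distinct scores, which is exactly the injectivity of $\vartheta$ coming from strict convexity of $f$ and injectivity of $\psi$.
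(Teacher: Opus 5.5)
Your proposal is correct and follows the paper's proof: distinct likelihood ratios give rank-one score blocks, \eqref{eq:zero-diagonal} makes each residual vanish, and Theorem~\ref{thm:block-rigidity} yields global optimality. Your explicit appeal to injectivity of $\vartheta$ is a step the paper's one-line proof leaves implicit.
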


\begin{proof}
Every score block is rank one, and \eqref{eq:zero-diagonal} makes each block residual zero. Apply Theorem~\ref{thm:block-rigidity}.
\end{proof}

\subsection{The two-level escape direction}

For $p,q>0$, define the perspective
\begin{equation}\label{eq:perspective}
 \varphi_f(p,q)=qf(p/q).
\end{equation}
A direct calculation gives
\begin{equation}\label{eq:perspective-hessian}
 \dd^2\varphi_f
 =\frac{f''(t)}q(\dd p-t\,\dd q)^2,
 \qquad t=p/q.
\end{equation}
The Hessian has rank one; its null direction is the radial rescaling $(p,q)\mapsto c(p,q)$.

\begin{lemma}[Pair-rotation second variation]\label{lem:pair-rotation}
Suppose two outcomes $j,k$ have the same likelihood ratio $t$. Let $P(\theta)$ be the two-level rotation
\begin{align}
 \ket{j(\theta)}&=\cos\theta\ket j+e^{i\phi}\sin\theta\ket k,\label{eq:two-level-j}\\
 \ket{k(\theta)}&=-e^{-i\phi}\sin\theta\ket j+\cos\theta\ket k,
 \label{eq:two-level-k}
\end{align}
with all other vectors fixed. Then
\begin{equation}\label{eq:pair-second-general}
 \frac{\dd^2}{\dd\theta^2}F_f(P(\theta))\Big|_0
 =\frac{f''(t)}{q_j}(\dot p_j-t\dot q_j)^2
 +\frac{f''(t)}{q_k}(\dot p_k-t\dot q_k)^2.
\end{equation}
\end{lemma}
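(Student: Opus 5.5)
The proof will be a direct second-order Taylor expansion of $F_f$ along the path. It uses only the perspective Hessian \eqref{eq:perspective-hessian} and one conservation law that comes from the rotation acting inside a fixed two-dimensional subspace. Along $P(\theta)$ every vector other than $\ket j,\ket k$ is fixed. Hence $p_l,q_l$ are constant for $l\ne j,k$, and
\begin{equation*}
 F_f(P(\theta))=\text{const}+\varphi_f\bigl(p_j(\theta),q_j(\theta)\bigr)+\varphi_f\bigl(p_k(\theta),q_k(\theta)\bigr),
\end{equation*}
where $p_j(\theta)=\bra{j(\theta)}\rho\ket{j(\theta)}$ and similarly for the others. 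These are trigonometric polynomials in $\theta$, hence smooth. Since $\rho,\sigma$ are faithful, $q_j(\theta),q_k(\theta)>0$, so $\varphi_f$ is $C^2$ along the path.

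First I apply the chain rule twice to each term:
\begin{equation*}
 \frac{\dd^2}{\dd\theta^2}\varphi_f(p_l,q_l)
 =\partial_p\varphi_f\,\ddot p_l+\partial_q\varphi_f\,\ddot q_l
 +\dd^2\varphi_f\bigl((\dot p_l,\dot q_l),(\dot p_l,\dot q_l)\bigr),
 \qquad l\in\{j,k\}.
\end{equation*}
By \eqref{eq:perspective-hessian}, the quadratic term at $\theta=0$ is $\frac{f''(t)}{q_l}(\dot p_l-t\dot q_l)^2$, which is exactly the claimed right-hand side of \eqref{eq:pair-second-general}. It remains to show that the first-order (acceleration) terms cancel.

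For the cancellation I use the gradient of the perspective, recorded in Section~\ref{sec:lift}:
\begin{equation*}
 \partial_p\varphi_f=f'(t_l),\qquad \partial_q\varphi_f=-f^*(f'(t_l)).
\end{equation*}
Both depend only on the likelihood ratio. Because $t_j=t_k=t$ at $\theta=0$, the two outcomes share the same coefficients $z=f'(t)$ and $-f^*(z)$. The acceleration terms therefore combine into
\begin{equation*}
 z\,(\ddot p_j+\ddot p_k)-f^*(z)\,(\ddot q_j+\ddot q_k).
\end{equation*}
Now \eqref{eq:two-level-j}--\eqref{eq:two-level-k} define a unitary on $\spanop\{\ket j,\ket k\}$, so $P_j(\theta)+P_k(\theta)$ equals the fixed projector onto this span for all $\theta$. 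Hence $p_j+p_k$ and $q_j+q_k$ are constant in $\theta$, their second derivatives vanish, and the acceleration terms drop out. This yields \eqref{eq:pair-second-general}.

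The only point that needs care is this cancellation. It is where the hypothesis $t_j=t_k$ enters: without it the first-order coefficients differ and a term $(f'(t_j)-f'(t_k))\ddot p_j+\dots$ survives. Everything else is bookkeeping. If explicit values are wanted later, they follow from $\dot p_j=2\operatorname{Re}\bigl(e^{i\phi}\bra j\rho\ket k\bigr)=-\dot p_k$, and likewise for $q$. These lead to the form \eqref{eq:flagship-curvature} once $\phi$ is chosen to align the phase of $\bra j(\rho-t\sigma)\ket k$, but that computation is not needed for the lemma itself.
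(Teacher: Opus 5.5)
Your proposal is correct and is essentially the paper's own argument: only the $j$ and $k$ perspective terms vary, and their gradients $(f'(t),-f^*(f'(t)))$ coincide because $t_j=t_k$. The acceleration terms cancel because the rotation conserves $p_j+p_k$ and $q_j+q_k$, what remains is the rank-one perspective Hessian applied to each outcome, and the only detail worth adding is that $C^2$ smoothness along the path also uses $p_j,p_k>0$ (so that $t\in(0,\infty)$), which faithfulness of $\rho$ supplies just as faithfulness of $\sigma$ gives $q_j,q_k>0$.
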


\begin{proof}
Only the $j$ and $k$ perspective terms vary. The rotation preserves $p_j+p_k$ and $q_j+q_k$. Since both points have the same ratio $t$, their perspective gradients coincide. The gradient terms multiplying the second accelerations therefore cancel pairwise. The remaining quadratic terms are exactly \eqref{eq:perspective-hessian} applied to the two outcomes.
\end{proof}

\begin{theorem}[Explicit strict-ascent direction]\label{thm:strict-ascent}
Let $P$ be a critical point which is not globally optimal. Then some score block $B_\alpha$ contains indices $j\ne k$ such that
\begin{equation}
 a_{jk}:=\bra j(\rho-t_\alpha\sigma)\ket k\ne0.
\end{equation}
There is a phase $\phi$ for which the rotation \eqref{eq:two-level-j}--\eqref{eq:two-level-k} satisfies
\begin{equation}\label{eq:strict-curvature}
 \frac{\dd^2}{\dd\theta^2}F_f(P(\theta))\Big|_{\theta=0}
 =4f''(t_\alpha)|a_{jk}|^2
 \left(\frac1{q_j}+\frac1{q_k}\right)>0.
\end{equation}
\end{theorem}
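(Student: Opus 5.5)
Since $P$ is critical but not globally optimal, Theorem~\ref{thm:block-rigidity} (equivalence of (1) and (2)) shows that some block residual $A_\alpha(P)=B_\alpha(\rho-t_\alpha\sigma)B_\alpha$ is nonzero. By \eqref{eq:zero-diagonal} its diagonal entries in the $P$-basis vanish. So a nonzero entry must be off-diagonal: there are basis vectors $\ket j,\ket k$ with $P_j,P_k\le B_\alpha$, $j\ne k$, and $a_{jk}=\bra j(\rho-t_\alpha\sigma)\ket k\ne0$. Both outcomes have likelihood ratio $t_\alpha$, so Lemma~\ref{lem:pair-rotation} applies with $t=t_\alpha$. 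The remaining work is to compute the first derivatives $\dot p_j,\dot q_j,\dot p_k,\dot q_k$ along the rotation \eqref{eq:two-level-j}--\eqref{eq:two-level-k} and then choose $\phi$.

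For the derivatives, differentiate $p_j(\theta)=\bra{j(\theta)}\rho\ket{j(\theta)}$ at $\theta=0$. Since $\frac{\dd}{\dd\theta}\ket{j(\theta)}|_0=e^{i\phi}\ket k$, we get $\dot p_j=2\,\mathrm{Re}(e^{i\phi}\rho_{jk})$, where $\rho_{jk}=\bra j\rho\ket k$. Similarly $\dot q_j=2\,\mathrm{Re}(e^{i\phi}\sigma_{jk})$. For the second vector, $\frac{\dd}{\dd\theta}\ket{k(\theta)}|_0=-e^{-i\phi}\ket j$, which gives $\dot p_k=-2\,\mathrm{Re}(e^{i\phi}\rho_{jk})=-\dot p_j$, consistent with conservation of $p_j+p_k$, and likewise $\dot q_k=-\dot q_j$. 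Therefore
\begin{equation*}
 \dot p_j-t_\alpha\dot q_j=-(\dot p_k-t_\alpha\dot q_k)=2\,\mathrm{Re}(e^{i\phi}a_{jk}).
\end{equation*}
Substituting into \eqref{eq:pair-second-general} yields
\begin{equation*}
 \frac{\dd^2}{\dd\theta^2}F_f(P(\theta))\Big|_0=4f''(t_\alpha)\bigl(\mathrm{Re}(e^{i\phi}a_{jk})\bigr)^2\left(\frac1{q_j}+\frac1{q_k}\right).
\end{equation*}

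The phase adjustment is to take $\phi=-\arg a_{jk}$. Then $e^{i\phi}a_{jk}=|a_{jk}|$, and \eqref{eq:strict-curvature} follows. The value is strictly positive because $f''(t_\alpha)>0$ by Definition~\ref{def:ofl}, $q_j,q_k>0$ for faithful $\sigma$, and $a_{jk}\ne0$.

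The only delicate step is confirming that Lemma~\ref{lem:pair-rotation} really applies at this point. The lemma relies on two facts: the second-acceleration terms cancel, because both perspective gradients are evaluated at the same ratio $t_\alpha$, and no other outcome moves under the rotation. Both hold here, since $j,k$ lie in the same score block and $\vartheta$ is injective, so equal scores give equal ratios. Hence no first-order term survives, and no criticality input is needed beyond producing the nonzero residual. I would also record the sign conventions of the phases, so that $\dot p_k$ comes out exactly as $-\dot p_j$.
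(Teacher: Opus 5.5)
Your proposal is correct and follows essentially the same route as the paper. Both arguments use score-block rigidity to obtain a nonzero residual with vanishing diagonal, hence an off-diagonal entry $a_{jk}$ inside one block. Both then choose the phase $\phi=-\arg a_{jk}$ and substitute into the pair-rotation lemma; your explicit derivative computation is exactly the content of the pair-rotation identities recorded in the paper's appendix.
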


\begin{proof}
By Theorem~\ref{thm:block-rigidity}, some block residual $A_\alpha(P)$ is nonzero. It is Hermitian and has zero diagonal by \eqref{eq:zero-diagonal}; hence it has a nonzero off-diagonal entry $a_{jk}$.

Choose $\phi$ so that
\begin{equation}
 \Re\!\left(e^{i\phi}a_{jk}\right)=|a_{jk}|.
\end{equation}
The chosen rotation yields
\begin{equation}
 \dot p_j-t_\alpha\dot q_j=2|a_{jk}|,
 \qquad
 \dot p_k-t_\alpha\dot q_k=-2|a_{jk}|.
\end{equation}
Substitution into Lemma~\ref{lem:pair-rotation} proves \eqref{eq:strict-curvature}.
\end{proof}

\begin{proof}[Proof of Theorem~\ref{thm:benign-score-block}]
Theorem~\ref{thm:block-rigidity} gives the equivalence between global
optimality and \eqref{eq:flagship-rigidity}.  If the block condition fails,
Theorem~\ref{thm:strict-ascent} supplies the rotation and
\eqref{eq:flagship-curvature}.  A local maximum is critical and has
nonpositive second variation in every direction, so the positive-curvature
alternative is impossible.  The same argument applies to every
second-order stationary PVM.
\end{proof}

\subsection{Finite exact optimality test}

At an exact critical PVM, the proof is a finite procedure.

\begin{algorithm}[Score-block optimality test]\label{alg:score-block}
\emph{Input:} a critical rank-one PVM $P$ for faithful $\rho,\sigma$ and a
smoothly operator-Fenchel liftable $f$.
\begin{enumerate}
\item Compute $t_j=p_j/q_j$, the scores $s_j=\vartheta(t_j)$, and the
distinct-score blocks $B_\alpha$.
\item Form $A_\alpha=B_\alpha(\rho-t_\alpha\sigma)B_\alpha$.
\item If every $A_\alpha$ vanishes, return \emph{globally POVM-optimal}.
\item Otherwise choose an off-diagonal entry
$a_{jk}=\bra jA_\alpha\ket k\ne0$, take
$\phi=-\arg a_{jk}$, and return the pair rotation
\eqref{eq:two-level-j}--\eqref{eq:two-level-k} with ascent curvature
\eqref{eq:strict-curvature}.
\end{enumerate}
\emph{Correctness:} Theorems~\ref{thm:block-rigidity} and
\ref{thm:strict-ascent}.
\end{algorithm}

\subsection{A three-dimensional score-block saddle}

The escape mechanism is already visible in dimension three.

\begin{example}[A genuinely degenerate score block]\label{ex:qutrit-saddle}
For measured relative entropy, $f(t)=t\log t$, let
\begin{equation}\label{eq:qutrit-states}
 \sigma=\begin{pmatrix}
  1/4&0&0\\0&1/4&0\\0&0&1/2
 \end{pmatrix},\qquad
 \rho=\begin{pmatrix}
  3/10&1/10&0\\1/10&3/10&0\\0&0&2/5
 \end{pmatrix}.
\end{equation}
Both states are faithful.  At the standard PVM,
\begin{equation}\label{eq:qutrit-ratios}
 (t_1,t_2,t_3)=(6/5,6/5,4/5).
\end{equation}
Using $\psi(s)=1+\log s$, so that $\vartheta(t)=t$, gives
\begin{equation}\label{eq:qutrit-score-gradient}
 \Gamma_P=\operatorname{diag}(6/5,6/5,4/5),\qquad
 G_P=\begin{pmatrix}
  0&1/12&0\\1/12&0&0\\0&0&0
 \end{pmatrix}.
\end{equation}
Thus $[\Gamma_P,G_P]=0$, so $P$ is critical, while the degenerate block
$B=P_1+P_2$ has residual
\begin{equation}\label{eq:qutrit-residual}
 B(\rho-\tfrac65\sigma)B
 =\begin{pmatrix}0&1/10\\1/10&0\end{pmatrix}\ne0.
\end{equation}
The $1$--$2$ pair rotation therefore has
\begin{equation}\label{eq:qutrit-positive-curvature}
 F_{\theta\theta}(0,0)
 =4\frac56\left(\frac1{10}\right)^2(4+4)=\frac4{15}>0.
\end{equation}
For reproducibility, put
\begin{equation}
 \begin{aligned}
 u_\theta&=\cos\theta\,e_1+\sin\theta\,e_2,
 &v_\theta&=-\sin\theta\,e_1+\cos\theta\,e_2,\\
 u_{\theta,\eta}&=\cos\eta\,u_\theta+\sin\eta\,e_3,
 &w_{\theta,\eta}&=-\sin\eta\,u_\theta+\cos\eta\,e_3,
 \end{aligned}
\end{equation}
and let $P(\theta,\eta)$ project onto
$(u_{\theta,\eta},v_\theta,w_{\theta,\eta})$.  Direct evaluation gives
\begin{equation}\label{eq:qutrit-saddle-expansion}
 \begin{aligned}
 F(\theta,\eta)-F(0,0)
 &=\frac{2}{15}\theta^2\\
 &\quad+\frac1{10}\left(\log\frac32-1\right)\eta^2
 +O(\|(\theta,\eta)\|^3).
 \end{aligned}
\end{equation}
The second coefficient is negative.  Hence the standard PVM is a literal
sign-changing saddle: the hidden noncommutativity in the equal-score block
selects the ascent direction, while mixing distinct scores is descending.
\end{example}

\begin{figure}[!t]
 \centering
 \includegraphics[width=0.78\textwidth]{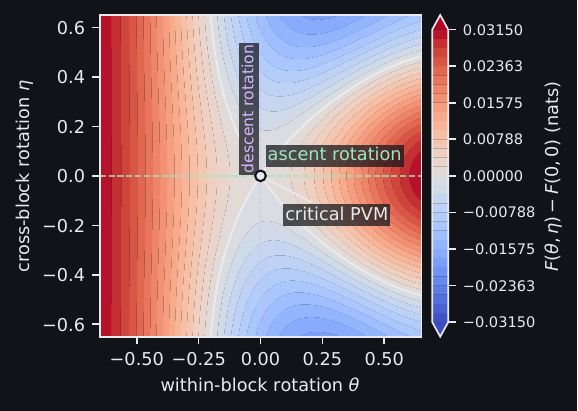}
 \caption{Relative-entropy landscape in Example~\ref{ex:qutrit-saddle}.
 The horizontal pair rotation stays inside the degenerate score block and
 increases the objective; the vertical cross-block rotation decreases it.
 The white contour is the critical objective value.}
 \label{fig:qutrit-saddle}
\end{figure}

The general conclusion is the \emph{strict-ascent property}: a nonglobal
critical point has a positive Hessian direction in the maximization
convention.  It need not have a negative direction as well; thus
``strict saddle'' below always refers to this one-sided optimization
convention.  Example~\ref{ex:qutrit-saddle} happens to be sign-changing.

\subsection{A quantitative ascent measure}

The proof gives a computable defect at every critical point. Define
\begin{equation}\label{eq:escape-index}
 \mathfrak e_f(P)=
 \max_{\substack{\alpha,\,j\ne k\\P_j,P_k\le B_\alpha}}
 4f''(t_\alpha)
 \left|\bra j(\rho-t_\alpha\sigma)\ket k\right|^2
 \left(\frac1{q_j}+\frac1{q_k}\right).
\end{equation}
Then $\mathfrak e_f(P)=0$ if and only if the critical point is global, and otherwise it is the largest second variation attained by the unit-angle pair rotations used in Theorem~\ref{thm:strict-ascent}. In particular it identifies a positive Hessian direction, without choosing a global normalization convention for the flag-manifold metric. This defect separates three logically different notions:
\begin{enumerate}
\item the global value gap, which need not be known;
\item the first-order residual $[\Gamma_P,G_P]$, which vanishes at every critical point;
\item the hidden block residuals, which decide whether a critical point is globally optimal.
\end{enumerate}

\begin{remark}[Scope of the theorem]
Theorem~\ref{thm:benign-score-block} classifies critical PVMs and identifies
an exact ascent direction at every nonglobal one.  Convergence guarantees
for first-order methods and dimension-independent saddle gaps require
additional quantitative control of the smallest nonzero block residual and
of the scalar or atomic curvature, as discussed in
Section~\ref{sec:outlook}.
\end{remark}

%% file: sections/05-atomic.tex
\section{Positive atomic curvature}\label{sec:atomic}

\subsection{The resolvent atoms}

For $\lambda\in[0,1]$, define
\begin{equation}\label{eq:atom-generator}
 h_\lambda(t)=\frac{(t-1)^2}{(1-\lambda)t+\lambda},
 \qquad t>0.
\end{equation}
These are the scalar generators of the resolvent $\chi^2_\lambda$ divergences. Their curvature is
\begin{equation}\label{eq:atom-second}
 h_\lambda''(t)=\frac{2}{((1-\lambda)t+\lambda)^3}.
\end{equation}

The operator-convex representation is most naturally written with a positive measure; this allows endpoint atoms and avoids an unnecessary absolute-continuity assumption.

\begin{definition}[Positively atomizable generator]\label{def:atomizable}
A normalized convex generator $f$ is \emph{positively atomizable} if there is a finite positive Borel measure $\nu_f$ on $[0,1]$ such that, up to an affine term that does not affect probability divergences,
\begin{equation}\label{eq:atomic-representation}
 f(t)=\int_{[0,1]}h_\lambda(t)\,\nu_f(\dd\lambda).
\end{equation}
\end{definition}

The standard L\"owner representation of operator-convex functions on
$(0,\infty)$ implies such a measure form after affine normalization
\cite{HansenPedersen1982,BhatiaMatrix1997}.  Salazar formulated the
corresponding positive resolvent-atom decomposition for Petz quantum
$f$-divergences \cite{SalazarAtomic2026}.  We use its scalar consequence
\eqref{eq:atomic-representation}.

\begin{lemma}[Curvature transform]\label{lem:curvature-transform}
If $f$ is positively atomizable, then
\begin{equation}\label{eq:f-second-atomic}
 f''(t)=2\int_{[0,1]}
 \frac{\nu_f(\dd\lambda)}{((1-\lambda)t+\lambda)^3}.
\end{equation}
In particular, a nonzero positive atomic measure makes $f$ strictly convex.
\end{lemma}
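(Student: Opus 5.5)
We differentiate the representation \eqref{eq:atomic-representation} under the integral sign twice. Then we combine the resulting formula with \eqref{eq:atom-second}.

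First, fix a compact interval $[a,b]\subset(0,\infty)$. For $t\in[a,b]$ and $\lambda\in[0,1]$, the denominator $(1-\lambda)t+\lambda$ is a convex combination of $t$ and $1$. It is therefore bounded below by $c=\min(a,1)>0$ and above by $\max(b,1)$.

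On $[a,b]\times[0,1]$ the following bounds hold uniformly in $\lambda$:
\begin{equation}
 |h_\lambda(t)|\le \frac{\max((a-1)^2,(b-1)^2)}{c},\qquad
 |h_\lambda'(t)|\le \frac{2|t-1|}{c}+\frac{(t-1)^2}{c^2},\qquad
 h_\lambda''(t)\le \frac{2}{c^3}.
\end{equation}
The derivatives used here are $h_\lambda'(t)=\frac{2(t-1)}{D}-\frac{(1-\lambda)(t-1)^2}{D^2}$ with $D=(1-\lambda)t+\lambda$, and $h_\lambda''$ is given by \eqref{eq:atom-second}.

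Since $\nu_f$ is a finite positive measure, these bounds are $\nu_f$-integrable constants. The standard dominated-convergence criterion for differentiating parameter integrals (via the mean value theorem) therefore applies twice on $(a,b)$. Because $[a,b]$ was arbitrary, we obtain for every $t>0$
\begin{equation}
 f''(t)=\int_{[0,1]}h_\lambda''(t)\,\nu_f(\dd\lambda)
 =2\int_{[0,1]}\frac{\nu_f(\dd\lambda)}{((1-\lambda)t+\lambda)^3}.
\end{equation}

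The affine term allowed in Definition~\ref{def:atomizable} has zero second derivative, so it does not affect this formula. Continuity of $\lambda\mapsto h_\lambda(t)$ together with the uniform bounds also makes all the integrals well defined as Borel integrals.

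For strict convexity, the integrand $2((1-\lambda)t+\lambda)^{-3}$ is strictly positive for every $\lambda\in[0,1]$; it is in fact at least $2\max(t,1)^{-3}$. Hence $f''(t)\ge 2\max(t,1)^{-3}\,\nu_f([0,1])>0$ whenever $\nu_f\ne0$, so $f$ is strictly convex on $(0,\infty)$.

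The only step requiring care is the domination uniform in $\lambda$ near the endpoint atoms $\lambda=0,1$. It is handled by the convex-combination lower bound on the denominator, which stays away from zero uniformly in $\lambda$ once $t$ is confined to a compact subinterval of $(0,\infty)$.
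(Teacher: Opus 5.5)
Your proof is correct and takes essentially the paper's approach: differentiate the atomic representation twice under the integral sign, justified on compact subintervals of $(0,\infty)$ by finiteness of $\nu_f$ and the uniform lower bound on the denominator, then use \eqref{eq:atom-second}. You also supply the explicit dominating constants and the quantitative lower bound $f''(t)\ge 2\max(t,1)^{-3}\,\nu_f([0,1])$, which the paper leaves implicit.
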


\begin{proof}
Differentiate \eqref{eq:atomic-representation} twice and use \eqref{eq:atom-second}. Differentiation under the integral is justified on compact subintervals of $(0,\infty)$ by the finiteness of $\nu_f$ and the uniform lower bound on the denominator.
\end{proof}

\subsection{Atomic resolution of the escape curvature}

\begin{theorem}[Atomic strict-ascent curvature]\label{thm:atomic-curvature}
Assume the hypotheses of Theorem~\ref{thm:strict-ascent} and suppose in addition that $f$ is positively atomizable. For the two-level escape direction found there,
\begin{align}
 \frac{\dd^2}{\dd\theta^2}F_f(P(\theta))\Big|_{\theta=0}
 &={}8|a_{jk}|^2
 \left(\frac1{q_j}+\frac1{q_k}\right)\notag\\
 &\quad\times\int_{[0,1]}
 \frac{\nu_f(\dd\lambda)}{((1-\lambda)t_\alpha+\lambda)^3}.
 \label{eq:atomic-curvature}
\end{align}
Equivalently, the curvature of $F_f$ is the positive superposition of the curvatures obtained by replacing $f$ with each atom $h_\lambda$.
\end{theorem}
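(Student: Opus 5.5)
The plan is to substitute the curvature transform of Lemma~\ref{lem:curvature-transform} into the explicit escape curvature of Theorem~\ref{thm:strict-ascent}. The rotation found there depends only on $\rho$, $\sigma$, the critical PVM $P$ and the choice of $(j,k,\phi)$, not on $f$. The quantities $a_{jk}$, $q_j$, $q_k$ and $t_\alpha=p_j/q_j=p_k/q_k$ are likewise $f$-independent; by injectivity of $\vartheta$, the score block is exactly the set of outcomes sharing the likelihood ratio $t_\alpha$. So \eqref{eq:strict-curvature} reads
\[
 \frac{\dd^2}{\dd\theta^2}F_f(P(\theta))\Big|_{0}=4f''(t_\alpha)|a_{jk}|^2\Bigl(\frac1{q_j}+\frac1{q_k}\Bigr).
\]
Inserting $f''(t_\alpha)=2\int_{[0,1]}((1-\lambda)t_\alpha+\lambda)^{-3}\,\nu_f(\dd\lambda)$ gives the factor $8$ and \eqref{eq:atomic-curvature} immediately.

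For the ``equivalently'' clause, I will not invoke Theorem~\ref{thm:strict-ascent} for each atom $h_\lambda$. Individual atoms need not belong to $\cF_{\OFL}$, and $P$ need not be critical for $F_{h_\lambda}$. Instead I will use Lemma~\ref{lem:pair-rotation}, whose proof uses only that the two outcomes share the ratio $t$ and that the rotation preserves $p_j+p_k$ and $q_j+q_k$. For any $C^2$ strictly convex generator $h$, the same rotation therefore gives
\[
 \frac{\dd^2}{\dd\theta^2}F_h(P(\theta))\Big|_0=h''(t_\alpha)\Bigl[\frac{(\dot p_j-t_\alpha\dot q_j)^2}{q_j}+\frac{(\dot p_k-t_\alpha\dot q_k)^2}{q_k}\Bigr].
\]
The bracket equals $4|a_{jk}|^2(1/q_j+1/q_k)$ by the phase choice in Theorem~\ref{thm:strict-ascent}, and it is independent of $h$. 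Applying this with $h=h_\lambda$ and using \eqref{eq:atom-second}, the atom curvature is $8|a_{jk}|^2(1/q_j+1/q_k)((1-\lambda)t_\alpha+\lambda)^{-3}>0$. Integrating against $\nu_f$ then reproduces \eqref{eq:atomic-curvature}, which shows that every atom detects the same rotation with the same positive sign.

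One bookkeeping point remains: the affine term allowed in \eqref{eq:atomic-representation}. Adding $c(t-1)$ changes $F_f$ by $c\sum_j(p_j-q_j)=0$ identically along the path, so it affects neither the curvature nor $f''$.

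The main obstacle, though mild, is justifying the interchange $\frac{\dd^2}{\dd\theta^2}\int F_{h_\lambda}\,\nu_f(\dd\lambda)=\int\frac{\dd^2}{\dd\theta^2}F_{h_\lambda}\,\nu_f(\dd\lambda)$. For small $|\theta|$, all probabilities $p_j(\theta),q_j(\theta)$ stay in a compact subset of $(0,\infty)$ by faithfulness, so the ratios stay in a compact interval $[a,b]\subset(0,\infty)$. On that interval, $(1-\lambda)t+\lambda\ge\min(a,1)>0$ uniformly in $\lambda$. Hence $h_\lambda$ and its first two $t$-derivatives are bounded uniformly in $\lambda$, and the $\theta$-derivatives of $q_jh_\lambda(p_j/q_j)$ are dominated by constants. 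Finiteness of $\nu_f$ then allows dominated differentiation, exactly as in the proof of Lemma~\ref{lem:curvature-transform}. Alternatively, one can avoid the interchange entirely by relying only on the first substitution argument and the pointwise identity for $f''$.
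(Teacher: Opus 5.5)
Your proposal is correct and follows the paper's proof, which is exactly the one-line substitution of the curvature transform \eqref{eq:f-second-atomic} into \eqref{eq:strict-curvature}. Your further atomwise verification of the ``equivalently'' clause is a sound elaboration of a point the paper leaves implicit. It applies Lemma~\ref{lem:pair-rotation} to each $h_\lambda$, which is valid for any $C^2$ generator because it uses only the shared ratio and the preserved pair sums, and it is supported by your dominated-differentiation and affine-term remarks.
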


\begin{proof}
Substitute \eqref{eq:f-second-atomic} into \eqref{eq:strict-curvature}.
\end{proof}

\begin{corollary}[Common escape rotation for positive atomic families]\label{cor:universal-escape}
Let $\mathcal C$ be any collection of positively atomizable generators whose atomic measures are nonzero. At a projective measurement $P$ which is critical for every $f\in\mathcal C$, if one common score block has a nonzero residual $A_\alpha(P)$, then the same pair rotation increases every objective in $\mathcal C$ to second order.
\end{corollary}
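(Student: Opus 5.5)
The statement to prove is Corollary~\ref{cor:universal-escape}. The plan is to apply Theorem~\ref{thm:strict-ascent} and Theorem~\ref{thm:atomic-curvature} generator by generator. The only task is to check that the rotation they produce does not depend on $f\in\mathcal C$.

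\textbf{Step 1: the common data is independent of $f$.} The block $B_\alpha$ is, by \eqref{eq:block-likelihood}, the span of the basis vectors $P_j$ whose likelihood ratio equals $t_\alpha$. The ratios $t_j=p_j/q_j$ depend only on $P,\rho,\sigma$, and each score map $\vartheta$ is injective. Hence every $f\in\mathcal C$ induces the same partition of outcomes into equal-ratio blocks, as already noted in the remark on the choice of operator coordinate. The same holds for the block ratio $t_\alpha$ and the residual $A_\alpha(P)=B_\alpha(\rho-t_\alpha\sigma)B_\alpha$.

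\textbf{Step 2: one off-diagonal entry fixes the rotation.} Since $A_\alpha(P)$ is a nonzero Hermitian matrix with vanishing diagonal by \eqref{eq:zero-diagonal}, I will fix one index pair $j\ne k$ inside $B_\alpha$ with $a_{jk}\ne0$ and set $\phi=-\arg a_{jk}$. The resulting rotation \eqref{eq:two-level-j}--\eqref{eq:two-level-k} is built from $j,k,\phi$ alone, so it is the same for all $f\in\mathcal C$.

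\textbf{Step 3: the curvature is positive for every $f$.} Fix $f\in\mathcal C$. Lemma~\ref{lem:pair-rotation} needs only that $q_j,q_k>0$ and that $t_j=t_k$, and both hold. The computation in the proof of Theorem~\ref{thm:strict-ascent} gives $\dot p_j-t_\alpha\dot q_j=2|a_{jk}|=-(\dot p_k-t_\alpha\dot q_k)$, and these quantities involve only $\rho,\sigma$ and the rotation. Substituting them into \eqref{eq:pair-second-general} gives \eqref{eq:strict-curvature}. Theorem~\ref{thm:atomic-curvature} then expresses this value as
\[
8|a_{jk}|^2\left(\frac1{q_j}+\frac1{q_k}\right)\int_{[0,1]}\frac{\nu_f(\dd\lambda)}{((1-\lambda)t_\alpha+\lambda)^3}.
\]
The integrand is strictly positive on $[0,1]$ and $\nu_f$ is a nonzero positive measure, so the integral is strictly positive. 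Every objective in $\mathcal C$ therefore increases to second order along this single rotation.

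\textbf{Main obstacle.} The delicate point is that Theorem~\ref{thm:strict-ascent} is stated for a non-globally-optimal point and proved through the lift of an $\OFL$ generator. A positively atomizable $f$ need not be known to lie in $\cF_{\OFL}$. I will avoid this by not using the global-optimality route at all. The hypothesis already supplies a nonzero residual, and Lemma~\ref{lem:pair-rotation} is a purely classical perspective computation that requires only a $C^2$ strictly convex $f$. Lemma~\ref{lem:curvature-transform} provides exactly this regularity. Criticality enters only to ensure that the second variation measures the local behaviour, and it is assumed for each $f$.
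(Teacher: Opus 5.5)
Your proposal is correct and matches the paper's argument: $A_\alpha(P)$, and hence $j$, $k$ and $\phi=-\arg a_{jk}$, do not depend on $f$, and every nonzero $\nu_f$ makes the integral in \eqref{eq:atomic-curvature} strictly positive. Routing through Lemma~\ref{lem:pair-rotation} rather than the $\cF_{\OFL}$ global-optimality argument is a sound refinement that the paper's two-line proof leaves implicit. To be consistent with it, substitute Lemma~\ref{lem:curvature-transform} directly into \eqref{eq:pair-second-general} instead of citing Theorem~\ref{thm:atomic-curvature}, whose hypotheses include those of Theorem~\ref{thm:strict-ascent}.
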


\begin{proof}
The residual chooses $j,k$ and the phase independently of the atomic measure. Every measure contributes positively in \eqref{eq:atomic-curvature}.
\end{proof}

The score partition itself is generator-independent: injectivity of every
score map makes its blocks exactly the equal-likelihood-ratio blocks.
Corollary~\ref{cor:universal-escape} additionally requires the same PVM to
satisfy the first-order equations for every objective.  Once such a common
critical PVM is fixed, the noncommutative residual determines the escape
direction, and the divergence supplies a positive scalar weighting of its
curvature.

\begin{remark}[Why atomwise maximization is insufficient]
For each $\lambda$, the atomic projective problem can have an optimizer whose basis depends on $\lambda$. In general,
\begin{equation}
 \sup_P\int F_{h_\lambda}(P)\,\nu_f(\dd\lambda)
 \le
 \int\sup_PF_{h_\lambda}(P)\,\nu_f(\dd\lambda),
\end{equation}
and equality need not hold. The proof of the landscape theorem does not exchange supremum and integration. It first resums the scalar Fenchel information into $\Gamma_P$, then uses the atomic representation only to resolve the already identified Hessian direction.
\end{remark}

\subsection{A global quadratic error bound}

The endpoint atom $h_1(t)=(t-1)^2$, extended to all of $\RR$ by the same
polynomial, admits a stronger quantitative statement.  Choose the affine
lift coordinate
\begin{equation}\label{eq:quadratic-coordinate}
 \psi(s)=2(s-1),\qquad g(s)=s^2-1.
\end{equation}
Then $\vartheta(t)=t$ and
\begin{equation}\label{eq:quadratic-lift}
 \cL_{h_1}^{\rho,\sigma}(\Gamma)
 =2\Tr(\rho\Gamma)-\Tr(\sigma\Gamma^2)-1.
\end{equation}

\begin{theorem}[Global $\chi^2$ residual equivalence]
\label{thm:quadratic-robust}
Let $\rho,\sigma\in\cD_d^\circ$ and $P\in\cP_d$.  Put
\begin{equation}\label{eq:quadratic-score-gradient}
 \Gamma_P=\sum_jt_jP_j,\qquad
 G_P=2\rho-\sigma\Gamma_P-\Gamma_P\sigma,
\end{equation}
and let $\Gamma_\star$ be the unique Hermitian solution of
\begin{equation}\label{eq:quadratic-global-score}
 \sigma\Gamma_\star+\Gamma_\star\sigma=2\rho.
\end{equation}
For the distinct-score decomposition
$\Gamma_P=\sum_{\alpha=1}^m t_\alpha B_\alpha$, define
\begin{align}
 r_P^2&=\sum_{\alpha=1}^m
 \left\|B_\alpha(\rho-t_\alpha\sigma)B_\alpha\right\|_2^2,
 \label{eq:quadratic-block-residual}\\
 c_P&=\|[\Gamma_P,G_P]\|_2.
 \label{eq:quadratic-commutator}
\end{align}
Here $\|\cdot\|_2$ is the Hilbert--Schmidt norm.  When $m\ge2$, define
\begin{equation}\label{eq:quadratic-score-gaps}
 \delta_P=\min_{\alpha\ne\beta}|t_\alpha-t_\beta|,\qquad
 \Delta_P=\max_{\alpha\ne\beta}|t_\alpha-t_\beta|.
\end{equation}
If $m=1$, then $c_P=0$, and every quotient involving $\delta_P$ or
$\Delta_P$ below is defined to be zero.
Then
\begin{align}
 \frac{4r_P^2+c_P^2/\Delta_P^2}
      {4\lambda_{\max}(\sigma)}
 &\le D_{h_1}^{\mathbb M}(\rho\Vert\sigma)-F_{h_1}(P)
 \notag\\
 &\le
 \frac{4r_P^2+c_P^2/\delta_P^2}
      {4\lambda_{\min}(\sigma)},
 \label{eq:quadratic-value-two-sided}
\end{align}
and
\begin{align}
 \frac{\sqrt{4r_P^2+c_P^2/\Delta_P^2}}
      {2\lambda_{\max}(\sigma)}
 &\le \|\Gamma_P-\Gamma_\star\|_2
 \notag\\
 &\le
 \frac{\sqrt{4r_P^2+c_P^2/\delta_P^2}}
      {2\lambda_{\min}(\sigma)}.
 \label{eq:quadratic-score-two-sided}
\end{align}
Consequently, $r_P=c_P=0$ if and only if $P$ is globally optimal.
\end{theorem}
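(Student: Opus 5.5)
The plan is to use the fact that for $h_1$ the lift \eqref{eq:quadratic-lift} is an exact quadratic in $\Gamma$. Then both the value gap and the score distance become Hilbert--Schmidt norms of one operator, namely $G_P$ transformed by the Lyapunov map. The residuals $r_P$ and $c_P$ then come out of a block decomposition of $\|G_P\|_2^2$.

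\emph{Step 1 (applicability of the lift).} First I check that $h_1$, extended polynomially to $\RR$, satisfies Definition~\ref{def:ofl} with $J=\RR$. Its conjugate is $f^*(z)=z+z^2/4$, so $\psi(s)=2(s-1)$ is affine, hence operator concave, and $g=f^*\circ\psi=s^2-1$ is operator convex. This gives $\vartheta(t)=t$, so $\Gamma_P=\sum_j t_jP_j$. Theorem~\ref{thm:fff} then yields $D_{h_1}^{\mathbb M}(\rho\Vert\sigma)=\sup_\Gamma\cL_{h_1}^{\rho,\sigma}(\Gamma)$, and Lemma~\ref{lem:exact-score-lift} gives $F_{h_1}(P)=\cL_{h_1}^{\rho,\sigma}(\Gamma_P)$. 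The gradient is $G_P=2\rho-\sigma\Gamma_P-\Gamma_P\sigma$, matching \eqref{eq:quadratic-score-gradient}.

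\emph{Step 2 (completing the square).} Let $S(X)=\sigma X+X\sigma$. In the eigenbasis of $\sigma$, $S$ is Hilbert--Schmidt self-adjoint with eigenvalues $\lambda_i+\lambda_j\in[2\lambda_{\min}(\sigma),2\lambda_{\max}(\sigma)]$, so it is invertible and $\Gamma_\star=S^{-1}(2\rho)$ is unique and Hermitian. Put $E=\Gamma_P-\Gamma_\star$. Using $\Tr\sigma(\Gamma\Gamma_\star+\Gamma_\star\Gamma)=\Tr\Gamma S(\Gamma_\star)=2\Tr\rho\Gamma$, a direct expansion gives
\begin{equation*}
 \cL_{h_1}^{\rho,\sigma}(\Gamma)=\cL_{h_1}^{\rho,\sigma}(\Gamma_\star)-\Tr\sigma(\Gamma-\Gamma_\star)^2 .
\end{equation*}
Hence the gap equals $\Tr\sigma E^2=\tfrac12\langle E,S(E)\rangle$. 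Since $G_P=-S(E)$, the gap is $\tfrac12\langle G_P,S^{-1}G_P\rangle$ and $\|E\|_2=\|S^{-1}G_P\|_2$. The spectral bounds on $S^{-1}$ then give
\begin{equation*}
\frac{\|G_P\|_2^2}{4\lambda_{\max}}\le\text{gap}\le\frac{\|G_P\|_2^2}{4\lambda_{\min}},\qquad
\frac{\|G_P\|_2}{2\lambda_{\max}}\le\|E\|_2\le\frac{\|G_P\|_2}{2\lambda_{\min}}.
\end{equation*}

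\emph{Step 3 (block decomposition of $\|G_P\|_2^2$).} Split $\|G_P\|_2^2=\sum_\alpha\|B_\alpha G_PB_\alpha\|_2^2+\sum_{\alpha\ne\beta}\|B_\alpha G_PB_\beta\|_2^2$. By Lemma~\ref{lem:block-gradient} with $\psi'\equiv2$, each diagonal block equals $2A_\alpha$, so the diagonal part is exactly $4r_P^2$. This can also be checked directly, since $B_\alpha\Gamma_P=t_\alpha B_\alpha$. For the off-diagonal part, $B_\alpha[\Gamma_P,G_P]B_\beta=(t_\alpha-t_\beta)B_\alpha G_PB_\beta$, and the diagonal blocks of the commutator vanish. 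Therefore $c_P^2=\sum_{\alpha\ne\beta}(t_\alpha-t_\beta)^2\|B_\alpha G_PB_\beta\|_2^2$, which lies between $\delta_P^2$ and $\Delta_P^2$ times the off-diagonal sum. When $m=1$ the off-diagonal sum is empty, consistent with the stated convention. Thus
\begin{equation*}
4r_P^2+c_P^2/\Delta_P^2\le\|G_P\|_2^2\le4r_P^2+c_P^2/\delta_P^2 .
\end{equation*}
Substituting into Step 2 yields \eqref{eq:quadratic-value-two-sided} and \eqref{eq:quadratic-score-two-sided}.

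\emph{Step 4 (the equivalence).} The final equivalence follows because both lower bounds vanish exactly when $r_P=c_P=0$, and the gap is zero exactly at global optimality. The computations are routine. The only point needing care is Step 1: confirming that the polynomial extension of $h_1$ lies in $\cF_{\OFL}$ with $J=\RR$, so that the supremum of the unconstrained quadratic lift is the measured $\chi^2$ value. The rest is a careful Hilbert--Schmidt orthogonality bookkeeping across score blocks.
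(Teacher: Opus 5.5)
Your proposal is correct and follows the paper's proof essentially step for step. You complete the square to get the gap $\Tr(\sigma H^2)$ with $G_P=-(\sigma H+H\sigma)$, bound it through the spectrum $\lambda_i+\lambda_j$ of the map $X\mapsto\sigma X+X\sigma$, and split $\|G_P\|_2^2$ into diagonal score blocks (giving $4r_P^2$) and off-diagonal blocks (controlled by $c_P$ via $B_\alpha[\Gamma_P,G_P]B_\beta=(t_\alpha-t_\beta)B_\alpha G_PB_\beta$). Your Step~1 liftability check and the $S^{-1}$ phrasing only make explicit what the paper leaves implicit. In Step~4, note that the direction from $r_P=c_P=0$ to global optimality comes from the upper bound, not the lower bounds.
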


\begin{proof}
The gradient of \eqref{eq:quadratic-lift} is
$G(\Gamma)=2\rho-\sigma\Gamma-\Gamma\sigma$, so
\eqref{eq:quadratic-global-score} is its unique stationarity equation.
Writing $H=\Gamma_P-\Gamma_\star$ and completing the square gives
\begin{equation}\label{eq:quadratic-exact-gap}
 D_{h_1}^{\mathbb M}(\rho\Vert\sigma)-F_{h_1}(P)
 =\Tr(\sigma H^2),
 \qquad G_P=-(\sigma H+H\sigma).
\end{equation}
In an eigenbasis of $\sigma$,
\begin{align}
 4\lambda_{\min}(\sigma)\Tr(\sigma H^2)
 &\le \|G_P\|_2^2
 \le4\lambda_{\max}(\sigma)\Tr(\sigma H^2),
 \label{eq:quadratic-gap-gradient}\\
 2\lambda_{\min}(\sigma)\|H\|_2
 &\le\|G_P\|_2
 \le2\lambda_{\max}(\sigma)\|H\|_2.
 \label{eq:quadratic-score-gradient-bound}
\end{align}
The diagonal score blocks satisfy
\begin{equation}\label{eq:quadratic-diagonal-block}
 B_\alpha G_PB_\alpha
 =2B_\alpha(\rho-t_\alpha\sigma)B_\alpha,
\end{equation}
whereas, for $\alpha\ne\beta$,
\begin{equation}\label{eq:quadratic-off-block}
 B_\alpha[\Gamma_P,G_P]B_\beta
 =(t_\alpha-t_\beta)B_\alpha G_PB_\beta.
\end{equation}
Orthogonality of the matrix blocks therefore gives
\begin{equation}\label{eq:quadratic-residual-gradient}
 4r_P^2+\frac{c_P^2}{\Delta_P^2}
 \le\|G_P\|_2^2
 \le4r_P^2+\frac{c_P^2}{\delta_P^2}.
\end{equation}
Combining \eqref{eq:quadratic-exact-gap}--
\eqref{eq:quadratic-residual-gradient} proves both bounds.
\end{proof}

\begin{corollary}[Separated-score flag bound]
\label{cor:quadratic-flag-bound}
Suppose the scores of $P$ are simple and
$c_P<\lambda_{\min}(\sigma)\delta_P^2$.  Then there is an optimal
rank-one PVM $P^\star=(P_j^\star)_j$, after relabeling, such that
\begin{equation}\label{eq:quadratic-flag-distance}
 d_{\rm ch}(P,P^\star)
 :=\left(\frac12\sum_j\|P_j-P_j^\star\|_2^2\right)^{1/2}
 \le\frac{\sqrt d\,c_P}
 {\lambda_{\min}(\sigma)\delta_P^2}.
\end{equation}
\end{corollary}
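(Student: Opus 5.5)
Since the scores of $P$ are simple, every score block $B_\alpha$ is a single $P_j$. By \eqref{eq:zero-diagonal} every block residual then vanishes, so $r_P=0$. The upper bound in \eqref{eq:quadratic-score-two-sided} of Theorem~\ref{thm:quadratic-robust} therefore reduces to
\begin{equation*}
 \|H\|_2\le\eps:=\frac{c_P}{2\lambda_{\min}(\sigma)\delta_P},\qquad H:=\Gamma_\star-\Gamma_P .
\end{equation*}
The hypothesis $c_P<\lambda_{\min}(\sigma)\delta_P^2$ is exactly the statement $\eps<\delta_P/2$. The plan is to turn this operator-norm control into eigenvector control with a Weyl/Davis--Kahan argument. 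The target PVM $P^\star$ is the eigenbasis of $\Gamma_\star$.

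\emph{Step 1: the eigenbasis of $\Gamma_\star$ is optimal and $\Gamma_{P^\star}=\Gamma_\star$.} Let $(e_j^\star)$ be an orthonormal eigenbasis of $\Gamma_\star$ with eigenvalues $\mu_j$. Taking the $e_j^\star$-diagonal entry of \eqref{eq:quadratic-global-score} gives $p_j^\star=\mu_jq_j^\star$, so $t_j^\star=\mu_j$ and $\Gamma_{P^\star}=\Gamma_\star$. Hence $G_{P^\star}=0$, and Theorem~\ref{thm:block-rigidity} (or the concavity argument in its proof) shows that $P^\star$ is globally POVM-optimal. This step is routine. Its only role is to show that optimality follows once the eigenbasis is well defined, which is settled up to labeling in Step 2.

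\emph{Step 2: matching and simplicity.} By Weyl's inequality, the eigenvalues of $\Gamma_\star=\Gamma_P+H$ can be labeled so that $|\mu_j-t_j|\le\|H\|_{\mathrm{op}}\le\eps<\delta_P/2$. Distinct $t_j$ are at least $\delta_P$ apart, so the $\mu_j$ are simple. This makes $P^\star$ unique up to this relabeling. Moreover, for $k\ne j$,
\begin{equation*}
 |\mu_j-t_k|\ge\delta_P-\eps\ge\delta_P/2 .
\end{equation*}

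\emph{Step 3: Davis--Kahan per vector.} Write $e_j$ for the basis vectors of $P$. From $(\Gamma_P-\mu_j)e_j^\star=-He_j^\star$, expand in the basis $(e_k)$ and discard the $k=j$ component. This gives
\begin{equation*}
 \sin\theta_j=\|(\Id-P_j)e_j^\star\|\le\frac{\|He_j^\star\|}{\min_{k\ne j}|t_k-\mu_j|}\le\frac{2\|H\|_2}{\delta_P}\le\frac{c_P}{\lambda_{\min}(\sigma)\delta_P^2}.
\end{equation*}
For rank-one projectors, $\|P_j-P_j^\star\|_2^2=2\sin^2\theta_j$. Summing over $j$ then gives
\begin{equation*}
 \tfrac12\sum_j\|P_j-P_j^\star\|_2^2\le d\bigl(c_P/(\lambda_{\min}(\sigma)\delta_P^2)\bigr)^2 ,
\end{equation*}
which is \eqref{eq:quadratic-flag-distance}.

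The main obstacle is bookkeeping the gap in Step 3: the separation must be measured against the perturbed eigenvalue $\mu_j$ rather than $t_j$, and it only survives because $\eps<\delta_P/2$. This is where the hypothesis enters, and it turns the factor $1/(\delta_P-\eps)$ into $2/\delta_P$. The cruder bound $\|He_j^\star\|\le\|H\|_2$ is what produces the $\sqrt d$. Summing $\|He_j^\star\|^2$ over $j$ would give the sharper $\|H\|_2^2$ without the factor $d$, but the weaker stated bound is all that is required.
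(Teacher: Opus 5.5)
Your proof is correct and follows essentially the same route as the paper: $r_P=0$ plus the upper bound of Theorem~\ref{thm:quadratic-robust}, Weyl's inequality for matching and simplicity, diagonal entries of \eqref{eq:quadratic-global-score} to get $\Gamma_{P^\star}=\Gamma_\star$ and hence optimality, and an eigenvector perturbation bound with gap $\delta_P/2$, which the paper cites from the standard spectral-subspace perturbation theory and you derive directly. Your closing remark is also right: since $\sum_j\|He_j^\star\|^2=\|H\|_2^2$, the same computation gives the bound without the factor $\sqrt d$.
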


\begin{proof}
Simple scores imply $r_P=0$.  Equations
\eqref{eq:quadratic-score-two-sided} and
\eqref{eq:quadratic-score-gaps} give
$\|\Gamma_P-\Gamma_\star\|_\infty<\delta_P/2$ and
\begin{equation}\label{eq:quadratic-score-operator-bound}
 \|\Gamma_P-\Gamma_\star\|_\infty
 \le\frac{c_P}{2\lambda_{\min}(\sigma)\delta_P}.
\end{equation}
The standard spectral-subspace perturbation bound
\cite[Ch.~VII]{BhatiaMatrix1997} matches the rank-one spectral projectors
of $\Gamma_P$ and $\Gamma_\star$ and gives
$d_{\rm ch}\le2\sqrt d\,
\|\Gamma_P-\Gamma_\star\|_\infty/\delta_P$.
Weyl's inequality also shows that $\Gamma_\star$ has simple spectrum.
Let $P_j^\star$ be its matched spectral projectors, with eigenvalues
$\gamma_j$.  Taking diagonal matrix elements of
\eqref{eq:quadratic-global-score} gives
$\Tr(\rho P_j^\star)=\gamma_j\Tr(\sigma P_j^\star)$, so
$\Gamma_{P^\star}=\Gamma_\star$.  The exact lift and
Theorem~\ref{thm:fff} therefore make $P^\star$ globally optimal.
Substitution proves
\eqref{eq:quadratic-flag-distance}.
\end{proof}

%% file: sections/09-renyi.tex
\section{Measured R\'enyi and relative-entropy landscapes}\label{sec:renyi}

\subsection{Power objectives}

For classical probability distributions with common finite support, define
\begin{equation}\label{eq:renyi-power}
 Q_\alpha(p\Vert q)=\sum_y p_y^\alpha q_y^{1-\alpha},
 \qquad \alpha>0,
 \qquad \alpha\ne1,
\end{equation}
and
\begin{equation}\label{eq:classical-renyi}
 D_\alpha(p\Vert q)=\frac1{\alpha-1}\log Q_\alpha(p\Vert q).
\end{equation}
The measured R\'enyi divergence is the supremum of \eqref{eq:classical-renyi} over measurements. At $\alpha=1$ it converges to the measured relative entropy.

The relevant strictly convex generators are
\begin{equation}\label{eq:renyi-generators}
 f_\alpha(t)=
 \begin{cases}
 1-t^\alpha,&0<\alpha<1,\\
 t\log t,&\alpha=1,\\
 t^\alpha-1,&\alpha>1.
 \end{cases}
\end{equation}
For Definition~\ref{def:ofl}, these formulas are supplied with their
standard global lower-semicontinuous convex extensions: $f_\alpha=+\infty$
on $(-\infty,0)$ when $0<\alpha\le1$, with the usual finite value at zero,
and $f_\alpha(t)=(\max\{t,0\})^\alpha-1$ when $\alpha>1$.  This choice is
part of the data because the effective domain of the Fenchel conjugate
depends on it.
Indeed,
\begin{equation}\label{eq:renyi-second}
 f_\alpha''(t)=
 \begin{cases}
 \alpha(1-\alpha)t^{\alpha-2},&0<\alpha<1,\\
 t^{-1},&\alpha=1,\\
 \alpha(\alpha-1)t^{\alpha-2},&\alpha>1.
 \end{cases}
\end{equation}
For $0<\alpha<1$ one has $F_{f_\alpha}=1-Q_\alpha$, whereas for $\alpha>1$ one has $F_{f_\alpha}=Q_\alpha-1$. Consequently
\begin{equation}
 D_\alpha=
 \begin{cases}
 \dfrac1{\alpha-1}\log(1-F_{f_\alpha}),&0<\alpha<1,\\[2mm]
 \dfrac1{\alpha-1}\log(1+F_{f_\alpha}),&\alpha>1,
 \end{cases}
\end{equation}
and in both regimes the displayed scalar function is strictly increasing on its natural domain.

Fang, Fawzi, and Fawzi verify the operator-Fenchel condition for every
finite order $\alpha>0$ and for relative entropy
\cite{FangFawziFawzi2026}; they also note that the corresponding measured
R\'enyi and relative-entropy variational formulas were already established
by Berta, Fawzi, and Tomamichel
\cite{BertaFawziTomamichel2017}.  Thus each generator in
\eqref{eq:renyi-generators} belongs to the imported liftable class used
here.

\subsection{The landscape theorem}

\begin{theorem}[Measured R\'enyi landscape]\label{thm:renyi-landscape}
Let $\rho,\sigma\in\cD_d^\circ$ and let $\alpha>0$, $\alpha\ne1$, be finite.
For the projective optimization of the measured R\'enyi divergence of order
$\alpha$:
\begin{enumerate}
\item every local maximum over rank-one PVMs is globally optimal over all POVMs;
\item every second-order stationary PVM is globally optimal over all POVMs;
\item every nonglobal critical PVM admits an explicit two-level direction of strictly increasing measured R\'enyi divergence.
\end{enumerate}
The same statements hold for measured relative entropy at $\alpha=1$.
\end{theorem}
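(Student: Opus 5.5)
The plan is to reduce the measured Rényi problem to the $f$-divergence landscape of Theorem~\ref{thm:benign-score-block} applied to $f_\alpha$ from \eqref{eq:renyi-generators}, and then transport every conclusion through a strictly increasing scalar reparametrization. First, I will record that $f_\alpha\in\cF_{\OFL}$ for every finite $\alpha>0$, including $\alpha=1$; this is imported from Fang--Fawzi--Fawzi as stated above, together with the normalization $f_\alpha(1)=0$, the $C^2$ regularity, and $f_\alpha''>0$ from \eqref{eq:renyi-second}. Next, for a rank-one PVM $P$ and faithful states, all $p_j,q_j>0$, so $D_\alpha(p^P_\rho\Vert p^P_\sigma)=\Phi_\alpha(F_{f_\alpha}(P))$. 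Here $\Phi_\alpha(x)=\frac1{\alpha-1}\log(1-x)$ for $0<\alpha<1$ and $\frac1{\alpha-1}\log(1+x)$ for $\alpha>1$. In both cases $\Phi_\alpha$ is smooth and strictly increasing, with $\Phi_\alpha'>0$ on the range of $F_{f_\alpha}$, because $Q_\alpha>0$ there. For $\alpha=1$, the objective is $F_{f_1}$ itself, so $\Phi_1=\mathrm{id}$. By Lemma~\ref{lem:rank-one-refinement} and Theorem~\ref{thm:fff}, the measured Rényi divergence equals $\Phi_\alpha(D^{\mathbb M}_{f_\alpha})$ and is attained among rank-one PVMs. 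Monotonicity of $\Phi_\alpha$ passes suprema over POVMs through.

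Second, I will compare the landscapes of $R:=\Phi_\alpha\circ F_{f_\alpha}$ and $F_{f_\alpha}$ on $\cP_d$. Along any smooth curve $P(\theta)$, $R'=\Phi_\alpha' F'$ and $R''=\Phi_\alpha'' (F')^2+\Phi_\alpha' F''$. Hence the critical points coincide, since $\Phi_\alpha'>0$. At a critical point the Riemannian Hessians satisfy $\mathrm{Hess}\,R=\Phi_\alpha'(F)\,\mathrm{Hess}\,F$, so second-order stationarity is equivalent for the two objectives. Local maxima coincide because $\Phi_\alpha$ is strictly increasing, and global optimality over POVMs also coincides by the previous paragraph.

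Third, I will apply Theorem~\ref{thm:benign-score-block}. A local maximum or second-order stationary PVM for $R$ is one for $F_{f_\alpha}$, hence globally POVM-optimal for $D_{f_\alpha}^{\mathbb M}$, hence for the Rényi divergence; this gives items (1) and (2). For item (3), at a nonglobal critical PVM, Theorem~\ref{thm:strict-ascent} produces the phase-adjusted pair rotation. The chain rule then gives
\[
\frac{\dd^2}{\dd\theta^2}R(P(\theta))\Big|_0=\Phi_\alpha'(F_{f_\alpha}(P))\,4f_\alpha''(t_\alpha)|a_{jk}|^2\Bigl(\frac1{q_j}+\frac1{q_k}\Bigr)>0 .
\]
Since the first derivative vanishes, $R(P(\theta))>R(P)$ for all small $\theta\ne0$, which is the required direction of strict increase. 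The $\alpha=1$ case is identical with $\Phi_1=\mathrm{id}$.

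The main obstacle is not the reduction itself but checking the hypotheses honestly. Specifically, the extended generators must satisfy condition (2) of Definition~\ref{def:ofl}: $\operatorname{int}\operatorname{dom} f_\alpha^*$ must equal the range of $f_\alpha'$. For $0<\alpha<1$ the range of $f_\alpha'=-\alpha t^{\alpha-1}$ is $(-\infty,0)$. For $\alpha>1$ the range is $(0,\infty)$, but the conjugate of $(\max\{t,0\})^\alpha-1$ has domain $[0,\infty)$, so the interiors agree. For $\alpha=1$ the range is all of $\RR$. The $C^2$ diffeomorphic parametrization $\psi$ also needs confirming in each regime. I would first verify these by direct computation of $f_\alpha^*$ and cite Fang--Fawzi--Fawzi for the operator concavity and convexity of $\psi$ and $g$. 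I would also check that $\Phi_\alpha$ is evaluated strictly inside its domain, that is, $1-F_{f_\alpha}=Q_\alpha>0$, which holds for faithful states.
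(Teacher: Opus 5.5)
Your proposal is correct and follows essentially the same route as the paper's proof. Both apply Theorem~\ref{thm:benign-score-block} to $f_\alpha$ and then transfer critical points, local maxima, global optimality and the positive pair-rotation curvature through the strictly increasing scalar map $\Phi_\alpha$, using $\mathrm{Hess}\,R=\Phi_\alpha'(F_{f_\alpha})\,\mathrm{Hess}\,F_{f_\alpha}$ at critical points. Your explicit checks of condition (2) of Definition~\ref{def:ofl} and of $Q_\alpha>0$ fill in details that the paper leaves to the imported Fang--Fawzi--Fawzi verification.
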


\begin{proof}
Apply Theorem~\ref{thm:benign-score-block} to the appropriate generator in \eqref{eq:renyi-generators}. The R\'enyi divergence is a smooth strictly increasing scalar function of $F_{f_\alpha}$. Hence the two objectives have the same critical points and local maxima. At a critical point, the Hessian of the transformed objective is the positive derivative of the scalar transformation times the Hessian of $F_{f_\alpha}$, so the strict-ascent direction is preserved.
\end{proof}

Theorem~\ref{thm:renyi-landscape} goes beyond the imported equality of
optimal projective and POVM values by adding a local-to-global
classification of projective critical points and an explicit
strict-ascent direction.

\subsection{Atomic regimes}

The power generators in \eqref{eq:renyi-generators} are operator convex precisely in the ranges relevant here as follows:
\begin{equation}\label{eq:operator-convex-power-ranges}
 \begin{aligned}
 1-t^\alpha&\ \text{ is operator convex},&&0<\alpha<1,\\
 t^\alpha-1&\ \text{ is operator convex},&&1<\alpha\le2.
 \end{aligned}
\end{equation}
At $\alpha=1$, the normalized family satisfies
\begin{equation}\label{eq:power-relative-limit}
 \frac{t^\alpha-t}{\alpha-1}\longrightarrow t\log t,
\end{equation}
where replacing $1$ by $t$ changes the generator only by an affine term and
hence leaves its probability divergence unchanged.  The limit is operator
convex.  Therefore the strict-ascent curvature has a
positive atomic resolution for every $0<\alpha\le2$.

For $\alpha>2$, the operator-Fenchel lift still gives the benign projective
landscape.  The power generator is outside the operator-convex range, so
the positive $\chi^2_\lambda$ representation used in
Section~\ref{sec:atomic} is unavailable.  This separates two logically
independent structures:
\begin{enumerate}
\item \emph{liftability}, which controls the global geometry of critical points;
\item \emph{positive atomizability}, which resolves the scalar curvature into universal resolvent atoms.
\end{enumerate}
Neither property implies the other in the generality needed for a complete classification.

\subsection{Likelihood-score form of the critical equations}

Although a coordinate-free theorem is preferable, the block condition has a particularly simple interpretation for power divergences. At a critical PVM, group outcomes with the same likelihood ratio $t_\alpha$. Global optimality is equivalent to
\begin{equation}\label{eq:renyi-block}
 B_\alpha\rho B_\alpha=t_\alpha B_\alpha\sigma B_\alpha
 \qquad\text{for every }\alpha.
\end{equation}
Thus every equal-likelihood subspace must be a subspace on which the compressed states are exactly proportional. If the relation fails, any nonzero off-diagonal entry of the defect in \eqref{eq:renyi-block} gives the explicit pair rotation of Theorem~\ref{thm:strict-ascent}.

Because every score map is injective, the score partition is itself the equal-likelihood-ratio partition and is independent of $\alpha$. What changes with the order is the first-order commutator equation and the scalar curvature factor in \eqref{eq:renyi-second}. This suggests comparing R\'enyi orders through common critical PVMs, their shared likelihood blocks, and their optimal values.

%% file: sections/03-shor.tex
\section{The binary orthogonal-measurement assertion as a projectivization corollary}\label{sec:shor}

Let
\begin{equation}\label{eq:binary-ensemble}
 \mathcal E=\{(a,\rho_0),(b,\rho_1)\},
 \qquad a,b>0,\qquad a+b=1,
\end{equation}
be a binary quantum ensemble.  A measurement $M=(M_y)_y$ gives
$p_y=\Tr(\rho_0M_y)$, $q_y=\Tr(\rho_1M_y)$, and
$m_y=ap_y+bq_y$.  Its mutual information is
\begin{equation}\label{eq:mutual-measurement}
 I_{\mathcal E}(M)=\sum_y\left[
 ap_y\log\frac{p_y}{m_y}+bq_y\log\frac{q_y}{m_y}\right].
\end{equation}

The connection with the imported projectivization theorem is immediate.
The scalar objective is the weighted Jensen--Shannon divergence
\cite{Lin1991}.
Wang, Wang, and Chen also recorded the same implication in their
August 2026 posterior-algebra treatment \cite{WangWangChen2026}; the point
of isolating it here is to separate the prior existence theorem from the
stronger landscape statement proved below.

\begin{proposition}[Weighted Jensen--Shannon lift]\label{prop:js-lift}
Define, on $(0,\infty)$,
\begin{equation}\label{eq:js-generator}
 f_{a,b}(t)=at\log t-(at+b)\log(at+b),\qquad t>0.
\end{equation}
Take its lower-semicontinuous convex extension with
$f_{a,b}(0)=-b\log b$ and $f_{a,b}(t)=+\infty$ for $t<0$.  Then
$I_{\mathcal E}(M)=D_{f_{a,b}}(p\Vert q)$ and
\begin{equation}\label{eq:js-second}
 f_{a,b}''(t)=\frac{ab}{t(at+b)}>0.
\end{equation}
Moreover,
\begin{align}
 f_{a,b}^*(z)&=b\log\frac{b}{1-ae^{z/a}},
   &&z<-a\log a,\label{eq:js-conjugate}\\
 \psi(s)&=a\log s,
   &&0<s<1/a,\label{eq:js-psi}\\
 (f_{a,b}^*\!\circ\psi)(s)&=b\log\frac{b}{1-as}.
   \label{eq:js-g}
\end{align}
Thus $\psi$ is operator concave, $f_{a,b}^*\!\circ\psi$ is operator
convex, and $f_{a,b}\in\cF_{\OFL}$ with score map
\begin{equation}\label{eq:js-score}
 \vartheta(t)=\frac{t}{at+b}.
\end{equation}
\end{proposition}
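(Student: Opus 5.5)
The plan is to verify each claim in Proposition~\ref{prop:js-lift} by direct scalar computation, and then check the three conditions of Definition~\ref{def:ofl}.

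\emph{Identification with mutual information.} Set $t_y=p_y/q_y$. Then $m_y=q_y(at_y+b)$, so
\[
ap_y\log\frac{p_y}{m_y}+bq_y\log\frac{q_y}{m_y}
=q_y\bigl[at_y\log t_y-(at_y+b)\log(at_y+b)\bigr]=q_yf_{a,b}(t_y).
\]
Summing over $y$ gives $I_{\mathcal E}(M)=D_{f_{a,b}}(p\Vert q)$. Normalization holds because $f_{a,b}(1)=-\log 1=0$. Outcomes with $q_y=0$ are handled by the lower-semicontinuous boundary convention. Differentiating, $f'_{a,b}(t)=a\log t-a\log(at+b)=a\log\vartheta(t)$, where $\vartheta(t)=t/(at+b)$. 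A second derivative gives $f''_{a,b}(t)=a/t-a^2/(at+b)=ab/(t(at+b))>0$.

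\emph{Conjugate and range.} The map $t\mapsto\vartheta(t)$ increases from $0$ to $1/a$, so the range of $f'_{a,b}$ is $(-\infty,-a\log a)$. For $z$ in this range, the maximizer of $tz-f(t)$ is the unique $t$ with $a\log(t/(at+b))=z$. Setting $s=e^{z/a}$, this gives $t=bs/(1-as)$ and $at+b=b/(1-as)$.

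Substituting into $f^*(z)=tf'(t)-f(t)=(at+b)\log(at+b)-bt\cdot 0\ldots$ is cleanest done as follows. We have
\[
tz-at\log t+(at+b)\log(at+b)=at\log\frac{t}{at+b}-at\log t+(at+b)\log(at+b).
\]
This simplifies to $b\log(at+b)=b\log\frac{b}{1-as}$, which is \eqref{eq:js-conjugate}. For $z\ge -a\log a$, the supremum is $+\infty$, since for large $t$ we have $tz-f(t)\approx t(z+a\log a)$ up to a term $-b\log t$. The extension at $t\le0$ contributes nothing larger. Hence $I_f=(-\infty,-a\log a)$ equals the range of $f'$, which is condition (2).

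With $\psi(s)=a\log s$ on $J=(0,1/a)$, $\psi$ is a $C^2$ diffeomorphism onto $I_f$ with $\psi'=a/s\ne0$. Composing gives $g(s)=b\log b-b\log(1-as)$, and $\vartheta=\psi^{-1}\circ f'$ matches \eqref{eq:js-score}.

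\emph{Operator properties.} This is the only non-routine step. Operator concavity of $\log$ on $(0,\infty)$ is classical, so $\psi$ is operator concave. For $g$, the function $-\log(1-as)$ is $-\log$ composed with the affine map $s\mapsto 1-as$, which sends $J$ into $(0,1)$. Operator convexity of $-\log$ on $(0,\infty)$, the Löwner--Heinz theory, is preserved under affine reparametrization, because $\Gamma\mapsto\Id-a\Gamma$ is affine and preserves the spectral domain. Positive scaling by $b$ and the added constant preserve this as well.

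It remains to note that $f_{a,b}$ is proper, convex and lower semicontinuous on $\RR$. The value at $0$ is $\lim_{t\downarrow0}f(t)=-b\log b$, and the value is $+\infty$ on negatives. This completes the verification that $f_{a,b}\in\cF_{\OFL}$. I expect the main care point to be the domain bookkeeping for $f^*$, namely that $I_f$ exactly equals the range of $f'$. The operator-convexity step itself is short.
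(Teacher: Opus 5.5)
Your proposal is correct and follows the same route as the paper: you expand $q_yf_{a,b}(p_y/q_y)$ directly, solve $z=f_{a,b}'(t)$ to get the conjugate, and use operator concavity of $\log$ together with operator convexity of $s\mapsto-\log(1-as)$ via the affine substitution $\Gamma\mapsto\Id-a\Gamma$; you also check the conditions of Definition~\ref{def:ofl} more explicitly than the paper does. Two small fixes are needed. First, delete the garbled fragment just before your conjugate computation. Second, the large-$t$ expansion is $tz-f_{a,b}(t)=t(z+a\log a)+b\log t+O(1)$, with $+b\log t$ rather than $-b\log t$, and this is what actually gives $f_{a,b}^*(-a\log a)=+\infty$; condition (2) only concerns the interior of $\operatorname{dom}f^*$, so nothing else changes.
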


\begin{proof}
For $t=p_y/q_y$, direct expansion gives
\begin{equation}\label{eq:js-displayed-identity}
 q_yf_{a,b}(t)
 =ap_y\log p_y+bq_y\log q_y-m_y\log m_y.
\end{equation}
Summing proves the divergence identity.  Differentiation gives
$f_{a,b}'(t)=a\log[t/(at+b)]$ and \eqref{eq:js-second}.  Solving
$z=f_{a,b}'(t)$ yields \eqref{eq:js-conjugate}.  Finally, the matrix
logarithm is operator concave and $s\mapsto-\log(1-as)$ is operator
convex, which proves the operator assertions and \eqref{eq:js-score}.
\end{proof}

The following is not a priority claim: it is a direct specialization of
the prior Fang--Fawzi--Fawzi projectivization theorem, and the same
implication was explicitly recorded by Wang, Wang, and Chen
\cite{FangFawziFawzi2026,WangWangChen2026}.

\begin{corollary}[Binary projective sufficiency; prior corollary]\label{cor:shor}
Every finite-dimensional binary quantum ensemble has an
information-optimal von Neumann measurement.
\end{corollary}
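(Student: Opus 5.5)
The plan is to read the corollary off Theorem~\ref{thm:fff} applied to the generator $f_{a,b}$ of Proposition~\ref{prop:js-lift}. There is one complication: the ensemble states need not be faithful, and the supremum in \eqref{eq:projective-f} must be shown to be attained. I would therefore handle the faithful case first and then the general case by a reduction plus compactness.

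\emph{Faithful case.} Assume first that $\rho_0,\rho_1\in\cD_d^\circ$. Proposition~\ref{prop:js-lift} gives $f_{a,b}\in\cF_{\OFL}$ and $I_{\mathcal E}(M)=D_{f_{a,b}}(p^M_{\rho_0}\Vert p^M_{\rho_1})$. Theorem~\ref{thm:fff} then yields
\[
\sup_{M}I_{\mathcal E}(M)=\sup_{P\in\cP_d}I_{\mathcal E}(P).
\]
For faithful states every $q_j=\Tr(\rho_1P_j)$ is positive, so $P\mapsto I_{\mathcal E}(P)$ is continuous on $\cP_d\cong U(d)/U(1)^d$. This space is compact, so the projective supremum is attained by some rank-one PVM. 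By the displayed equality, that PVM is information-optimal over all POVMs.

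\emph{General case.} Here I would not use the boundary-sensitive Fenchel form. The reason is that $I_{\mathcal E}(M)$, written as in \eqref{eq:mutual-measurement} with $0\log0=0$, is a finite sum of continuous functions of the outcome probabilities. It is therefore continuous in $M$ even when $q_y=0$, because the perspective form $ap\log p+bq\log q-m\log m$ from \eqref{eq:js-displayed-identity} is continuous on $[0,1]^2$.

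Restricting to the support $S=\supp(\rho_0+\rho_1)$ loses nothing, since any POVM compressed to $S$ gives the same statistics. So I may assume $\rho_0+\rho_1>0$. I would then regularize with $\rho_i^\eps=(1-\eps)\rho_i+\eps\Id/d$. The mutual-information value is continuous in $(\eps,M)$, uniformly over POVMs with a bounded number of outcomes. By Davies' theorem \cite{Davies1978}, optimal POVMs can be taken with at most $d^2$ rank-one outcomes, so the POVM optimum is continuous in $\eps$.

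For each $\eps$, the faithful case gives an optimal PVM $P^\eps$. Compactness of $\cP_d$ gives a limit point $P^0$. Joint continuity then shows that $I_{\mathcal E}(P^0)$ equals the limit of the optima, which is the POVM optimum at $\eps=0$.

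The main obstacle is the uniformity needed to pass the POVM supremum to the limit $\eps\to0$. I would handle it through the Davies bound, which reduces the POVM supremum to a compact set of rank-one POVMs. On that set the objective is jointly continuous, so the supremum is continuous in $\eps$. Alternatively, I could note that the measured divergence with generator $f_{a,b}$ is lower semicontinuous and that the mixing maps $\rho\mapsto\rho^\eps$ are data-processing compatible in the limit. I would take the compactness route as the primary argument.
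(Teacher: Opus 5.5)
Your proof is correct and has the same skeleton as the paper's. The faithful case comes from Proposition~\ref{prop:js-lift}, Theorem~\ref{thm:fff}, and compactness of $\cP_d$; the general case uses the regularization $\rho_x^{(\eps)}=(1-\eps)\rho_x+\eps\Id/d$ and a limit, relying on continuity of $I_{\mathcal E}(M)$ in the outcome probabilities with $0\log0=0$.

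The only real difference is how you pass the POVM supremum to the limit. You invoke Davies' theorem to make $\eps\mapsto I_{\mathrm{acc}}(\mathcal E_\eps)$ continuous. The paper needs only the one-sided bound $I_{\mathrm{acc}}(\mathcal E_\eps)\ge I_{\mathcal E_\eps}(M)\to I_{\mathcal E}(M)$ for each fixed finite POVM $M$. Combined with $\lim_{\eps\downarrow0}I_{\mathrm{acc}}(\mathcal E_\eps)=\max_{P\in\cP_d}I_{\mathcal E}(P)$, this gives $\max_{P}I_{\mathcal E}(P)\ge I_{\mathrm{acc}}(\mathcal E)$. That limit follows from uniform continuity on the flag manifold, or equivalently from your limit point $P^0$. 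The reverse inequality is trivial, since PVMs are POVMs.

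So your Davies step can be dropped, and so can the preliminary compression to $\supp(\rho_0+\rho_1)$, because the regularization already makes the states faithful. Your ``alternative'' via lower semicontinuity is exactly the paper's route, and it needs no data-processing input. What the Davies route buys is the extra fact that accessible information is continuous along the regularization, which is not needed here and costs a nontrivial outcome-count theorem.
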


\begin{proof}
Proposition~\ref{prop:js-lift} and Theorem~\ref{thm:fff} give the displayed
chain
\begin{equation}\label{eq:shor-two-line}
 \sup_{M\in\POVM}I_{\mathcal E}(M)
 =D_{f_{a,b}}^{\mathbb M}(\rho_0\Vert\rho_1)
 =D_{f_{a,b}}^{\mathrm{proj}}(\rho_0\Vert\rho_1).
\end{equation}
For faithful states the projective supremum is attained by compactness of
the flag manifold.  For arbitrary states, apply the faithful statement to
$\rho_x^{(\eps)}=(1-\eps)\rho_x+\eps\Id/d$.  Mutual information is jointly
continuous in $\eps$ and the PVM, uniformly on the compact flag manifold;
hence
\begin{equation}\label{eq:shor-limit-upper}
 \lim_{\eps\downarrow0}I_{\mathrm{acc}}(\mathcal E_\eps)
 =\max_{P\in\cP_d}I_{\mathcal E}(P).
\end{equation}
For every fixed POVM $M$,
$I_{\mathrm{acc}}(\mathcal E_\eps)\ge I_{\mathcal E_\eps}(M)
\to I_{\mathcal E}(M)$.  Taking the limit and then the supremum over finite
POVMs shows that the maximum on the right of
\eqref{eq:shor-limit-upper} equals $I_{\mathrm{acc}}(\mathcal E)$.
\end{proof}

\subsection{Historical note}

Fang, Fawzi, and Fawzi first posted their projectivization theorem on
February 11, 2025, and it subsequently appeared in \emph{IEEE Transactions
on Information Theory} \textbf{72}(3), 1751--1760
\cite{FangFawziFawzi2026}.  Their paper does not spell out the weighted
Jensen--Shannon specialization.  Shor's 2000 report records Fuchs--Peres
numerical evidence for binary ensembles while refuting Levitin's broader
conjecture; the surviving binary assertion was associated with Shor in
later literature \cite{Levitin1995,Shor2000}.  Thai and Dall'Arno still described the
arbitrary-dimensional assertion as open in their December 2025 preprint,
later published in 2026 \cite{ThaiDallArno2026}.  Wang, Wang, and Chen
posted a posterior-algebra proof and structural analysis on August 13,
2026 \cite{WangWangChen2026}; their preprint explicitly credits the same
Fang--Fawzi--Fawzi implication and makes no priority claim for projective
sufficiency.  Accordingly, we make no priority claim for the existence
statement.  The contribution developed below is the critical-point and
curvature theory.

%% file: sections/06-accessible.tex
\section{Binary Accessible-Information Landscapes}\label{sec:accessible}

Section~\ref{sec:shor} identified binary mutual information with the
measured $f_{a,b}$-divergence and verified the imported projectivization
hypotheses.  We now expose the associated posterior geometry and the
strict-ascent curvature supplied by the present critical-point theory.

\subsection{Posterior effect and concave lift}

For a PVM $P$, define the posterior probability of the first letter by
\begin{equation}\label{eq:posterior-values}
 r_j=\frac{ap_j}{ap_j+bq_j}.
\end{equation}
The score operator and the \emph{posterior effect} are
\begin{equation}\label{eq:posterior-effect}
 \Gamma_P=\sum_j\frac{p_j/q_j}{a(p_j/q_j)+b}P_j,
 \qquad R_P=a\Gamma_P=\sum_jr_jP_j.
\end{equation}
After the change of variables $R=a\Gamma$, the concave lift is
\begin{equation}\label{eq:posterior-functional}
 \Phi_{\mathcal E}(R)
 =a\Tr\rho_0\log\frac Ra
 +b\Tr\rho_1\log\frac{\Id-R}{b},
 \qquad 0<R<\Id.
\end{equation}
Consequently,
\begin{equation}\label{eq:posterior-variational}
 I_{\mathrm{acc}}(\mathcal E)=\sup_{0<R<\Id}\Phi_{\mathcal E}(R),
 \qquad I_{\mathcal E}(P)=\Phi_{\mathcal E}(R_P).
\end{equation}
The upper-semicontinuous support convention extends
\eqref{eq:posterior-functional} to boundary effects, assigning value
$-\infty$ when a logarithm is singular on the support of its weighting
state.

\subsection{Triangular atomic density}

\begin{proposition}[Triangular Green kernel]\label{prop:triangular-weight}
The generator $f_{a,b}$ is positively atomizable with density
\begin{equation}\label{eq:triangular-weight}
 w_{a,b}(\lambda)=
 \begin{cases}
 a\lambda,&0\le\lambda\le b,\\
 b(1-\lambda),&b\le\lambda\le1.
 \end{cases}
\end{equation}
More precisely,
\begin{equation}\label{eq:js-atomic}
 f_{a,b}(t)=\int_0^1w_{a,b}(\lambda)
 \frac{(t-1)^2}{(1-\lambda)t+\lambda}\,\dd\lambda.
\end{equation}
\end{proposition}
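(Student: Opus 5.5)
The plan is to verify the identity through second derivatives, with the affine ambiguity fixed separately. Write $W(t)$ for the right-hand side of \eqref{eq:js-atomic}. Since $w_{a,b}$ is bounded and nonnegative on $[0,1]$, it defines a finite positive measure. The denominator $(1-\lambda)t+\lambda$ is bounded below by $\min\{t,1\}$ uniformly in $\lambda$. Hence, exactly as in Lemma~\ref{lem:curvature-transform}, we may differentiate under the integral on compact subintervals of $(0,\infty)$:
\begin{equation*}
 W''(t)=2\int_0^1\frac{w_{a,b}(\lambda)}{((1-\lambda)t+\lambda)^3}\,\dd\lambda .
\end{equation*}
Both $W$ and $f_{a,b}$ vanish at $t=1$. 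Their first derivatives there also agree: $W'(1)=0$ because $h_\lambda'(1)=0$ for every $\lambda$, while $f_{a,b}'(1)=a\log 1=0$ from the formula in the proof of Proposition~\ref{prop:js-lift}. So it suffices to show $W''=f_{a,b}''=ab/(t(at+b))$, as given in \eqref{eq:js-second}. No affine correction is then needed, which matches the "more precisely" exact form.

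The main step is the explicit integral. Put $u=(1-\lambda)t+\lambda=t+\lambda(1-t)$, so that $\dd u=(1-t)\,\dd\lambda$, and assume $t\ne1$; the case $t=1$ follows by continuity. Split the integral at $\lambda=b$. The breakpoint corresponds to $u_b=t+b(1-t)=at+b$, which is exactly the denominator appearing in $f''_{a,b}$, so this split is the natural one.

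On $[0,b]$ the integrand is $a\lambda u^{-3}$ with $\lambda=(u-t)/(1-t)$. On $[b,1]$ it is $b(1-\lambda)u^{-3}$ with $1-\lambda=(1-u)/(1-t)$. Each piece integrates to elementary terms in $1/u$ and $1/u^2$, with $u$ running from $t$ to $u_b$ and from $u_b$ to $1$. For the first piece,
\begin{equation*}
 \int_t^{u_b}\frac{u-t}{u^3}\,\dd u=\frac{(u_b-t)^2}{2tu_b^2}=\frac{b^2(1-t)^2}{2t(at+b)^2},
\end{equation*}
so the first contribution to $W''$ is $2a\cdot b^2/(2t(at+b)^2)=ab^2/(t(at+b)^2)$, after the factors $(1-t)^{-2}$ cancel. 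The second piece gives
\begin{equation*}
 \int_{u_b}^1\frac{1-u}{u^3}\,\dd u=\frac{(1-u_b)^2}{2u_b^2}=\frac{a^2(1-t)^2}{2(at+b)^2},
\end{equation*}
so its contribution is $a^2b/(at+b)^2$.

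Adding the two contributions,
\begin{equation*}
 \frac{ab(b+at)}{t(at+b)^2}=\frac{ab}{t(at+b)},
\end{equation*}
which equals $f_{a,b}''$. The sign bookkeeping when $t>1$, where the $u$-limits reverse, is the only point I expect to need care. It is harmless, because both pieces carry the factor $(1-t)^{-2}>0$ after substitution. Positivity of $w_{a,b}$ then gives positive atomizability in the sense of Definition~\ref{def:atomizable}.
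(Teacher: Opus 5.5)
Your proof is correct and follows essentially the same route as the paper: you match the value and first derivative at $t=1$, then verify $W''=f_{a,b}''$ using the substitution $u=t+(1-t)\lambda$ and splitting the integral at $\lambda=b$, which is exactly the evaluation in Appendix~\ref{app:triangular}. Your factored closed forms $(u_b-t)^2/(2tu_b^2)$ and $(1-u_b)^2/(2u_b^2)$ are a tidier version of the paper's unsimplified expression \eqref{eq:triangular-unsimplified}, and they hold as oriented integrals for $t>1$ too, so the orientation issue you flag is indeed harmless.
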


\begin{proof}
Both sides of \eqref{eq:js-atomic} vanish with their first derivative at
$t=1$.  Equations~\eqref{eq:atom-second} and
\eqref{eq:triangular-weight} give
\begin{align}
 2\!\int_0^1\!\frac{w_{a,b}(\lambda)\,\dd\lambda}
 {((1-\lambda)t+\lambda)^3}
 &={}2a\!\int_0^b\!\frac{\lambda\,\dd\lambda}
 {((1-\lambda)t+\lambda)^3}\notag\\
 &\quad+2b\!\int_b^1\!\frac{(1-\lambda)\,\dd\lambda}
 {((1-\lambda)t+\lambda)^3}\notag\\
 &=\frac{ab}{t(at+b)},
\end{align}
which is \eqref{eq:js-second}.  Appendix~\ref{app:triangular} records an
elementary antiderivative check.
\end{proof}

The density is the Green kernel for the Jensen gap at prior $b$.  It is
continuous, nonnegative, piecewise linear, and vanishes at both endpoints;
the prior fixes the apex.

\subsection{From projective existence to a landscape criterion}

\begin{theorem}[Benign binary-information landscape]\label{thm:binary-landscape}
Let $\rho_0,\rho_1$ be faithful finite-dimensional states and let $a,b>0$.
On the manifold of rank-one PVMs:
\begin{enumerate}
\item every local maximum of $I_{\mathcal E}$ is globally optimal over all
POVMs;
\item every second-order stationary PVM is globally optimal over all POVMs;
\item every nonglobal critical PVM has a two-level rotation for which
\begin{equation}\label{eq:binary-atomic-curvature}
 I_{\mathcal E}''(0)
 =8|a_{jk}|^2\left(\frac1{q_j}+\frac1{q_k}\right)
 \mathcal K_{a,b}(t_\alpha)>0,
\end{equation}
where $a_{jk}=\bra j(\rho_0-t_\alpha\rho_1)\ket k$ lies in a
degenerate posterior block and
\begin{equation}\label{eq:binary-kernel}
 \mathcal K_{a,b}(t)=\int_0^1
 \frac{w_{a,b}(\lambda)\,\dd\lambda}
 {((1-\lambda)t+\lambda)^3}.
\end{equation}
\end{enumerate}
\end{theorem}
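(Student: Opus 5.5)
The plan is to specialize Theorem~\ref{thm:benign-score-block} to the weighted Jensen--Shannon generator $f_{a,b}$ with $\rho=\rho_0$ and $\sigma=\rho_1$. Proposition~\ref{prop:js-lift} shows that $f_{a,b}\in\cF_{\OFL}$ and that $I_{\mathcal E}(P)=F_{f_{a,b}}(P)$ for every rank-one PVM $P$. Since $\rho_0,\rho_1$ are faithful, all the hypotheses of the landscape theorem hold. Items (1) and (2) are then the final sentence of Theorem~\ref{thm:benign-score-block}, with ``globally optimal over all POVMs'' read through Corollary~\ref{cor:shor}, equivalently \eqref{eq:shor-two-line}. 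In that reading the POVM supremum of $D_{f_{a,b}}$ is $I_{\mathrm{acc}}(\mathcal E)$.

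For item (3), let $P$ be a nonglobal critical PVM. Theorem~\ref{thm:strict-ascent} gives a score block $B_\alpha$, indices $j\ne k$ inside it, and a phase $\phi$ with $a_{jk}=\bra j(\rho_0-t_\alpha\rho_1)\ket k\ne0$. The associated rotation has second variation $4f_{a,b}''(t_\alpha)|a_{jk}|^2(1/q_j+1/q_k)>0$. I will also check that $B_\alpha$ is a degenerate posterior block. By \eqref{eq:js-score} the score $\vartheta(t)=t/(at+b)$ is injective, and $R_P=a\Gamma_P$, so the score blocks coincide with the posterior blocks of $R_P$ in \eqref{eq:posterior-effect}. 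The block is degenerate because its residual has zero diagonal (\eqref{eq:zero-diagonal}) but is nonzero, which forces rank at least two.

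The remaining step is to rewrite $f''_{a,b}$ in kernel form. Proposition~\ref{prop:triangular-weight} says $f_{a,b}$ is positively atomizable with $\nu_f(\dd\lambda)=w_{a,b}(\lambda)\dd\lambda$, a finite positive measure since $w_{a,b}$ is bounded. Lemma~\ref{lem:curvature-transform} then gives $f_{a,b}''(t)=2\mathcal K_{a,b}(t)$. Equivalently, I can apply Theorem~\ref{thm:atomic-curvature} directly, which yields \eqref{eq:binary-atomic-curvature}. Positivity follows because $\mathcal K_{a,b}(t_\alpha)=ab/(2t_\alpha(at_\alpha+b))>0$ by \eqref{eq:js-second}.

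I expect no step to be a genuine obstacle. The only points needing care are bookkeeping: the affine normalization in the atomic representation, where the added affine term has zero second derivative and so does not affect $f''$, and confirming that $t_\alpha=\vartheta^{-1}(s_\alpha)$ is exactly the common likelihood ratio $p_j/q_j$ on the block. Both follow from Definition~\ref{def:ofl} and \eqref{eq:block-likelihood}.
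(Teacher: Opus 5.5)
Your proposal is correct and is essentially the paper's proof: the paper's argument is the single observation that Propositions~\ref{prop:js-lift} and~\ref{prop:triangular-weight} supply the hypotheses for applying Theorem~\ref{thm:benign-score-block} and Theorem~\ref{thm:atomic-curvature} directly. Your extra checks are accurate bookkeeping that the paper leaves implicit: the escape block is a degenerate posterior block because $R_P=a\Gamma_P$ and a nonzero zero-diagonal residual forces rank at least two, and $\mathcal K_{a,b}(t)=ab/(2t(at+b))>0$ follows from $f_{a,b}''=2\mathcal K_{a,b}$.
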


\begin{proof}
Propositions~\ref{prop:js-lift} and \ref{prop:triangular-weight} permit
direct application of Theorem~\ref{thm:benign-score-block} and
Theorem~\ref{thm:atomic-curvature}.
\end{proof}

\begin{corollary}[Interior PVMs for nonfaithful states]
\label{cor:binary-interior-extension}
The critical-point classification used in
Theorem~\ref{thm:binary-landscape} remains valid for arbitrary
$\rho_0,\rho_1\ge0$ at any PVM for which every $p_j,q_j$ is positive.
\end{corollary}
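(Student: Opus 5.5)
The plan is to run the proof of Theorem~\ref{thm:benign-score-block} again at the given PVM $P$, using only local information at $P$ together with one global inequality that holds without faithfulness. Fix $P$ with all $p_j,q_j>0$. Then $t_j\in(0,\infty)$ and $s_j=\vartheta(t_j)\in J$, so $\Gamma_P\in\Omega_J$. Lemma~\ref{lem:exact-score-lift} is purely outcomewise, so the exact lift $I_{\mathcal E}(P)=\Phi_{\mathcal E}(R_P)$ still holds. The perspective terms $q_jf_{a,b}(p_j/q_j)$ are smooth near $P$, because positivity of $p_j,q_j$ is an open condition on the flag manifold. Hence $I_{\mathcal E}$ is smooth in a neighbourhood of $P$.

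The first-variation formula (Theorem~\ref{thm:first-variation}), the block gradient identity (Lemma~\ref{lem:block-gradient}), the zero-diagonal property \eqref{eq:zero-diagonal}, the pair-rotation Lemma~\ref{lem:pair-rotation} and Theorem~\ref{thm:strict-ascent} are all computations at the point $P$. They use only $p_j,q_j>0$ and the smoothness of $\psi,g$ at the spectrum of $\Gamma_P$, so they carry over verbatim. The same applies to the atomic formula of Theorem~\ref{thm:atomic-curvature}. This gives both the criticality criterion $[\Gamma_P,G_P]=0$ and the explicit positive-curvature pair rotation whenever some block residual is nonzero.

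The step that needs care is the global one: showing that $G_P=0$ implies POVM optimality, and the converse. In Theorem~\ref{thm:block-rigidity} this used Theorem~\ref{thm:fff}, which assumes faithful states. I would replace it with two facts.
\begin{enumerate}
\item The inequality $I_{\mathrm{acc}}(\mathcal E)\le\sup_{0<R<\Id}\Phi_{\mathcal E}(R)$, taken from the variational formula \eqref{eq:posterior-variational}. This extends to arbitrary states by the $\eps$-regularization of Corollary~\ref{cor:shor}.
\item Concavity of $\Phi_{\mathcal E}$ on $0<R<\Id$. Concavity needs only $\rho_0,\rho_1\ge0$, because operator concavity of $\log$ is preserved after pairing with any positive operator.
\end{enumerate}

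For the direction "$G_P=0$ implies optimal", the tangent inequality gives $\Phi_{\mathcal E}(R)\le\Phi_{\mathcal E}(R_P)=I_{\mathcal E}(P)$ for all $0<R<\Id$. The supremum of the lift therefore equals $I_{\mathcal E}(P)$, and combined with fact (1) this shows $P$ is optimal.

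For the converse, suppose $P$ is optimal. The sup-equality in \eqref{eq:posterior-variational} for arbitrary states follows from Corollary~\ref{cor:shor}: the supremum of the lift is at least every projective value, and at most $I_{\mathrm{acc}}$ by the regularized inequality. Hence $R_P$ is an interior maximizer of a differentiable concave function, and its gradient vanishes. Criticality of $P$ and the block-residual conditions then follow as in the original proof.

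The main obstacle is justifying fact (1) for nonfaithful states. I would first try to prove it as a limit. Take $\Phi_{\mathcal E_\eps}\to\Phi_{\mathcal E}$ pointwise on $0<R<\Id$. Then use the faithful-case formula $I_{\mathrm{acc}}(\mathcal E_\eps)=\sup_R\Phi_{\mathcal E_\eps}(R)$ together with lower semicontinuity of $I_{\mathrm{acc}}$ along $\eps\downarrow0$, which is established in Corollary~\ref{cor:shor}.

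Only the upper bound $I_{\mathrm{acc}}(\mathcal E)\le\sup\Phi_{\mathcal E}$ is actually needed in the direction "$G_P=0$ implies optimal". For that bound I would use a more direct route: the scalar Fenchel inequality $q f(p/q)\ge p\psi(s)-q g(s)$, applied outcomewise, combined with operator Jensen applied directly to arbitrary POVMs. This avoids limits altogether, since operator Jensen does not need faithfulness once $R$ is interior.
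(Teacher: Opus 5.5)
Your local half is the same as the paper's: the first- and second-variation formulas, the block identity, and the pair rotation use only $p_j,q_j>0$ at $P$. The routes differ in the global step. The paper re-proves nothing there. It observes that $\psi(s)=a\log s$ parametrizes the whole effective domain of $f_{a,b}^*$, so the original (non-smooth) Fang--Fawzi--Fawzi theorem already gives $D_{f_{a,b}}^{\mathbb M}=D_{f_{a,b}}^{\mathrm{proj}}=\sup\cL$ for arbitrary positive semidefinite states. The proof of Theorem~\ref{thm:block-rigidity} then runs verbatim.

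You avoid that import and rebuild only what the rigidity proof uses: concavity of $\Phi_{\mathcal E}$ for positive semidefinite weights, the bound $I_{\mathrm{acc}}(\mathcal E)\le\sup_{0<R<\Id}\Phi_{\mathcal E}$ for ``$G_P=0$ implies optimal'', and the bound $\sup_{0<R<\Id}\Phi_{\mathcal E}\le I_{\mathrm{acc}}(\mathcal E)$ for the converse. This is more self-contained and shows that only operator concavity of $\log$ is needed. The two bounds are exactly the two halves of the proof of \eqref{eq:joint-posterior-program} in Theorem~\ref{thm:prior-concavity}, which the paper carries out for arbitrary states. The paper's route is shorter but depends on checking the full-domain hypothesis of the external theorem.

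Your justifications of the two bounds, however, are attached to the wrong directions.

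The $\eps$-limit sketch for fact (1) does not close. After lower semicontinuity you have $I_{\mathrm{acc}}(\mathcal E)\le\liminf_\eps\sup\Phi_{\mathcal E_\eps}$, and you need $\liminf_\eps\sup\Phi_{\mathcal E_\eps}\le\sup\Phi_{\mathcal E}$. Pointwise convergence $\Phi_{\mathcal E_\eps}\to\Phi_{\mathcal E}$ gives only the reverse inequality. Combined with \eqref{eq:shor-limit-upper}, pointwise convergence proves the \emph{converse} bound $\Phi_{\mathcal E}(R)\le I_{\mathrm{acc}}(\mathcal E)$ instead. For fact (1) you need a domination, and either of these works:
\begin{enumerate}
\item $\Phi$ is affine in the state pair, and the $\Id/d$ part contributes $a\log(r/a)+b\log((1-r)/b)\le0$ per eigenvalue. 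Hence $\Phi_{\mathcal E_\eps}\le(1-\eps)\Phi_{\mathcal E}$, and $\sup\Phi_{\mathcal E}\ge\Phi_{\mathcal E}(a\Id)=0$.
\item Use your own remark in the converse paragraph: every PVM value is a limit of interior lift values. Together with Corollary~\ref{cor:shor}, this gives fact (1).
\end{enumerate}

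The direct route has the same problem. The inequality $qf(p/q)\ge p\psi(s)-qg(s)$ cannot be chained with operator Jensen, because the two point in opposite directions. What you need is Fenchel \emph{equality} at the posterior $r_y=ap_y/m_y$. When $r_y\in\{0,1\}$ you also need the $\delta$-regularization of Theorem~\ref{thm:prior-concavity}, so limits are not avoided entirely.

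The inequality you wrote is what proves the converse bound directly. Applied in an eigenbasis $Q_R$ of $R$, it gives $\Phi_{\mathcal E}(R)\le I_{\mathcal E}(Q_R)\le I_{\mathrm{acc}}(\mathcal E)$, as in \eqref{eq:spectral-posterior-remainder}, for any positive semidefinite states. Once the directions are sorted out, the argument is complete.
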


\begin{proof}
The first- and second-variation arguments are local and require only the
positivity of the displayed outcome probabilities.  For the weighted
Jensen--Shannon generator, \eqref{eq:js-psi} parametrizes the full effective
domain of \eqref{eq:js-conjugate}; hence the original
Fang--Fawzi--Fawzi theorem supplies the POVM/projective equality for
arbitrary positive-semidefinite states.  The proof of
Theorem~\ref{thm:block-rigidity} then applies verbatim at the stated PVM.
\end{proof}

%% file: sections/07-qubit.tex
\section{Binary qubit ensembles and Keil's stationary-point conjecture}\label{sec:qubit}

\subsection{Bloch-plane reduction}

Using the standard Bloch representation and Pauli vector
\cite[Ch.~2]{NielsenChuang2010}, let $\cH=\CC^2$ and write
\begin{equation}\label{eq:bloch-states}
 \rho_x=\frac12(\Id+r_x\cdot\bm\sigma),
 \qquad r_x\in\RR^3,
 \qquad |r_x|\le1,
 \qquad x=0,1.
\end{equation}
A rank-one PVM is specified by an unoriented unit axis $n\in\mathbb S^2/\{\pm1\}$,
\begin{equation}\label{eq:qubit-pvm}
 P_\pm(n)=\frac12(\Id\pm n\cdot\bm\sigma).
\end{equation}
For the plus outcome,
\begin{equation}\label{eq:qubit-probs}
 p(n)=\frac{1+r_0\cdot n}{2},
 \qquad
 q(n)=\frac{1+r_1\cdot n}{2}.
\end{equation}
The minus probabilities are $1-p$ and $1-q$. Let
\begin{equation}
 \Pi=\spanop_{\RR}\{r_0,r_1\}.
\end{equation}
If $\Pi$ is one dimensional, fix any two-dimensional plane containing it. The \emph{state-plane projective circle} is the set of unoriented axes $n$ in that plane.

Let
\begin{equation}\label{eq:binary-entropy}
 h(u)=-u\log u-(1-u)\log(1-u).
\end{equation}
Then
\begin{equation}\label{eq:qubit-mutual}
 I(n)=h(ap(n)+bq(n))-ah(p(n))-bh(q(n)).
\end{equation}

\begin{lemma}[In-plane stationarity is full stationarity]\label{lem:plane-stationary}
Suppose all four conditional probabilities are positive. If $n$ is stationary for \eqref{eq:qubit-mutual} along the state-plane projective circle, then the corresponding PVM is stationary on the full qubit PVM manifold.
\end{lemma}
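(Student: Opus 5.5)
The idea is that $I$ depends on the axis $n$ only through the two scalars $r_0\cdot n$ and $r_1\cdot n$. Any derivative along the sphere therefore factors through the linear functionals $v\mapsto r_x\cdot v$ on the tangent plane at $n$. I would work on the sphere $\mathbb S^2$, which is a double cover of the qubit PVM manifold. This is legitimate because \eqref{eq:qubit-mutual} is invariant under $n\mapsto-n$, since that map swaps the two outcomes. Stationarity on the PVM manifold is then equivalent to stationarity of $n\mapsto I(n)$ on $\mathbb S^2$, because the map from unitaries to axes is a submersion.

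First I compute the Euclidean gradient of $I(n)$ as a function of $n\in\RR^3$. Since all four conditional probabilities are positive, $p,q\in(0,1)$, so $h$ is smooth there and $I$ is smooth near $n$. By \eqref{eq:qubit-probs} and the chain rule,
\begin{equation*}
 \nabla_n I=\tfrac12\bigl(\partial_pI\bigr)r_0+\tfrac12\bigl(\partial_qI\bigr)r_1,
\end{equation*}
where $\partial_pI=a h'(m)-a h'(p)$ and $\partial_qI=bh'(m)-bh'(q)$, with $m=ap+bq$. So $\nabla_nI$ lies in $\Pi$. The Riemannian gradient on the sphere is the tangential projection $\nabla_nI-(n\cdot\nabla_nI)n$.

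Now assume $n$ lies in the fixed plane $\Pi'\supseteq\Pi$, which is $\Pi$ itself when $\dim\Pi=2$. Then $\nabla_nI\in\Pi'$ and $n\in\Pi'$, so the tangential gradient also lies in $\Pi'$. It is moreover orthogonal to $n$, so it is a multiple of the in-plane tangent direction $u\perp n$, $u\in\Pi'$. The derivative along the state-plane circle is exactly $u\cdot\nabla_nI$, and by hypothesis it vanishes. Hence the whole tangential gradient vanishes. Equivalently, the derivative in the out-of-plane tangent direction $w\perp\Pi'$ is automatically zero, since $w\cdot r_0=w\cdot r_1=0$.

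The main point needing care is the degenerate case $\dim\Pi\le1$ together with the identification of sphere stationarity with PVM stationarity. When $\dim\Pi\le1$, $\Pi'$ is an arbitrary plane containing $\Pi$. The argument above still works because it only used $\nabla_nI\in\Pi\subseteq\Pi'$. If $\Pi=\{0\}$, then $I$ is constant and the claim is trivial. For the manifold identification, I would note that a variation $P_\pm(\eps)=e^{i\eps K}P_\pm e^{-i\eps K}$ moves the axis by a rotation $n\mapsto n+\eps\,k\times n$, where $K=\tfrac12 k\cdot\bm\sigma$ up to a multiple of the identity. Every tangent vector to the sphere at $n$ arises this way. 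So vanishing of all first variations in Theorem~\ref{thm:first-variation} is exactly vanishing of the spherical gradient, which completes the argument.
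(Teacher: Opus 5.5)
Your proof is correct and is essentially the paper's argument. The Euclidean gradient of $I$ lies in the state plane, so at an in-plane axis the spherical gradient is a multiple of the single in-plane tangent direction and vanishes with the in-plane derivative, while out-of-plane directions contribute nothing. Your explicit gradient formula, your treatment of the degenerate $\dim\Pi\le1$ case, and your matching of unitary variations to spherical tangent vectors make explicit what the paper leaves implicit. In that last step there is a harmless sign slip: $K=\tfrac12k\cdot\bm\sigma$ actually gives $\dot n=n\times k$, not $k\times n$.
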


\begin{proof}
The function $I(n)$ depends on $n$ only through the two scalars $r_0\cdot n$ and $r_1\cdot n$. Its Euclidean gradient therefore lies in the state plane. At $n$ in that plane, the Riemannian gradient on the sphere is also in the plane and orthogonal to $n$. The in-plane tangent space is one dimensional and contains that vector. Vanishing of the in-plane derivative thus forces the full Riemannian gradient to vanish. Every out-of-plane tangent has zero derivative automatically.
\end{proof}

\subsection{Interior critical points}

\begin{proposition}[Optimal-or-independent interior dichotomy]\label{prop:qubit-interior}
Let $\rho_0\ne\rho_1$, and let $n$ be an in-plane stationary PVM for which all conditional probabilities are positive. Then exactly one of the following holds:
\begin{enumerate}
\item the PVM is globally information optimal over all POVMs;
\item $p(n)=q(n)$, and hence $I(n)=0$.
\end{enumerate}
\end{proposition}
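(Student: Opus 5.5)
The plan is to apply the score-block rigidity criterion (Theorem~\ref{thm:block-rigidity}, via Corollary~\ref{cor:binary-interior-extension} so that faithfulness is not needed) to the qubit PVM $P=(P_+(n),P_-(n))$. This is a two-outcome rank-one PVM, so there are only two score blocks to consider. First, Lemma~\ref{lem:plane-stationary} upgrades in-plane stationarity to full stationarity on the qubit flag manifold, so $P$ is a critical PVM in the sense of Theorem~\ref{thm:first-variation}. Because all four conditional probabilities are positive, the likelihood ratios $t_+=p/q$ and $t_-=(1-p)/(1-q)$ are finite and positive, and the interior version of the theory applies.

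The argument then splits on whether $t_+=t_-$.

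\emph{Case $t_+\ne t_-$.} Both scores are simple, so each block $B_\alpha$ is rank one. The residual $B_\alpha(\rho_0-t_\alpha\rho_1)B_\alpha$ vanishes by \eqref{eq:zero-diagonal}. Corollary~\ref{cor:simple-score}, in its interior form from Corollary~\ref{cor:binary-interior-extension}, then shows that $P$ is globally optimal over all POVMs. This is alternative~(1).

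\emph{Case $t_+=t_-$.} From $p/q=(1-p)/(1-q)$ we get $p=q$, so the outcome distributions coincide and $I(n)=0$, which is alternative~(2). In this case the single score block is $B=\Id$ with $t_\alpha=1$, and the residual is $\rho_0-\rho_1\ne0$.

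It remains to show that the alternatives are exclusive. If both held, rigidity (Theorem~\ref{thm:block-rigidity}, item~(2)) would force the residual $\rho_0-\rho_1$ to vanish, contradicting $\rho_0\ne\rho_1$. An independent check: if $\rho_0\ne\rho_1$, the Helstrom measurement already gives strictly positive mutual information, so the global optimum is positive and cannot equal $I(n)=0$. With exclusivity established, exactly one alternative holds.

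The main obstacle I expect is justifying the rigidity equivalence without faithfulness, that is, checking that Corollary~\ref{cor:binary-interior-extension} legitimately covers Theorem~\ref{thm:block-rigidity}, including the POVM/projective equality for singular states. I would rely on that corollary as stated. For the exclusivity step I would prefer the elementary positivity argument above, since it does not depend on any boundary subtleties.
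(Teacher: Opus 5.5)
Your proposal is correct and follows the paper's proof essentially step for step. The shared steps are: Lemma~\ref{lem:plane-stationary} for full criticality; a split on whether the two likelihood ratios (equivalently, by injectivity of $\vartheta$, the posterior scores) coincide; automatic vanishing of the rank-one residuals together with Corollary~\ref{cor:binary-interior-extension} for global optimality; $p=q$ in the coincident case; and exclusivity from the existence of a two-outcome measurement with strictly positive information. Your extra rigidity-based exclusivity argument (single block $B=\Id$, $t_\alpha=1$, residual $\rho_0-\rho_1\neq0$) is also valid under the same corollary, although the paper uses only the elementary positivity argument that you say you prefer.
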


\begin{proof}
By Lemma~\ref{lem:plane-stationary}, the PVM is a full projective critical point. Apply the score-block classification to the two outcomes. If the two posterior scores are distinct, both score blocks have rank one. Their residuals vanish automatically because
\begin{equation}
 P_\pm(\rho_0-t_\pm\rho_1)P_\pm=(p_\pm-t_\pm q_\pm)P_\pm=0.
\end{equation}
 Corollary~\ref{cor:binary-interior-extension} makes the PVM globally
 optimal.

If the scores coincide, injectivity of \eqref{eq:js-score} gives $t_+=t_-=:t$. Since $p_++p_-=q_++q_-=1$, one has $t=1$. Thus the two output distributions coincide and mutual information is zero.
The alternatives are mutually exclusive: nonidentical states with positive
priors admit a two-outcome measurement whose output distributions differ,
and hence whose mutual information is strictly positive.
\end{proof}

The zero-information condition has a unique solution on the projective circle when the states differ:
\begin{equation}\label{eq:independence-axis}
 p(n)=q(n)
 \quad\Longleftrightarrow\quad
 (r_0-r_1)\cdot n=0.
\end{equation}
In a two-dimensional state plane, a nonzero vector has exactly one orthogonal unoriented axis. Hence the independence minimum is unique.

\subsection{Strict concavity of the posterior-effect problem}

The remaining issue is uniqueness of the global maximum. We prove a statement that is useful beyond qubits.

For an ensemble $\mathcal E$ define the effective domain
\begin{equation}\label{eq:Phi-domain}
 \begin{aligned}
 \mathfrak D_{\mathcal E}=\{0\le R\le\Id:{}
 &\supp\rho_0\subseteq\supp R,\\[-2pt]
 &\supp\rho_1\subseteq\supp(\Id-R)\}.
 \end{aligned}
\end{equation}
On this domain, \eqref{eq:posterior-functional} is interpreted by compression to the relevant supports; outside it the value is $-\infty$.

\begin{lemma}[Strict concavity of the posterior functional]\label{lem:posterior-strict}
Let $\mathcal E=\{(a,\rho_0),(b,\rho_1)\}$ with $a,b>0$.  Assume
\begin{equation}\label{eq:joint-support}
 \supp\rho_0+\supp\rho_1=\cH.
\end{equation}
Then $\Phi_{\mathcal E}$ is strictly concave on its finite effective domain. Consequently it has at most one maximizer.
\end{lemma}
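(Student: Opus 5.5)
The plan is to reduce strict concavity to a second-derivative computation along line segments. Take two distinct points $R_0,R_1$ of the finite effective domain $\mathfrak D_{\mathcal E}$, put $H=R_1-R_0\ne0$ and $R_s=R_0+sH$, and study $\varphi(s)=\Phi_{\mathcal E}(R_s)$ on $[0,1]$. Concavity of $\varphi$ is already known: operator concavity of the logarithm, applied after compression, gives it, and the effective domain is convex. So strict concavity fails only if $\varphi$ is affine on $[0,1]$. I will show that $\varphi''(s)<0$ at any interior $s$, which contradicts affinity. The second claim, at most one maximizer, is then immediate: if two maximizers existed, the midpoint would have strictly larger value.

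\emph{Regularity on the open segment.} For $s\in(0,1)$, positivity gives $\supp R_s=\supp R_0+\supp R_1=:\mathcal S$, independent of $s$. The range of $H$ lies in $\mathcal S$. Hence $R_s|_{\mathcal S}$ is an invertible smooth path, and $\Tr\rho_0\log R_s$ (with $\supp\rho_0\subseteq\mathcal S$) is real-analytic on $(0,1)$. The same holds for $\Id-R_s$ on $\mathcal S'=\supp(\Id-R_0)+\supp(\Id-R_1)\supseteq\supp\rho_1$, with direction $-H$, whose range also lies in $\mathcal S'$.

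\emph{Second derivative.} Use the resolvent representation
\begin{equation*}
\log x=\int_0^\infty\Bigl(\frac1{1+u}-\frac1{x+u}\Bigr)\dd u
\end{equation*}
on $\mathcal S$. Differentiating $(R_s+u)^{-1}$ twice gives
\begin{equation*}
\frac{\dd^2}{\dd s^2}\Tr\rho_0\log R_s
=-2\int_0^\infty\bigl\|(R_s+u)^{-1/2}H(R_s+u)^{-1}\rho_0^{1/2}\bigr\|_2^2\,\dd u\le0,
\end{equation*}
and the analogous identity holds for the $\rho_1$ term with $\Id-R_s$ and $-H$. Since $a,b>0$, $\varphi''(s)=0$ forces both integrands to vanish for all $u>0$. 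Continuity in $u$ is what allows this: each integrand is continuous and nonnegative in $u$, so a zero integral means it vanishes identically. That gives
\begin{equation*}
H(R_s+u)^{-1}\rho_0^{1/2}=0
\qquad\text{for every }u>0.
\end{equation*}
Multiplying by $u$ and letting $u\to\infty$ gives $H\rho_0^{1/2}=0$, so $H$ annihilates $\supp\rho_0$. Likewise $H$ annihilates $\supp\rho_1$. By the spanning hypothesis \eqref{eq:joint-support}, $H=0$, which is a contradiction.

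\emph{Main obstacle.} The step needing care is the boundary bookkeeping: making sure the compressed logarithms are differentiable along the open segment, even though $R_0$ or $R_1$ may be singular. This is resolved by the constancy of supports on $(0,1)$ noted above. The dichotomy ``$\varphi$ affine on $[0,1]$'' versus ``$\varphi''<0$ somewhere in $(0,1)$'' then avoids any need to differentiate at the endpoints.
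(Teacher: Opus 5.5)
Your proof is correct and follows essentially the same route as the paper. Both arguments use the constancy of the supports (equivalently, the common kernels) on the open segment, the resolvent integral for the second derivative of the logarithm with a nonnegative integrand, a large-parameter limit forcing $H$ to annihilate $\supp\rho_0$ and $\supp\rho_1$, and the joint-support hypothesis to conclude $H=0$. The differences are minor refinements of the same steps: your Hilbert--Schmidt form of the integrand gives $H(R_s+u)^{-1}\rho_0^{1/2}=0$ pointwise where the paper takes a $t^3$-rescaled limit to get $\Tr(\rho_0H^2)=0$, and your concave-versus-affine dichotomy makes explicit the endpoint issue that the paper leaves implicit.
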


\begin{proof}
Let $R_0$ and $R_1$ be distinct points of the finite effective domain, put $H=R_1-R_0$, and set $R_s=(1-s)R_0+sR_1$ for $0<s<1$.  Positivity implies
\begin{equation}
 \begin{aligned}
 \ker R_s&=\ker R_0\cap\ker R_1,\\
 \ker(\Id-R_s)&=\ker(\Id-R_0)\cap\ker(\Id-R_1).
 \end{aligned}
\end{equation}
Moreover, $H$ vanishes on both common kernels.  We may therefore compress the first logarithm to $(\ker R_s)^\perp$ and the second to $(\ker(\Id-R_s))^\perp$; on those active spaces the corresponding operators are positive definite.  The second Fr\'echet derivative of the logarithm is
\begin{equation}\label{eq:log-second}
 \begin{gathered}
 B_t=(X+t\Id)^{-1},\\
 D^2\log_X[H,H]
 =-2\int_0^\infty B_tHB_tHB_t\,\dd t.
 \end{gathered}
\end{equation}
The operator under the integral before the minus sign is positive semidefinite, because with $B=(X+t\Id)^{-1}$,
\begin{equation}
 BHBHB=B^{1/2}(B^{1/2}HB^{1/2})^2B^{1/2}\ge0.
\end{equation}
Consequently the restriction $s\mapsto\Phi_{\mathcal E}(R_s)$ is concave.  If its second derivative vanished at some $s\in(0,1)$, both nonnegative trace integrals would vanish.  Multiplying their integrands by $t^3$ and taking $t\to\infty$ gives
\begin{equation}
 \Tr(\rho_0H^2)=0,
 \qquad
 \Tr(\rho_1H^2)=0.
\end{equation}
Thus $H$ annihilates both state supports.  Condition~\eqref{eq:joint-support} forces $H=0$, contrary to $R_0\ne R_1$.  Hence $\Phi_{\mathcal E}$ is strictly concave on its finite effective domain.
\end{proof}

\begin{proposition}[Unique information-optimal qubit PVM]\label{prop:unique-qubit-max}
Let $\mathcal E=\{(a,\rho_0),(b,\rho_1)\}$ with $a,b>0$.  If
$\rho_0\ne\rho_1$ are qubit states, then the posterior functional has a
unique maximizer $R_*$. This maximizer is nonscalar, and its spectral PVM
is the unique information-optimal PVM, up to outcome relabeling.
\end{proposition}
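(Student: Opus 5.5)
The plan is to combine three earlier results: the existence of an optimal PVM (Corollary~\ref{cor:shor}), the exact posterior lift \eqref{eq:posterior-variational}, and strict concavity (Lemma~\ref{lem:posterior-strict}). The argument runs in four steps.

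\emph{Step 1: the support condition.} First check that \eqref{eq:joint-support} holds for any two distinct qubit states. If either state has rank two, its support is already $\CC^2$. If both are pure, $\rho_0\ne\rho_1$ means their supports are two different lines, and two different lines span $\CC^2$. Lemma~\ref{lem:posterior-strict} then says $\Phi_{\mathcal E}$ is strictly concave on its finite effective domain $\mathfrak D_{\mathcal E}$, so it has at most one maximizer.

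\emph{Step 2: existence of a maximizer.} Corollary~\ref{cor:shor} gives a PVM $P=(P_+,P_-)$ with $I_{\mathcal E}(P)=I_{\mathrm{acc}}(\mathcal E)$. I would show that $R_P=r_+P_++r_-P_-$ lies in $\mathfrak D_{\mathcal E}$ and that $\Phi_{\mathcal E}(R_P)=I_{\mathcal E}(P)$.
\begin{itemize}
\item The posteriors $r_\pm$ are well defined: $p_\pm=q_\pm=0$ for one outcome would force both states to be the same pure state, which is excluded.
\item With the support convention, $\Phi_{\mathcal E}$ evaluated on an operator diagonal in the $P$-basis reduces to the classical sum \eqref{eq:js-displayed-identity}, with $0\log0=0$. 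A term such as $r_\pm=0$ only occurs where $p_\pm=0$, so the support inclusions defining $\mathfrak D_{\mathcal E}$ hold.
\end{itemize}

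\emph{Step 3: the matching upper bound.} I need $\Phi_{\mathcal E}(R)\le I_{\mathrm{acc}}(\mathcal E)$ for every $R\in\mathfrak D_{\mathcal E}$, including nonfaithful states. On the open set $0<R<\Id$ this is the inequality direction of the Fang--Fawzi--Fawzi theorem for arbitrary positive semidefinite states, already used in Corollary~\ref{cor:binary-interior-extension}. For boundary $R$ I would approximate by $R_\eps=(1-\eps)R+\eps\Id/2$. Along this segment $\Phi_{\mathcal E}$ is concave, so $\Phi_{\mathcal E}(R_\eps)\ge(1-\eps)\Phi_{\mathcal E}(R)+\eps\,\Phi_{\mathcal E}(\Id/2)$, and letting $\eps\downarrow0$ gives $\liminf_{\eps\downarrow0}\Phi_{\mathcal E}(R_\eps)\ge\Phi_{\mathcal E}(R)$. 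Combining Steps~2 and~3, $R_P$ is a maximizer, and by Step~1 it is the unique maximizer $R_*$.

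\emph{Step 4: nonscalarity and uniqueness of the PVM.} If $R_*=c\Id$, then
\[
\Phi_{\mathcal E}(R_*)=a\log(c/a)+b\log((1-c)/b)\le 0,
\]
by concavity of the logarithm (Gibbs' inequality), with maximum $0$ at $c=a$. But $I_{\mathrm{acc}}(\mathcal E)>0$ for distinct states, as noted in Proposition~\ref{prop:qubit-interior}, so $R_*$ is not scalar. For uniqueness, let $P'$ be any information-optimal PVM. Step~2 applied to $P'$ gives $R_{P'}=R_*$. Since $R_*$ is a nonscalar qubit operator, it has two distinct eigenvalues and therefore rank-one spectral projectors. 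The projectors $P'_\pm$ are eigenprojectors of $R_{P'}=R_*$, so $P'$ is the spectral PVM of $R_*$ up to swapping the labels $\pm$.

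The main obstacle is the boundary bookkeeping in Steps~2 and~3, namely pure or nonfaithful states with posterior values $0$ or $1$. One has to verify that the support-compressed $\Phi_{\mathcal E}$ reproduces the classical mutual information exactly and stays bounded by $I_{\mathrm{acc}}$ on the closed domain. I would handle the upper bound through the concavity-along-segments limit argument described in Step~3.
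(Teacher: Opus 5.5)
Your proof is correct and follows essentially the paper's route. Joint support of two distinct qubit states gives strict concavity, and hence uniqueness, via Lemma~\ref{lem:posterior-strict}; the optimal PVM from Corollary~\ref{cor:shor} then has $R_P=R_*$; and a nonscalar $2\times2$ effect has a unique spectral PVM.

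The differences are minor, and both of your variants are valid:
\begin{itemize}
\item \textbf{Existence.} You get the maximizer from that optimal PVM together with your explicit bound $\Phi_{\mathcal E}\le I_{\mathrm{acc}}(\mathcal E)$ on the closed domain, rather than from compactness and upper semicontinuity. This also justifies the paper's tacit use of \eqref{eq:posterior-variational} when a state is pure.
\item \textbf{Nonscalarity.} You exclude $R_*=c\Id$ via $\Phi_{\mathcal E}(c\Id)\le0<I_{\mathrm{acc}}(\mathcal E)$. The paper instead uses the first-order condition $\frac{a}{c}\rho_0=\frac{b}{1-c}\rho_1$ at an interior point.
\end{itemize}
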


\begin{proof}
Two nonidentical qubit states have joint support $\CC^2$: the only way their supports could span a one-dimensional space is for both to be the same pure state. Lemma~\ref{lem:posterior-strict} gives uniqueness of $R_*$. Existence follows from compactness of the effect interval and upper semicontinuity with the support convention.

Suppose $R_*=c\Id$. Finiteness forces $0<c<1$. The first-order condition at this interior point is
\begin{equation}
 \frac{a}{c}\rho_0-\frac{b}{1-c}\rho_1=0.
\end{equation}
Taking traces gives $c=a$, after which the matrix equality gives $\rho_0=\rho_1$, a contradiction. Hence $R_*$ is nonscalar.

 Corollary~\ref{cor:shor} supplies at least one optimal PVM $P$.
 Equation~\eqref{eq:posterior-variational} then gives
 $\Phi_{\mathcal E}(R_P)=I_{\mathrm{acc}}(\mathcal E)$, so uniqueness of the
 maximizer implies $R_P=R_*$.  A nonscalar Hermitian $2\times2$ matrix has a
 unique spectral PVM up to swapping its eigenvalues.  Therefore the optimal
 PVM is unique.
\end{proof}

\subsection{Singular boundary measurements}

If one state is pure, an in-plane axis can make one conditional probability vanish. The projective objective remains differentiable as a function of the angle, but the formula using $f''(t)$ does not apply directly. The following lemma supplies the missing classification; its endpoint estimates are expanded in Appendix~\ref{app:boundary}.

\begin{lemma}[Boundary stationary points]\label{lem:boundary-stationary}
Let $\rho_0\ne\rho_1$ be qubit states, and let an in-plane PVM have at least one zero conditional probability. If it is stationary along the projective circle, then the two states commute and the PVM is their common eigenbasis. In particular, it is globally information optimal.
\end{lemma}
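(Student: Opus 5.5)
Parametrize the state-plane projective circle by an angle $\theta$, with $n(\theta)=\cos\theta\,u+\sin\theta\,v$ for an orthonormal pair $u,v$ spanning the plane. Then $p(\theta)=\tfrac12(1+r_0\cdot n)$ and $q(\theta)=\tfrac12(1+r_1\cdot n)$, with derivatives $\dot p=\tfrac12 r_0\cdot n^\perp$ and $\dot q=\tfrac12 r_1\cdot n^\perp$. Suppose a conditional probability vanishes, say $q(n)=0$ (the other cases are symmetric under swapping $\pm$ or swapping $x=0,1$). Then $r_1\cdot n=-1$, so $|r_1|=1$ and $r_1=-n$; in particular $\rho_1$ is pure and $n$ is its Bloch axis. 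Because $r_1$ attains the extremum of $r_1\cdot n$ at this point, $\dot q=0$ there, and $q(\theta)=\tfrac12(1-\cos(\theta-\theta_0))\sim(\theta-\theta_0)^2/4$ is quadratic.

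Next I would expand $I(\theta)$ near $\theta_0$ using \eqref{eq:qubit-mutual}. The only nonsmooth pieces are the terms $q\log q$ coming from $h(q)$, and from $h(ap+bq)$ if $p$ also vanishes. Since $q\sim c\,\delta^2$ with $\delta=\theta-\theta_0$, the term $q\log q\sim c\,\delta^2\log\delta^2$ is $o(\delta)$ and contributes nothing to the first derivative. So the derivative at $\theta_0$ comes only from the smooth terms driven by $\dot p$. Two cases arise.

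\emph{Case 1:} $p(n)\in(0,1)$. The derivative is $\dot p\,[a h'(ap)-a h'(p)]$ plus the remaining contributions of the minus outcome. Here $q_-=1$, and the $q$-terms vanish to first order. This gives $I'(\theta_0)=a\dot p\,\log\frac{(1-p)(ap)}{p(1-ap)}\cdot(\text{sign})$, computed as $a\dot p\,[\log\frac{1-ap}{ap}-\log\frac{1-p}{p}]$. Since $b>0$ and $p\in(0,1)$, we have $ap<p$, so the bracket is strictly positive. Stationarity therefore forces $\dot p=0$, that is, $r_0\cdot n^\perp=0$. Hence $r_0$ is parallel to $n$ within the plane, so $r_0$ and $r_1$ are collinear and the states commute with common eigenbasis $P_\pm(n)$.

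\emph{Case 2:} $p(n)\in\{0,1\}$. Then $r_0=\pm n$ as well, so both states are pure and diagonal in the same basis. Since $\rho_0\ne\rho_1$, they are orthogonal.

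In either case the states commute and $n$ is their common eigenbasis. Global optimality then follows because for commuting states the measurement in the common eigenbasis yields the full classical pair of distributions. Every POVM output is a classical post-processing of that pair: $\operatorname{Tr}(\rho_x M_y)=\sum_i \lambda^{(x)}_i\langle i|M_y|i\rangle$. Data processing for mutual information, equivalently for the $f_{a,b}$-divergence, then gives optimality over all POVMs.

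The main obstacle is the one-sided expansion in Case 1. Several checks are needed. First, the $\delta^2\log\delta$ terms must really be $o(\delta)$ on both sides of $\theta_0$. Second, the boundary term $m\log m$ with $m_+=ap$ must be handled correctly. Third, the sign bookkeeping must be right, so that the coefficient of $\dot p$ is a strictly nonzero log-ratio. I would carry this out with explicit Taylor expansions, deferring the details to Appendix~\ref{app:boundary}.
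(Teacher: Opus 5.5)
Your proof is correct and follows the paper's argument essentially verbatim, with the roles of the two states exchanged. The paper normalizes $p(0)=0$ and obtains $I'(0)=bq'(0)\bigl[h'(bq(0))-h'(q(0))\bigr]$, which forces $q'(0)=0$; it then handles the endpoint subcases $q(0)\in\{0,1\}$ exactly as in your Case~2 and proves optimality by the same classical post-processing argument. One small correction: your first log-ratio $\log\frac{(1-p)ap}{p(1-ap)}$ is the negative of the correct bracket $\log\frac{1-ap}{ap}-\log\frac{1-p}{p}$, so keep only the latter.
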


\begin{proof}
It is enough to treat $p(0)=0$; the other cases follow by exchanging states or outcomes. Then $\rho_0$ is pure and the measurement axis is opposite its Bloch vector. Along an angular coordinate $\theta$ in the state plane,
\begin{equation}
 p(0)=p'(0)=0,
 \qquad p(\theta)=O(\theta^2).
\end{equation}
The singular entropy term satisfies $h'(p(\theta))p'(\theta)\to0$. If $0<q(0)<1$, differentiation of \eqref{eq:qubit-mutual} gives
\begin{equation}\label{eq:boundary-derivative}
 I'(0)=bq'(0)\bigl[h'(bq(0))-h'(q(0))\bigr].
\end{equation}
Since $h'$ is strictly decreasing and $0<b<1$, the bracket is nonzero. Stationarity therefore requires $q'(0)=0$. In the state plane, this means that $r_1$ is parallel to the pure Bloch vector $r_0$, so the states commute and the measurement is their common eigenbasis. Measuring that basis is sufficient for the resulting classical pair and is globally optimal.

If $q(0)=0$, both states are the same pure state, contrary to the hypothesis. If $q(0)=1$, the states are orthogonal pure states and the same measurement perfectly distinguishes them. These cases complete the proof.
\end{proof}

\subsection{Keil's conjecture}

We can now state the complete qubit result. A ``stationary measurement'' is an unoriented projective axis; swapping the two outcomes does not produce a second point.

\begin{theorem}[Complete binary-qubit stationary landscape]\label{thm:keil}
Let $\mathcal E=\{(a,\rho_0),(b,\rho_1)\}$ be a binary qubit ensemble with $a,b>0$ and $\rho_0\ne\rho_1$. On the projective circle of measurement axes in the Bloch plane of the states, the mutual information has exactly two stationary measurements:
\begin{enumerate}
\item the unique axis satisfying $(r_0-r_1)\cdot n=0$, at which $I=0$ and which is the global minimum;
\item the unique spectral PVM of $R_*$, at which $I=I_{\mathrm{acc}}(\mathcal E)$ and which is the global maximum.
\end{enumerate}
If $\rho_0=\rho_1$, then $I$ vanishes identically and every measurement is stationary.
\end{theorem}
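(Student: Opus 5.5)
The plan is to classify every stationary axis on the state-plane projective circle according to whether its conditional probabilities are all positive, and then match each class to one of the two listed measurements. The degenerate case $\rho_0=\rho_1$ is immediate and handled first: then $p(n)=q(n)$ for every $n$, so by \eqref{eq:qubit-mutual} $I(n)=h(p)-ah(p)-bh(p)=0$ identically and every axis is stationary. From now on assume $\rho_0\ne\rho_1$.

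\emph{Interior stationary points.} Let $n$ be stationary with $p(n),q(n)\in(0,1)$. Proposition~\ref{prop:qubit-interior} applies and says that either the PVM is globally information optimal or $p(n)=q(n)$.
\begin{itemize}
\item In the second case, \eqref{eq:independence-axis} gives $(r_0-r_1)\cdot n=0$. By the remark following it, this axis is unique on the circle. Since $I\ge0$ and $I(n)=0$ there, it is the global minimum.
\item In the first case, $I(n)=I_{\mathrm{acc}}(\mathcal E)$. Proposition~\ref{prop:unique-qubit-max} then identifies the PVM as the unique spectral PVM of $R_*$.
\end{itemize}

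\emph{Boundary stationary points.} If some conditional probability vanishes at a stationary axis, Lemma~\ref{lem:boundary-stationary} shows that the PVM is globally optimal. Proposition~\ref{prop:unique-qubit-max} again identifies it with the spectral PVM of $R_*$. So every stationary axis is one of the two listed measurements.

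\emph{The two listed axes are stationary.} The global minimum and global maximum of the continuous function $I$ on the compact circle are stationary wherever $I$ is differentiable. Two points need checking here.
\begin{itemize}
\item The optimal PVM from Corollary~\ref{cor:shor} must actually lie on the state-plane circle. A PVM axis $n$ enters $I$ only through $r_x\cdot n$, so I would replace $n$ by the normalized projection of $n$ onto $\Pi$. This cannot decrease $|r_x\cdot n|$ in a way that reduces information: by data processing, an out-of-plane axis is a noisy version of the in-plane one. Combined with uniqueness, the optimal axis lies in $\Pi$.
\item I must verify differentiability of $I$ along the angle even at singular axes. This was asserted just before Lemma~\ref{lem:boundary-stationary}, and Appendix~\ref{app:boundary} expands it; I would cite it.
\end{itemize}

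\emph{Distinctness.} The two axes differ because $I=0$ at the independence axis while $I_{\mathrm{acc}}>0$ for nonidentical states, as noted in the proof of Proposition~\ref{prop:qubit-interior}. Hence there are exactly two stationary measurements.

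The main obstacle I expect is the in-plane reduction of the optimum together with the boundary differentiability, since the interior classification is already done by earlier results. For the reduction I would argue concretely: writing $n=\cos\gamma\, u+\sin\gamma\, v$ with $u\in\Pi$ and $v\perp\Pi$, the axis $n$ yields the same statistics as measuring $u$ and then applying a symmetric binary channel that contracts $\pm1$ toward $0$ by the factor $\cos\gamma$. Mutual information is therefore at most that of $u$, with equality only when $\gamma=0$ or $I=0$. Uniqueness of the maximizer then places $R_*$'s axis in $\Pi$.
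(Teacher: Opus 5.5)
Your proposal is correct and follows the paper's proof: interior stationary axes via Proposition~\ref{prop:qubit-interior}, singular ones via Lemma~\ref{lem:boundary-stationary}, uniqueness via \eqref{eq:independence-axis} and Proposition~\ref{prop:unique-qubit-max}, and existence from the extrema of $I$ on the compact circle. Your data-processing argument placing the optimal axis in $\Pi$ is valid but not needed: the maximum of $I$ on the circle is a stationary point with $I>0$, so it is not the independence axis, and your classification of stationary axes already forces it to be globally optimal and hence the spectral PVM of $R_*$.
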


\begin{proof}
Interior stationary points are classified by Proposition~\ref{prop:qubit-interior}; boundary stationary points by Lemma~\ref{lem:boundary-stationary}. Thus every stationary point for nonidentical states is either a global maximum or the independence minimum. Equation~\eqref{eq:independence-axis} makes the minimum unique, and Proposition~\ref{prop:unique-qubit-max} makes the maximum unique. Both extrema exist on the compact projective circle, so there are exactly two.
\end{proof}

\begin{remark}[Relation to Keil's results]
Keil's 2008 preprint, published in 2024, proves that every local maximum of
binary-qubit mutual information over POVMs is a von Neumann measurement and
develops the associated scalar optimality equation \cite{Keil2024}.  The
exact two-stationary-point statement appears separately as Conjecture~2 on
p.~77 of his 2009 doctoral thesis \cite{KeilThesis2009}.
Theorem~\ref{thm:keil} proves that conjecture by embedding the projective
problem into the concave posterior-effect geometry and resolving the only
possible degenerate score block.
\end{remark}

%% file: sections/08-thai.tex
\section{Quasi-concavity, pseudo-concavity, and bisection}\label{sec:thai}

\subsection{The Thai--Dall'Arno coordinate}

Thai and Dall'Arno's December 2025 preprint, subsequently published in
2026, gives a one-parameter description of the projective circle relevant
to a binary qubit ensemble \cite{ThaiDallArno2026}.  We recall only the
ingredients needed here.

Let $\Delta=\rho_0-\rho_1$ and, when $\Delta\ne0$, set
\begin{equation}\label{eq:thai-mu}
 \mu=\frac{\Tr[\rho_1(\rho_1-\rho_0)]}{\Tr[(\rho_0-\rho_1)^2]},
 \qquad
 \omega_\mu=\mu\rho_0+(1-\mu)\rho_1.
\end{equation}
This choice makes $\Tr(\omega_\mu\Delta)=0$. Consider the Hermitian pencil
\begin{equation}\label{eq:thai-pencil}
 H(\lambda)=\lambda\omega_\mu-\Delta,
 \qquad \lambda\in\overline{\RR}:=\RR\cup\{\pm\infty\}.
\end{equation}
Whenever $H(\lambda)$ has two distinct eigenvalues, its two rank-one spectral
projectors define a PVM.  On the interval where $H(\lambda)$ is indefinite,
these projectors agree with the positive- and negative-spectral-subspace
measurement used in \cite{ThaiDallArno2026} to generate the
state-dependent extremal boundary of the testing region.

\begin{proposition}[Thai--Dall'Arno circle coordinate; imported]
\label{prop:thai-coordinate}
If $\omega_\mu\not\propto\Id$, the extended parameter line in
\eqref{eq:thai-pencil} traverses the state-plane projective circle once;
the two infinities define the same PVM, namely the eigenbasis of
$\omega_\mu$ \cite{ThaiDallArno2026}.
\end{proposition}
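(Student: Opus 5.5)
The plan is to reduce everything to Bloch vectors, where the pencil becomes an affine line in the state plane. The statement then follows from the elementary fact that the lines through the origin of a plane are in bijection with the points of an affine line not passing through the origin, together with one extra point at infinity.

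\emph{Step 1: Bloch form of the pencil.} Write $d=r_0-r_1$, so that $\Delta=\tfrac12\,d\cdot\bm\sigma$ with $d\neq0$, since $\Delta\ne0$. Since $\Tr\omega_\mu=1$, we have $\omega_\mu=\tfrac12(\Id+w\cdot\bm\sigma)$ with $w=\mu r_0+(1-\mu)r_1$. The hypothesis $\omega_\mu\not\propto\Id$ means exactly $w\ne0$. The defining property $\Tr(\omega_\mu\Delta)=0$ reads $\tfrac12\,w\cdot d=0$, so $w\perp d$.

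\emph{Step 2: the plane.} Both $w$ and $d$ are real combinations of $r_0,r_1$, so they lie in $\Pi=\spanop_{\RR}\{r_0,r_1\}$. Being nonzero and orthogonal, they are linearly independent. Hence $\Pi$ is two dimensional and $\Pi=\spanop_{\RR}\{w,d\}$, so the auxiliary choice of plane made when $\Pi$ is one dimensional never arises.

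\emph{Step 3: the pencil as axes.} For finite $\lambda$,
\begin{equation*}
H(\lambda)=\tfrac{\lambda}{2}\Id+\tfrac12(\lambda w-d)\cdot\bm\sigma .
\end{equation*}
Its eigenvalues are $\tfrac12(\lambda\pm|\lambda w-d|)$. Orthogonality gives $|\lambda w-d|^2=\lambda^2|w|^2+|d|^2>0$, so $H(\lambda)$ always has two distinct eigenvalues. Its spectral PVM is $P_\pm(n(\lambda))$ with $n(\lambda)=(\lambda w-d)/|\lambda w-d|$, an in-plane axis.

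\emph{Step 4: the map is a bijection onto the projective circle.} I would show that $\lambda\mapsto[n(\lambda)]\in(\Pi\cap\mathbb S^2)/\{\pm1\}$ is a bijection from $\RR$ onto the set of unoriented in-plane axes other than $[w]$. Every line through the origin in $\Pi$ that is not parallel to $w$ meets the affine line $\{\lambda w-d\}$ in exactly one point. The line parallel to $w$ misses it, because $d\notin\RR w$.

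\emph{Step 5: the infinities and continuity.} As $\lambda\to\pm\infty$, $n(\lambda)\to\pm w/|w|$. Both limits give the same unoriented axis $[w]$, and the corresponding PVM is the eigenbasis of $\omega_\mu$; equivalently, $H(\lambda)/\lambda\to\omega_\mu$. The extended map from $\overline{\RR}$ with $\pm\infty$ identified, which is a topological circle, to the projective circle is therefore a continuous bijection between compact Hausdorff circles, hence a homeomorphism. So the parameter line traverses the circle exactly once.

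The only delicate point is bookkeeping rather than analysis. One must check that unoriented axes, not oriented ones, are the right target: swapping the outcome labels $\pm$ gives the same PVM, and the two limits $\pm w$ must be identified. One must also confirm that $d\notin\RR w$ so that no finite $\lambda$ produces the axis $[w]$; this is immediate from $w\perp d$ with both nonzero.
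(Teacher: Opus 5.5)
Your proof is correct. The paper gives no argument for this proposition: it is explicitly imported from Thai and Dall'Arno, and the citation is the whole justification. Your Bloch-vector reduction is therefore a genuinely different, self-contained route rather than a reconstruction of one in the text.

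Your route makes several things explicit that the paper uses without proof:
\begin{itemize}
\item The identity $|\lambda w-d|^2=\lambda^2|w|^2+|d|^2>0$ shows that $H(\lambda)$ has simple spectrum for every finite $\lambda$. So the paper's proviso ``whenever $H(\lambda)$ has two distinct eigenvalues'' is automatic in the nonscalar regime.
\item Your Step~2 shows that $\omega_\mu\not\propto\Id$ forces $\dim\Pi=2$. Hence the scalar case is exactly the collinear one, which sharpens the paper's remark that it ``occurs in the collinear Bloch-vector geometry.''
\item The relation $w\perp d$ identifies the point at infinity $[w]$ with the unique independence axis $(r_0-r_1)\cdot n=0$ of \eqref{eq:independence-axis}. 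The paper records this right after the proposition, via $\Tr[\omega_\mu(\rho_0-\rho_1)]=0$. It then relies on it in the proof of Theorem~\ref{thm:thai-conjectures} when the global minimum is placed at the identified endpoints.
\item Your compact-to-Hausdorff argument gives ``traverses once'' a precise meaning: a homeomorphism from $\RR\cup\{\infty\}$ onto the projective circle.
\end{itemize}

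The imported route, by contrast, ties the coordinate to the positive/negative spectral-subspace measurements that Thai and Dall'Arno use for the testing-region boundary. Your proof does not need that connection for this statement.
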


Since $\Tr[\omega_\mu(\rho_0-\rho_1)]=0$, that limiting PVM has identical
output distributions.  The scalar case $\omega_\mu\propto\Id$ occurs in
the collinear Bloch-vector geometry and is handled directly as a commuting
degeneracy.

Thai and Dall'Arno reduce the search for the maximizer to
\begin{equation}\label{eq:thai-interval}
 [-\lambda_*,\lambda_*],
 \qquad
 \begin{gathered}
 \Delta=\rho_0-\rho_1,\\
 \Xi=\Tr\Delta^2-\Tr(\rho_0^2)\Tr(\rho_1^2)
 +(\Tr\rho_0\rho_1)^2,\\
 \lambda_*=\Tr\Delta^2/\sqrt{\Xi}.
 \end{gathered}
\end{equation}
We denote the induced mutual information by $I_{\mathrm{TD}}(\lambda)$.

The exact formula for the coordinate is prior work. The contribution here is to remove the conjectural shape assumption under which its algorithmic use was justified.

\subsection{Unimodality from the stationary-point theorem}

\begin{lemma}[Two critical points imply strict unimodality]\label{lem:circle-unimodal}
Let $F$ be a continuously differentiable function on a circle with exactly two critical points, one a unique global minimum and one a unique global maximum. On each of the two open arcs joining the minimum to the maximum, $F$ is strictly monotone.
\end{lemma}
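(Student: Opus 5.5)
Parametrize the circle by an angle, i.e. work with a $C^1$ function $F$ on $\RR$ that is $2\pi$-periodic. Write $m$ for the location of the unique global minimum and $M$ for that of the unique global maximum. By assumption these are distinct points: if $m=M$, the minimum and maximum values would coincide and $F$ would be constant, so every point would be critical, contradicting the hypothesis of exactly two critical points. Removing $m$ and $M$ leaves two open arcs. Each arc is the image of an open interval, say $(m,M)$ after lifting, and $F'$ has no zero in the interior of either arc. Since $F'$ is continuous on the open interval, the intermediate value theorem shows that $F'$ has constant sign there.

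On the arc from $m$ to $M$ (in either orientation), a constant nonzero sign of $F'$ makes $F$ strictly monotone on the closed arc. The mean value theorem gives this: $F(y)-F(x)=F'(\xi)(y-x)$ with $\xi$ interior, so $F$ is strictly monotone on the open arc. Continuity extends the monotonicity to the endpoints. Which sign occurs is then forced by the endpoint values: $F(M)>F(m)$ because the maximum value strictly exceeds the minimum value. Hence, traversing the arc from $m$ toward $M$, $F$ is strictly increasing. The second arc is handled identically, with its orientation from $m$ to $M$ chosen so that the same conclusion holds.

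The only delicate point is the setup rather than the analysis. The lifted interval must be genuinely open with the critical points exactly at its ends, and the distinctness $m\neq M$ must be recorded explicitly. I expect no real obstacle. The argument uses only continuity of $F'$, the intermediate value theorem, and the mean value theorem. The uniqueness of the extrema is not even essential beyond ruling out additional critical points, since any extremum of a $C^1$ function on a circle is critical.
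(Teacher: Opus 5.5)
Your proof is correct and follows the paper's argument. On each open arc $F'$ has no zero, so by continuity it has constant sign, and the endpoint values $F(M)>F(m)$ fix that sign. Your extra details (the distinctness of the two critical points and the mean value theorem step) make explicit what the paper's three-line proof leaves implicit.
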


\begin{proof}
The derivative has no zero on either open arc. By continuity it has constant sign there. The endpoint values determine the sign.
\end{proof}

Applied to Theorem~\ref{thm:keil}, this gives one increasing and one
decreasing arc.  When $\omega_\mu\not\propto\Id$, the extended coordinate
\eqref{eq:thai-pencil} cuts the circle at the minimum, so it turns those arcs
into the two sides of the unique finite maximizer.  When
$\omega_\mu\propto\Id$, the two states commute and the pencil has a fixed
eigenbasis; the coordinate objective is constant wherever the pencil is
nondegenerate and does not parametrize the circle.

\begin{theorem}[Thai--Dall'Arno shape conjectures]\label{thm:thai-conjectures}
For every nonidentical binary qubit ensemble with positive priors and
$\omega_\mu\not\propto\Id$, the function $I_{\mathrm{TD}}$ has a unique
finite maximizer $\lambda_{\max}$ and satisfies
\begin{equation}\label{eq:thai-sign}
 I_{\mathrm{TD}}'(\lambda)>0\quad(\lambda<\lambda_{\max}),
 \qquad
 I_{\mathrm{TD}}'(\lambda)<0\quad(\lambda>\lambda_{\max})
\end{equation}
whenever the derivative is represented in a nonsingular chart. Consequently:
\begin{enumerate}
\item $I_{\mathrm{TD}}$ is quasi-concave on the extended line $[-\infty,+\infty]$ with the two endpoints identified;
\item its restriction to the finite search interval \eqref{eq:thai-interval} is pseudo-concave;
\item every stationary point in the finite interval is globally maximizing.
\end{enumerate}
If $\omega_\mu\propto\Id$, the states commute, $I_{\mathrm{TD}}$ is constant
on every nondegenerate branch, and the non-strict forms of quasi-concavity
and pseudo-concavity hold.
\end{theorem}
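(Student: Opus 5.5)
The proof transfers the circle picture of Theorem~\ref{thm:keil} to the Thai--Dall'Arno coordinate using Lemma~\ref{lem:circle-unimodal}. We treat the case $\omega_\mu\not\propto\Id$ first. By Proposition~\ref{prop:thai-coordinate}, the extended line $\overline{\RR}$, with its endpoints identified, is a parametrization of the state-plane projective circle. The identified point $\pm\infty$ is the eigenbasis of $\omega_\mu$. Since $\Tr(\omega_\mu\Delta)=0$, that basis has identical output distributions, so it lies on the unique independence axis $(r_0-r_1)\cdot n=0$ of \eqref{eq:independence-axis}. Theorem~\ref{thm:keil} says the mutual information on the circle has exactly two stationary points: this minimum and the unique global maximum. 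The maximum is therefore at a finite parameter $\lambda_{\max}$.

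Lemma~\ref{lem:circle-unimodal} requires $C^1$ regularity. The pullback of $I$ along $\lambda$ is smooth where the pencil has simple spectrum. Its derivative is the angular derivative times $\dd\theta/\dd\lambda$, and $\dd\theta/\dd\lambda\ne0$ because the parametrization traverses the circle once and is injective. Under that hypothesis, the sign of $I_{\mathrm{TD}}'$ on a finite chart coincides with the sign of the angular derivative, up to one fixed orientation sign. Boundary axes where a conditional probability vanishes are still differentiable in the angle by the discussion before Lemma~\ref{lem:boundary-stationary}, so $C^1$ holds on the circle. The lemma then gives strict monotonicity on the two open arcs joining the minimum to the maximum. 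In the $\lambda$ chart these arcs are $(-\infty,\lambda_{\max})$ and $(\lambda_{\max},\infty)$. The value rises from the minimum at $-\infty$ to the maximum, which fixes the orientation and yields \eqref{eq:thai-sign}.

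The three consequences follow directly. Quasi-concavity holds because every superlevel set is an interval containing $\lambda_{\max}$, and the identified endpoint is the global minimum, so it belongs only to the trivial superlevel set. Pseudo-concavity on the finite interval \eqref{eq:thai-interval} holds because the derivative vanishes only at $\lambda_{\max}$ and points toward it elsewhere. The standard one-dimensional criterion applies: if $I'(x)(y-x)\le0$ then $I(y)\le I(x)$. This gives item~(3) as well. It remains to check that $\lambda_{\max}$ lies in $[-\lambda_*,\lambda_*]$. We take that reduction from Thai and Dall'Arno. Even without it, every stationary point in the interval is $\lambda_{\max}$, and the monotonicity statements hold on any subinterval.

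For $\omega_\mu\propto\Id$, we show commutation directly. Then $\mu\rho_0+(1-\mu)\rho_1$ is scalar, so $\mu r_0+(1-\mu)r_1=0$ and the Bloch vectors are collinear, hence $\rho_0,\rho_1$ commute. The pencil $\lambda\omega_\mu-\Delta$ has the fixed eigenbasis of $\Delta$ wherever it is nondegenerate, so $I_{\mathrm{TD}}$ is constant on each branch and the non-strict properties are trivial. We expect the main obstacle to be the chart bookkeeping: showing $\dd\theta/\dd\lambda$ never vanishes at finite $\lambda$, and handling any finite $\lambda$ where the pencil is degenerate. For the first point we would differentiate the eigenprojector of $H(\lambda)$ via \eqref{eq:divided-difference}. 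That gives a nonzero rotation rate exactly when $\omega_\mu$ and $\Delta$ fail to commute, which holds when $\omega_\mu\not\propto\Id$ and the states do not commute.
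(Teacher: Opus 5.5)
Your proposal follows the paper's proof essentially step for step: Theorem~\ref{thm:keil} gives exactly two stationary axes, Lemma~\ref{lem:circle-unimodal} gives the two monotone arcs, and Proposition~\ref{prop:thai-coordinate} places the independence minimum at the identified endpoint $\pm\infty$. From the resulting sign pattern you derive quasi-concavity, pseudo-concavity (also on any subinterval) and item~(3), and you treat the commuting case through the fixed eigenbasis of $\Delta$.

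The one point you handle more carefully than the paper is the nonvanishing of $\dd\theta/\dd\lambda$. As you note, this does not follow from injectivity alone, but it closes at once in Bloch coordinates. The pencil axis is proportional to $\lambda w-(r_0-r_1)$, where $w=\mu r_0+(1-\mu)r_1$ satisfies $w\perp r_0-r_1$ and both vectors are nonzero. So the pencil is nondegenerate at every finite $\lambda$, and the angle $\arctan(\lambda|w|/|r_0-r_1|)$ is strictly increasing.
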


\begin{proof}
Under the nonscalar hypothesis, Theorem~\ref{thm:keil} supplies exactly one
minimum and one maximum on the projective circle. Lemma~\ref{lem:circle-unimodal}
gives strict monotonicity on the two complementary arcs.
Proposition~\ref{prop:thai-coordinate} places the minimum at the identified
endpoints and covers the rest of the circle once, so \eqref{eq:thai-sign}
follows.  In the
scalar case, $H(\lambda)=c\lambda\Id-\Delta$ has the $\lambda$-independent
eigenbasis of $\Delta$, which proves the stated constant-function conclusion.

A one-dimensional differentiable function with the sign pattern \eqref{eq:thai-sign} is quasi-concave because all upper level sets are intervals. It is pseudo-concave because, if $I(y)>I(x)$, then either $x$ and $y$ lie on one monotone side of the maximizer, or they lie on opposite sides; in every case $I'(x)(y-x)>0$. The last assertion is immediate from the uniqueness of the stationary point.
\end{proof}

For identical states, mutual information vanishes for every PVM.  This
trivial case is independent of the Thai--Dall'Arno coordinate, whose
definition above assumes $\Delta\ne0$.

Theorem~\ref{thm:thai-conjectures} proves the strict forms of Conjectures 1
and 2 of \cite{ThaiDallArno2026} in the nonscalar coordinate regime and their
non-strict shape conclusions in the scalar commuting degeneracy. Their
Conjecture 1 is also presented there as a reformulation of Keil's
stationary-point conjecture. The equivalence is transparent in the nonscalar
regime: on a circle with a known unique minimum, the absence of any additional
stationary point is exactly the strict unimodality needed for quasi-concavity.

\subsection{Exact-arithmetic bisection}

Thai and Dall'Arno propose a derivative-bisection algorithm whose guarantee
is conditional on their pseudo-concavity conjecture.  The preceding theorem
establishes that hypothesis in the nonscalar regime used by
Corollary~\ref{cor:bisection}.

\begin{corollary}[Bisection guarantee]\label{cor:bisection}
Assume $\omega_\mu\not\propto\Id$, exact evaluation of
$I_{\mathrm{TD}}'$, and an initial bracket
$[L,U]\subseteq[-\lambda_*,\lambda_*]$ with
\begin{equation}
 I_{\mathrm{TD}}'(L)\ge0,
 \qquad
 I_{\mathrm{TD}}'(U)\le0.
\end{equation}
Derivative bisection returns an interval containing the unique maximizer after every iteration. After $N$ iterations its width is at most
\begin{equation}\label{eq:bisection-width}
 2\lambda_*\,2^{-N}.
\end{equation}
In particular, parameter precision $\delta$ requires at most
\begin{equation}\label{eq:bisection-count}
 \left\lceil\log_2\frac{2\lambda_*}{\delta}\right\rceil
\end{equation}
derivative evaluations beyond initialization.
\end{corollary}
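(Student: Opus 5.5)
The proof is the standard bisection invariant argument. Theorem~\ref{thm:thai-conjectures} supplies the structural input: under $\omega_\mu\not\propto\Id$ the function $I_{\mathrm{TD}}$ has a unique finite maximizer $\lambda_{\max}$, with $I_{\mathrm{TD}}'>0$ to its left and $I_{\mathrm{TD}}'<0$ to its right.

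\textbf{Step 1: the invariant.} The invariant to maintain is that the current bracket $[L_k,U_k]$ satisfies $I_{\mathrm{TD}}'(L_k)\ge0\ge I_{\mathrm{TD}}'(U_k)$. By the sign pattern \eqref{eq:thai-sign}, $I_{\mathrm{TD}}'(L_k)\ge0$ forces $L_k\le\lambda_{\max}$, and $I_{\mathrm{TD}}'(U_k)\le0$ forces $U_k\ge\lambda_{\max}$. Hence $\lambda_{\max}\in[L_k,U_k]$ whenever the invariant holds.

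\textbf{Step 2: the update.} Evaluate the derivative at the midpoint $c=(L_k+U_k)/2$.
\begin{enumerate}
\item If $I_{\mathrm{TD}}'(c)\ge0$, replace $L_k$ by $c$.
\item Otherwise $I_{\mathrm{TD}}'(c)<0\le0$, and replace $U_k$ by $c$.
\end{enumerate}
Either way the invariant is preserved. If $I_{\mathrm{TD}}'(c)=0$, uniqueness of the stationary point (item (3) of Theorem~\ref{thm:thai-conjectures}) gives $c=\lambda_{\max}$, which is consistent with the new bracket. Since the initial bracket satisfies the invariant by hypothesis, induction shows that every bracket contains $\lambda_{\max}$.

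\textbf{Step 3: the width bound.} Each step halves the width, so after $N$ iterations the width is $2^{-N}(U-L)$. The inclusion $[L,U]\subseteq[-\lambda_*,\lambda_*]$ gives $U-L\le2\lambda_*$, which yields \eqref{eq:bisection-width}. Requiring $2\lambda_*2^{-N}\le\delta$ and solving for $N$ gives \eqref{eq:bisection-count}. Each iteration uses exactly one new derivative evaluation, namely the one at the midpoint.

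\textbf{Main obstacle.} The only delicate point is the validity of the derivative evaluations. The derivative must be evaluated in a nonsingular chart, and the maximizer must lie in the interval $[-\lambda_*,\lambda_*]$. Both are handled by the hypotheses: the assumption of exact evaluation of $I_{\mathrm{TD}}'$ covers the first, and the bracket hypothesis already contains $\lambda_{\max}$ by Step~1. Beyond this, the argument is routine.
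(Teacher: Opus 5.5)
Your proposal is correct and follows the same route as the paper. The single sign change of $I_{\mathrm{TD}}'$ from Theorem~\ref{thm:thai-conjectures} makes $I_{\mathrm{TD}}'(L_k)\ge0\ge I_{\mathrm{TD}}'(U_k)$ a bracket around $\lambda_{\max}$, midpoint bisection preserves that bracket while halving its width, and this gives \eqref{eq:bisection-width} and \eqref{eq:bisection-count}; you state the invariant and the evaluation count explicitly, where the paper compresses the same argument into two lines.
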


\begin{proof}
The derivative sign changes exactly once by \eqref{eq:thai-sign}. Standard bisection preserves a bracket around that sign change and halves its width at each step.
\end{proof}

\begin{remark}[Numerical versus mathematical guarantees]
Corollary~\ref{cor:bisection} is an exact-arithmetic statement. A rigorous floating-point implementation must control eigenvector branch choices near degeneracy and enclose the derivative sign. Interval arithmetic gives a direct route to validated sign decisions. The theorem does not by itself bound the loss in mutual information from a given parameter error; that requires a local curvature or Lipschitz bound.
\end{remark}

%% file: sections/07-consequences.tex
\section{Dynamic and order-theoretic consequences of the posterior lift}
\label{sec:lift-consequences}

The posterior effect in Section~\ref{sec:accessible} carries information
beyond the projective landscape.  Allowing the prior to vary gives a
jointly concave matrix perspective and another proof of the prior-value
concavity used in Shor's 2002/2004 adaptive-capacity argument
\cite{ShorAdaptive2004}; Xiao's 2021 report had already claimed and argued
that concavity and the resulting capacity equality \cite{Xiao2021}.  At
the rare-prior boundary, the normalized triangular density converges to
the relative-entropy density.  This gives a direct measurement-channel
proof of the known mutual-information/relative-entropy order equivalence
and, using the published Example~16 of
\cite{TeixidoSchindlerSafranek2025}, an explicit finite witness against
the conjectured observational-entropy equivalence.

\subsection{Joint prior--effect concavity}

For states $\rho_0,\rho_1$ and $0<a<1$, write
\begin{equation}\label{eq:prior-value}
 \mathcal A_{\rho_0,\rho_1}(a)
 =\sup_{M\in\POVM} I_{\{(a,\rho_0),(1-a,\rho_1)\}}(M).
\end{equation}
The fixed-prior equality below is equivalent to the posterior program used
by Wang, Wang, and Chen \cite{WangWangChen2026} and follows from the same
operator-Jensen mechanism as Theorem~\ref{thm:fff}.  Xiao's 2021 report
states and argues the resulting prior-value concavity \cite{Xiao2021}.
The additional structural statement emphasized here is joint concavity in
the prior and effect.

\begin{theorem}[Posterior lift and joint prior--effect concavity]
\label{thm:prior-concavity}
For arbitrary finite-dimensional states $\rho_0,\rho_1$,
\begin{equation}\label{eq:joint-posterior-program}
 \mathcal A_{\rho_0,\rho_1}(a)
 =\max_{0\preceq R\preceq\Id}\Phi(a,R),
\end{equation}
where
\begin{equation}\label{eq:joint-posterior-functional}
 \Phi(a,R)=a\Tr\rho_0\log\frac{R}{a}
 +(1-a)\Tr\rho_1\log\frac{\Id-R}{1-a}.
\end{equation}
Boundary values use the upper-semicontinuous support convention.  The map
$(a,R)\mapsto\Phi(a,R)$ is jointly concave, and
$a\mapsto\mathcal A_{\rho_0,\rho_1}(a)$ is concave on $[0,1]$.
\end{theorem}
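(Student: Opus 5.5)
The proof has three parts: the variational identity \eqref{eq:joint-posterior-program}, joint concavity of $\Phi$, and concavity of the value $\mathcal A$ in the prior.

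\textbf{The variational identity.} For interior $0<a<1$ I would first treat faithful $\rho_0,\rho_1$. Here \eqref{eq:posterior-variational} is exactly the identity: substitute $R=a\Gamma$ into the lift of Proposition~\ref{prop:js-lift}, and Theorem~\ref{thm:fff} identifies the supremum with $\mathcal A$. For arbitrary states I would use Corollary~\ref{cor:binary-interior-extension}, which says the Fang--Fawzi--Fawzi equality holds without faithfulness because $\psi=a\log$ parametrizes the whole effective domain of $f_{a,b}^*$.

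To pass from supremum to maximum over the closed effect interval, I would argue that $\Phi(a,\cdot)$ is upper semicontinuous under the support convention and that $\{0\preceq R\preceq\Id\}$ is compact. The inequality $\Phi(a,R)\le\mathcal A$ for boundary $R$ then follows by approximating $R$ from inside: take $R_\eps=(1-\eps)R+\eps\Id/2$ and use monotone convergence of the logarithm terms on the relevant supports.

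At the endpoints $a\in\{0,1\}$ both sides vanish under the convention $0\log(\cdot/0)=0$. Choosing $R=\Id$ (respectively $R=0$) attains the value $0$, and $\Phi\le0$ because it is at most the mutual information.

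\textbf{Joint concavity.} Write
\[
 a\Tr\rho_0\log(R/a)=-\Tr\rho_0\,\bigl(a\log(a\Id)-a\log R\bigr).
\]
The map $(a,R)\mapsto a\log R-a\log(a\Id)=-a\log(a R^{-1})$ is the operator perspective of the operator-concave function $\log$, evaluated on the commuting pair $(a\Id,R)$. Concretely, $P_{\log}(A,R)=A^{1/2}\log(A^{-1/2}RA^{-1/2})A^{1/2}$ with $A=a\Id$ equals $a\log(R/a)$. By the Effros / Ebadian--Nikoufar--Gordji theorem, perspectives of operator-concave functions are jointly operator concave, so $(a,R)\mapsto a\log(R/a)$ is jointly operator concave. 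Taking the trace against $\rho_0\ge0$ keeps it concave.

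The second term is the same perspective applied to the affine pair $(1-a,\Id-R)$, so it is concave as well. Boundary values are handled by the upper-semicontinuous extension, which preserves concavity because it is the closure of a concave function, i.e.\ a limit of the $\eps$-regularizations.

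\textbf{Concavity of $\mathcal A$.} Partial maximization of a jointly concave function over a convex set yields a concave function. Given $a_0,a_1$ with maximizers $R_0,R_1$ and a mixing weight $s$, the pair $(a_s,R_s)$ is feasible, so
\[
 \mathcal A(a_s)\ge\Phi(a_s,R_s)\ge(1-s)\mathcal A(a_0)+s\,\mathcal A(a_1).
\]
The endpoints are covered by the convention above.

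The main obstacle I expect is the boundary bookkeeping: nonfaithful states and endpoint priors, where logarithms diverge and the support convention must be shown to agree with both the perspective extension and the measured value. For the faithful interior I would prove the identity directly, then extend by the regularization $\rho_x^{(\eps)}$ exactly as in Corollary~\ref{cor:shor}. Concavity would pass to the limit because pointwise limits of concave functions are concave, and the uniform continuity of mutual information on the compact flag manifold gives $\mathcal A_\eps\to\mathcal A$.
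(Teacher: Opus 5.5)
Your joint-concavity and partial-maximization steps are the same as the paper's: the operator perspective of $\log$ at the commuting pairs $(a\Id,R)$ and $((1-a)\Id,\Id-R)$, closure at the boundary, and convex combinations of maximizing effects.

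The genuine difference is how you obtain the identity \eqref{eq:joint-posterior-program}. You import it. For faithful states you use Theorem~\ref{thm:fff} via \eqref{eq:posterior-variational}. For singular states you use Corollary~\ref{cor:binary-interior-extension} or regularization. For boundary effects you add an inner approximation $R_\eps$.

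The paper instead proves both inequalities directly, for arbitrary states at once. For any POVM $M$ it forms the posterior effect $R_M=\sum_y r_yM_y$ with $r_y=au_y/m_y$. Operator Jensen for $\log$ gives $\sum_y(\log r_y)M_y\preceq\log R_M$ and $\sum_y\log(1-r_y)M_y\preceq\log(\Id-R_M)$, hence $I(M)\le\Phi(a,R_M)$. Conversely, for any effect $R=\sum_js_jQ_j$, its spectral PVM satisfies $I(Q)-\Phi(a,R)=\sum_jm_jD_{\rm bin}(\widehat r_j\Vert s_j)\ge0$.

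What the paper's argument buys:
\begin{itemize}
\item it is self-contained and needs neither faithfulness nor separate handling of boundary effects;
\item it does not depend on the singular-state scope of the imported theorem, which the paper asserts only briefly in Corollary~\ref{cor:binary-interior-extension};
\item it shows in passing that every effect is dominated by its own spectral PVM.
\end{itemize}
Your route is shorter, given the imported machinery.

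Two small repairs are needed.
\begin{itemize}
\item If you rely on the regularization rather than the corollary, $\mathcal A_\eps\to\mathcal A$ transfers concavity but not the identity itself. For the identity, add $\Phi_\eps(a,R)\le(1-\eps)\Phi(a,R)+\eps h(a)$, which follows from $0\preceq R\preceq\Id$. Combine it with $\Phi_\eps(a,R)\to\Phi(a,R)$ at interior $R$.
\item As written, your endpoint effects are swapped. The value at $a=0$ is attained at $R=0$, and the value at $a=1$ at $R=\Id$. Also, $\Phi(0,R)=\Tr\rho_1\log(\Id-R)\le0$ holds directly, without comparing to mutual information.
\end{itemize}
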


\begin{proof}
Fix a POVM $M=(M_y)_y$.  Put
\begin{equation}\label{eq:posterior-data-variable-prior}
 u_y=\Tr(\rho_0M_y),\quad v_y=\Tr(\rho_1M_y),\quad
 m_y=au_y+(1-a)v_y,
\end{equation}
omit outcomes with $m_y=0$, and define
\begin{equation}\label{eq:posterior-effect-variable-prior}
 r_y=\frac{au_y}{m_y},\qquad R_M=\sum_y r_yM_y.
\end{equation}
If some $r_y$ equals $0$ or $1$, first replace it by
$r_{y,\delta}=(1-2\delta)r_y+\delta$, apply the following inequalities,
trace against the corresponding states, and let $\delta\downarrow0$.  Thus
the operator-Jensen step is understood in the extended support sense.
Operator concavity of the logarithm gives
\begin{align}
 \sum_y(\log r_y)M_y&\preceq\log R_M,\label{eq:posterior-log-one}\\
 \sum_y\log(1-r_y)M_y&\preceq\log(\Id-R_M).
 \label{eq:posterior-log-two}
\end{align}
After tracing against $a\rho_0$ and $(1-a)\rho_1$, respectively, these
inequalities yield $I(M)\leq\Phi(a,R_M)$.

Conversely, let $R=\sum_js_jQ_j$ be its spectral decomposition.  For its
spectral PVM define
\begin{equation}\label{eq:spectral-pvm-data}
 \begin{aligned}
 u_j&=\Tr(\rho_0Q_j),& v_j&=\Tr(\rho_1Q_j),\\
 m_j&=au_j+(1-a)v_j,& \widehat r_j&=au_j/m_j.
 \end{aligned}
\end{equation}
Terms with $m_j=0$ are omitted.  Direct subtraction gives
\begin{equation}\label{eq:spectral-posterior-remainder}
 I(Q)-\Phi(a,R)
 =\sum_jm_jD_{\rm bin}(\widehat r_j\Vert s_j)\geq0,
\end{equation}
with the standard extended binary-relative-entropy convention.  Thus every
effect is dominated by its spectral PVM, proving
\eqref{eq:joint-posterior-program}.  The effect interval is compact and the
closed functional is upper semicontinuous, so the maximum is attained.

For $(a_i,R_i)$, $i=1,2$, and $0<\theta<1$, put
$a=\theta a_1+(1-\theta)a_2$ and
 $R=\theta R_1+(1-\theta)R_2$.  The operator perspective inequality
 \cite{Effros2009}
\begin{align}
 a\log\frac Ra
 &\succeq \theta a_1\log\frac{R_1}{a_1}
 +(1-\theta)a_2\log\frac{R_2}{a_2}
 \label{eq:operator-perspective-prior}
\end{align}
follows from operator concavity of $\log$; the same argument applies to
$(1-a,\Id-R)$.  Hence $\Phi$ is jointly concave in the interior.  The
regularization
\begin{equation}\label{eq:closed-perspective-regularization}
 a_\delta=(1-2\delta)a+\delta,\qquad
 R_\delta=(1-2\delta)R+\delta\Id
\end{equation}
and upper-semicontinuous closure give the boundary statement.  Taking
convex combinations of maximizing effects proves concavity of
$\mathcal A$ on $(0,1)$.  Finally,
\begin{equation}\label{eq:prior-endpoint-bound}
 0\leq\mathcal A_{\rho_0,\rho_1}(a)\leq h(a)
\end{equation}
with $h$ the binary entropy, so the result extends continuously to both
endpoints.
\end{proof}

\subsection{Adaptive local decoding}

Let $W$ be the binary classical--quantum channel
$0\mapsto\rho_0$, $1\mapsto\rho_1$.  In Shor's notation
\cite{ShorAdaptive2004}, its separate-measurement value is
\begin{equation}\label{eq:separate-capacity}
 C_{1,1}(W)=\max_{0\leq a\leq1}
 \mathcal A_{\rho_0,\rho_1}(a).
\end{equation}
An adaptive local decoder receives a product codeword and applies a finite
measurement tree whose instruments act on one tensor factor at a time.
The selected factor and instrument may depend on the previous transcript,
and a factor may be revisited.  If $U$ indexes a deterministic binary
codeword $X^n(U)$ and $H_T$ is the terminal transcript, define
\begin{equation}\label{eq:adaptive-capacity-definition}
 C_{1,A}(W)=\sup_{n,U,X^n,\mathrm{decoder}}
 \frac1n I(U;H_T).
\end{equation}

\begin{lemma}[One-system continuation]
\label{lem:continuation}
Let $X\in\{0,1\}$ have prior $a$ and be encoded as $\tau_0,\tau_1$.
Apply an instrument with outcome $Y$, posterior $a_y$, and normalized
residual states $\tau_{x|y}$.  Then
\begin{equation}\label{eq:continuation-bound}
 I(X;Y)+\sum_y\Pr(Y=y)
 \mathcal A_{\tau_{0|y},\tau_{1|y}}(a_y)
 \leq\mathcal A_{\tau_0,\tau_1}(a).
\end{equation}
Zero-probability branches are omitted.
If a retained branch is impossible under one input letter, choose the
corresponding residual state arbitrarily; its posterior is an endpoint and
the associated accessible information is zero.
\end{lemma}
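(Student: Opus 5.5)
The plan is to compose the instrument with an arbitrary continuation measurement into a single POVM on the original system, and then apply the chain rule for mutual information. Fix $\eps>0$. For each retained outcome $y$, choose a POVM $N^{(y)}=(N^{(y)}_z)_z$ on the residual system that is $\eps$-optimal for the ensemble $\{(a_y,\tau_{0|y}),(1-a_y,\tau_{1|y})\}$:
\[
I_{\{(a_y,\tau_{0|y}),(1-a_y,\tau_{1|y})\}}(N^{(y)})\ge\mathcal A_{\tau_{0|y},\tau_{1|y}}(a_y)-\eps .
\]
If the branch is impossible under one letter, the posterior $a_y$ is $0$ or $1$. Both sides then vanish by \eqref{eq:prior-endpoint-bound}, so any POVM will do.

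Write the instrument as $(\mathcal I_y)_y$, with $\Tr\mathcal I_y(\tau_x)=\Pr(Y=y\mid X=x)$ and $\tau_{x|y}=\mathcal I_y(\tau_x)/\Tr\mathcal I_y(\tau_x)$. Define the composite effects $M_{y,z}=\mathcal I_y^*(N^{(y)}_z)$ by Heisenberg duality. These are positive, since $\mathcal I_y^*$ is completely positive. They sum to $\sum_y\mathcal I_y^*(\Id)=\Id$, because $\sum_y\mathcal I_y$ is trace preserving. So $M=(M_{y,z})$ is a finite POVM on the original system, and $\Tr(\tau_xM_{y,z})=\Pr(Y=y\mid X=x)\Tr(\tau_{x|y}N^{(y)}_z)$.

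Hence the classical joint law of $(X,Y,Z)$ produced by $M$ is: prior $a$, then $Y$ with the instrument's law, then $Z$ given $(X,Y)$ distributed by $N^{(y)}$ on $\tau_{x|y}$. Bayes' rule shows that the conditional law of $X$ given $Y=y$ is exactly the posterior $a_y$. The conditional law of $Z$ given $X=x$ and $Y=y$ is the outcome distribution of $N^{(y)}$ on $\tau_{x|y}$.

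Next apply the chain rule $I(X;YZ)=I(X;Y)+\sum_y\Pr(Y=y)\,I(X;Z\mid Y=y)$. By the previous paragraph, each conditional term equals $I_{\{(a_y,\tau_{0|y}),(1-a_y,\tau_{1|y})\}}(N^{(y)})$, so it is at least $\mathcal A_{\tau_{0|y},\tau_{1|y}}(a_y)-\eps$. The left side $I(X;YZ)$ is the mutual information of the single POVM $M$ for the ensemble $\{(a,\tau_0),(1-a,\tau_1)\}$. It is therefore at most $\mathcal A_{\tau_0,\tau_1}(a)$ by \eqref{eq:prior-value}. Combining the two gives \eqref{eq:continuation-bound} up to $\eps$, and letting $\eps\downarrow0$ finishes the argument. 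Only finitely many outcomes $y$ occur, so the error is at most $\eps$ in total.

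The main care point is the bookkeeping for the boundary cases rather than any real analytic difficulty. Zero-probability $y$ are omitted: their effects $M_{y,z}$ have zero trace against both states, so they contribute nothing. One-letter-impossible branches need the check that the chain-rule term really is $I(X;Z\mid Y=y)=0$ when the conditional prior is degenerate. That holds regardless of the arbitrary choice of residual state, matching the convention in the statement. I would also note that finiteness of all the POVMs keeps every supremum in \eqref{eq:prior-value} over finite POVMs, as in the definition.
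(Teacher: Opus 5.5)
Your proof is correct and follows the paper's own argument. You compose the instrument with near-optimal branch POVMs into the single finite POVM $M_{y,z}=\mathcal I_y^*(N^{(y)}_z)$ on the original system, identify its information via the chain rule $I(X;YZ)=I(X;Y)+\sum_y\Pr(Y=y)\,I(X;Z\mid Y=y)$, and bound it by $\mathcal A_{\tau_0,\tau_1}(a)$. Beyond the paper's version, you add the explicit Heisenberg-picture check that $M$ is a POVM and the boundary-case bookkeeping, which the paper leaves implicit.
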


\begin{proof}
On each nonzero branch, follow the instrument by a POVM approaching the
branch accessible information.  The composed experiment is one POVM on
the original system.  The mutual-information chain rule
\cite[Ch.~2]{CoverThomas2006} identifies its value with the left-hand side
of \eqref{eq:continuation-bound}, up to the arbitrarily small optimization
error.
\end{proof}

Shor posed the equality below as Conjecture~2 in his 2002 preprint,
published in 2004 \cite{ShorAdaptive2004}, and Xiao's 2021 public report
states and presents an argument for both the required prior concavity and
the equality \cite{Xiao2021}.  We give a different proof within the
posterior-lift framework.

\begin{theorem}[Adaptive binary capacity: proof in the present framework]
\label{thm:adaptive-capacity}
For every finite-dimensional binary classical--quantum channel,
\begin{equation}\label{eq:adaptive-capacity-equality}
 C_{1,A}(W)=C_{1,1}(W).
\end{equation}
\end{theorem}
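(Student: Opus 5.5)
The inequality $C_{1,A}\ge C_{1,1}$ is immediate: the adaptive class contains the non-adaptive decoders that measure each factor once with a fixed POVM. The substance is the reverse bound $C_{1,A}\le C_{1,1}$, which I would prove with a transcript potential built from Lemma~\ref{lem:continuation} and the concavity in Theorem~\ref{thm:prior-concavity}.

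\textbf{Setting up the potential.} Fix $n$, the codebook $X^n(U)$, and a finite decoding tree. Let $h_t$ be a transcript prefix after $t$ steps. Conditioned on $h_t$, the message $U$ has a posterior, and each tensor factor $i$ carries a posterior prior $a_i(h_t)=\Pr(X_i=0\mid h_t)$ together with residual states $\tau^{(i)}_{x|h_t}$. Because the codeword is a product and every instrument acts on one factor, the joint state conditioned on $h_t$ and on $U$ remains a product, with unmeasured factors still equal to $\rho_{X_i}$. Define
\begin{equation*}
 \Psi(h_t)=\sum_{i=1}^n\mathcal A_{\tau^{(i)}_{0|h_t},\tau^{(i)}_{1|h_t}}\bigl(a_i(h_t)\bigr).
\end{equation*}
At the root, $\Psi\le\sum_i\mathcal A_{\rho_0,\rho_1}(a_i)\le nC_{1,1}$. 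At the leaves $\Psi\ge0$.

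\textbf{Single-step decrease.} I would show that the expected one-step information gain is bounded by the expected decrease of $\Psi$:
\begin{equation*}
 I(U;Y_{t+1}\mid H_t)\le\mathbb E\bigl[\Psi(H_t)-\Psi(H_{t+1})\bigr].
\end{equation*}
Summing over $t$ and using the chain rule then gives $I(U;H_T)\le\mathbb E\Psi(H_0)\le nC_{1,1}$.

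For the step where factor $i$ is measured, I would split the argument into three moves.
\begin{enumerate}
\item Since $U\to(X_i,\text{other factors})$ and the instrument touches only factor $i$, bound $I(U;Y\mid h_t)\le I(X_i;Y\mid h_t)$. This needs care, because the other factors are correlated with $X_i$ through $U$; what makes it work is that $Y$ depends on $U$ only through $X_i$ and the factor-$i$ state, given $h_t$.
\item Apply Lemma~\ref{lem:continuation} to factor $i$ with its current prior and residual states. This bounds $I(X_i;Y)$ plus the expected continuation value by the current value, so the factor-$i$ term of $\Psi$ decreases at least by the information gained.
\item Handle the other factors $k\ne i$. Their residual states conditioned on $X_k$ are unchanged, but their priors move to $a_k(h_t,y)$, and their residual states are mixtures over $U$. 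By the martingale property $\mathbb E[a_k(H_{t+1})\mid h_t]=a_k(h_t)$, concavity of $a\mapsto\mathcal A(a)$ from Theorem~\ref{thm:prior-concavity} gives $\mathbb E[\mathcal A(a_k(H_{t+1}))]\le\mathcal A(a_k(h_t))$.
\end{enumerate}

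\textbf{Expected main obstacle.} The hard step is making the potential well defined when other factors have already been measured and revisited. In that case the residual state of factor $k$ given $X_k=x$ and $h_t$ may depend on $y$ through $U$-correlations. Concavity in the prior alone is then insufficient.

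My first attempt would use the joint concavity of $\Phi(a,R)$ in Theorem~\ref{thm:prior-concavity}. The idea is to write each factor's value as $\max_R\Phi$ with its pair of prior-weighted residual states, and to note that the map (prior-weighted states) $\mapsto$ accessible information is concave under averaging of ensembles $\{(a\tau_0),((1-a)\tau_1)\}$, since $\Phi$ is a jointly concave perspective in $(a\tau_0,(1-a)\tau_1,R)$. Averaging over $y$ then never increases the potential.

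If that route fails, the fallback is to replace the exact residual states of unmeasured-since factors by the unconditioned encodings, and to track only the measured-factor terms via Lemma~\ref{lem:continuation}.

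Finally, dividing by $n$ and taking the supremum over $n$, codebooks and decoders gives \eqref{eq:adaptive-capacity-equality}.
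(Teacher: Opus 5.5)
Your main line is the paper's proof. The paper's $\mathcal V_t=I(U;H_t)+\mathbb E\,\Psi(H_t)$ is equivalent to your one-step inequality by the chain rule. It then uses the same three moves:
\begin{enumerate}
\item the locality step, which is in fact an equality $I(U;Y\mid h_t)=I(X_i;Y\mid h_t)$;
\item Lemma~\ref{lem:continuation} on the measured factor;
\item prior concavity with the martingale property on the other factors.
\end{enumerate}

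The gap is your ``main obstacle'' paragraph. There you leave the well-definedness of $\Psi$ open for revisited factors, claim that prior concavity is then insufficient, and propose a repair that is false. The obstacle does not exist.

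\textbf{Why the obstacle is illusory.} A transcript $h$ fixes, for each factor $k$, the composition $\mathcal E^h_k$ of all instrument elements applied to factor $k$ along $h$, however many revisits occur. Local operations on a product state keep it product. So given $U=u$, the unnormalized state is $\bigotimes_k\mathcal E^h_k(\rho_{x_k(u)})$. The factor-$k$ state given $(U=u,H_t=h)$ is therefore $\tau^h_{k,x_k(u)}:=\mathcal E^h_k(\rho_{x_k(u)})/\Tr\mathcal E^h_k(\rho_{x_k(u)})$, which depends on $u$ only through $x_k(u)$.

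Conditioning on $X_k=x$ then just averages identical states over $\Pr(U=u\mid X_k=x,h)$, so $U$-correlations cannot enter the state. An instrument on factor $i\ne k$ leaves $\mathcal E^{(h,y)}_k=\mathcal E^h_k$. Hence $\tau^{(h,y)}_{k,x}=\tau^h_{k,x}$ for every $y$, and only $a_k$ moves.

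This is exactly what the paper states: ``a subsequent operation on another factor changes neither this map nor this conditional state pair.'' It is also what justifies your step (1). You asserted the product property only for unmeasured factors. Extending it to measured ones closes the argument with your step (3) exactly as written. The fallback is unnecessary, and would not work anyway, since revisited factors cannot be replaced by unconditioned encodings.

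\textbf{Why your proposed repair fails.} Accessible information is not concave in the prior-weighted pair $(a\tau_0,(1-a)\tau_1)$. At fixed prior it is actually convex in $(\tau_0,\tau_1)$, being a supremum of mutual informations that are convex in the channel.

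Concretely, average the ensemble $\{(\tfrac12,\proj{0}),(\tfrac12,\proj{1})\}$ with its label swap. The result is the pair $(\tfrac14\Id,\tfrac14\Id)$, whose accessible information is $0$, while the average of the two values is $\log2$.

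Correspondingly, $\Phi$ is jointly concave in $(a,R)$ only at fixed $\rho_0,\rho_1$. In the variables $(A_0,A_1,R)$, moving $A_0$ in a traceless direction leaves $\Phi$ linear in $A_0$ but produces a mixed term $\Tr(\delta A_0\,D\log_R(\delta R))$ of either sign. So $\Phi$ is not jointly concave there, and this route would not close the step even if the obstacle were real.
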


\begin{proof}
Conditioned on a transcript $h$, factor $i$ remains a binary experiment
with posterior
\begin{equation}\label{eq:factor-posterior}
 a_i(h)=\Pr(X_i=0\mid H_t=h)
\end{equation}
and residual state pair $\tau_{i,0}^h,\tau_{i,1}^h$.  At an endpoint
posterior, the residual state for the impossible letter is chosen
arbitrarily; $V_i(h)=0$, independently of that choice.  Define
\begin{align}
 V_i(h)&=\mathcal A_{\tau_{i,0}^h,\tau_{i,1}^h}(a_i(h)),
 \label{eq:factor-value}\\
 \mathcal V_t&=I(U;H_t)+
 \mathbb E\!\left[\sum_{i=1}^nV_i(H_t)\right].
 \label{eq:transcript-potential}
\end{align}
Indeed, at fixed $h$ the previously selected operations on factor $i$
compose to one completely positive map fixed by that transcript.  Applied
to a product codeword, its normalized residual state depends on $U$ only
through $X_i$.  A subsequent operation on another factor changes neither
this map nor this conditional state pair; transcript correlations update
only the binary prior.
Suppose the next instrument, fixed after conditioning on $H_t=h$, acts on
factor $i$ and produces $Y$.  Since $U$ determines $X_i$ and the outcome
law depends on $U$ through $X_i$,
\begin{equation}\label{eq:local-information-identity}
 I(U;Y\mid H_t=h)=I(X_i;Y\mid H_t=h).
\end{equation}
Lemma~\ref{lem:continuation} absorbs this immediate information gain and
the expected new value of factor $i$.  For every $j\ne i$, the residual
pair is unchanged and only its posterior is updated.  Theorem
\ref{thm:prior-concavity} and
$\mathbb E[a_j(H_t,Y)\mid H_t=h]=a_j(h)$ give
\begin{equation}\label{eq:side-information-concavity}
 \mathbb E[V_j(H_t,Y)\mid H_t=h]\leq V_j(h).
\end{equation}
Adding the inequalities proves $\mathcal V_{t+1}\leq\mathcal V_t$.
Finite transcript-dependent stopping is handled on the finite measurement
tree, equivalently by padding terminated branches with null operations.

Initially,
\begin{equation}\label{eq:initial-potential-bound}
 \mathcal V_0=\sum_{i=1}^n
 \mathcal A_{\rho_0,\rho_1}(a_i)\leq nC_{1,1}(W).
\end{equation}
All residual values are nonnegative, so
\begin{equation}\label{eq:terminal-information-bound}
 I(U;H_T)\leq\mathcal V_T\leq\mathcal V_0\leq nC_{1,1}(W).
\end{equation}
Taking the supremum in \eqref{eq:adaptive-capacity-definition} gives the
upper bound.  A one-system nonadaptive decoder is an allowed adaptive
protocol, giving the reverse inequality.
\end{proof}

The derivation above supplies a prior-uniform operator-perspective proof and
an explicit transcript potential.  We make no first-proof claim for either
prior concavity or \eqref{eq:adaptive-capacity-equality}.

\subsection{The rare-prior boundary and measurement orders}

For finite POVMs $M,N$, let $M\succeq_{\rm MI}N$ mean
$I(U;Y_M)\geq I(U;Y_N)$ for every finite ensemble.  Let
$M\succeq_{\rm KL}N$ mean
\begin{equation}\label{eq:measurement-kl-order}
 D(M(\rho)\Vert M(\sigma))
 \geq D(N(\rho)\Vert N(\sigma))
\end{equation}
for every state pair.  Observational entropy for
$M=(M_y)_y$, with zero effects omitted, is
\cite{SafranekDeutschAguirre2019,BuscemiSchindlerSafranek2023}
\begin{equation}\label{eq:observational-entropy}
 S_M(\rho)=-\sum_y\Tr(\rho M_y)
 \log\frac{\Tr(\rho M_y)}{\Tr M_y}.
\end{equation}
Write $M\succeq_{\rm OE}N$ when
$S_M(\rho)\leq S_N(\rho)$ for every state.

The equivalence below is the classical less-noisy/relative-entropy
characterization specialized to measurement channels
\cite{KornerMarton1977,MakurPolyanskiy2018}; the final implication is the
observational-entropy specialization also used in
\cite{TeixidoSchindlerSafranek2025}.  We include a direct proof for the
present quantum-input measurement setting.

\begin{theorem}[Measurement-channel form of the less-noisy order]
\label{thm:measurement-orders}
For finite-dimensional POVMs,
\begin{equation}\label{eq:measurement-order-hierarchy}
 M\succeq_{\rm MI}N
 \quad\Longleftrightarrow\quad
 M\succeq_{\rm KL}N
 \quad\Longrightarrow\quad
 M\succeq_{\rm OE}N.
\end{equation}
\end{theorem}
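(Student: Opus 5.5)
We prove the two implications of \eqref{eq:measurement-order-hierarchy} separately. Throughout, measurements are viewed as quantum-to-classical channels $M(\rho)=(\Tr\rho M_y)_y$. For any finite ensemble $\{(\pi_u,\rho_u)\}$ with average $\bar\rho=\sum_u\pi_u\rho_u$, the basic identity is
\begin{equation}\label{eq:mi-as-kl-average}
 I(U;Y_M)=\sum_u\pi_u\,D\bigl(M(\rho_u)\Vert M(\bar\rho)\bigr).
\end{equation}

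\emph{KL $\Rightarrow$ MI.} Apply \eqref{eq:measurement-kl-order} to each pair $(\rho_u,\bar\rho)$ and average with the weights $\pi_u$. By \eqref{eq:mi-as-kl-average} this gives $I(U;Y_M)\ge I(U;Y_N)$ immediately.

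\emph{MI $\Rightarrow$ KL.} This is the rare-prior limit, and I expect it to be the main obstacle. Fix states $\rho,\sigma$ and use the binary ensemble $\{(\eps,\rho),(1-\eps,\sigma)\}$, with average $\omega_\eps=\eps\rho+(1-\eps)\sigma$. Writing $p=M(\rho)$, $q=M(\sigma)$ and $m_\eps=\eps p+(1-\eps)q$, \eqref{eq:mi-as-kl-average} reads
\begin{equation}
 I_\eps(M)=\eps D(p\Vert m_\eps)+(1-\eps)D(q\Vert m_\eps).
\end{equation}
I will show that $I_\eps(M)/\eps\to D(p\Vert q)$ as $\eps\downarrow0$, including the value $+\infty$ when $\supp p\not\subseteq\supp q$.

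\begin{itemize}
\item The second term satisfies $D(q\Vert m_\eps)=O(\eps^2)$. This follows from the expansion $-\sum_y q_y\log(1+\eps(p_y/q_y-1))$ on $\supp q$, using $\sum_{y\in\supp q}p_y\le1$. If $p$ charges outcomes outside $\supp q$, the resulting linear term is $-\eps\sum_{\supp q}(p_y-q_y)\le0$ and must be checked carefully. More simply, the convexity bound $D(q\Vert m_\eps)\ge0$ suffices for the lower estimate.
\item For the first term, $D(p\Vert m_\eps)\to D(p\Vert q)$ by monotone convergence: $m_\eps\to q$, and on outcomes with $q_y=0$ one has $\log(p_y/m_{\eps,y})=\log(1/\eps)\to\infty$.
\end{itemize}

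Since only finitely many outcomes occur, these elementary estimates give $I_\eps(M)/\eps\to D(p\Vert q)$. Dividing $I_\eps(M)\ge I_\eps(N)$ by $\eps$ and passing to the limit then yields \eqref{eq:measurement-kl-order}, including the infinite case. Alternatively, the same limit follows from the triangular density of Proposition~\ref{prop:triangular-weight}: $w_{\eps,1-\eps}/\eps$ converges in $L^1$ to the relative-entropy atomic density. The direct computation avoids that dependence.

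\emph{KL $\Rightarrow$ OE.} Take $\sigma=\Id/d$ in \eqref{eq:measurement-kl-order}. Then $M(\sigma)_y=\Tr M_y/d$, and a direct computation gives
\begin{equation}
 D\bigl(M(\rho)\Vert M(\Id/d)\bigr)=\log d-S_M(\rho),
\end{equation}
after omitting zero effects, for which $\Tr M_y=0$ forces $\Tr\rho M_y=0$. The inequality $D(M(\rho)\Vert M(\Id/d))\ge D(N(\rho)\Vert N(\Id/d))$ is therefore exactly $S_M(\rho)\le S_N(\rho)$, for every state $\rho$.
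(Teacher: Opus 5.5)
Your proof is correct and follows the paper's route. It uses the averaged-divergence decomposition for KL$\Rightarrow$MI, the rare-prior ensemble $\{(\eps,\rho),(1-\eps,\sigma)\}$ for the converse, and $\sigma=\Id/d$ for the observational-entropy implication; the only difference is that you split $I_\eps$ directly, where the paper uses $J_M(\eps)=\eps D(P\Vert Q)-D(R_\eps\Vert Q)$. One slip: when $p\not\ll q$ the linear term of $D(q\Vert m_\eps)$ is $\eps\sum_{y:\,q_y=0}p_y\ge0$, not $\le0$, so $D(q\Vert m_\eps)=O(\eps^2)$ holds only for $p\ll q$; your fallback to $D(q\Vert m_\eps)\ge0$ covers the other case, since the limit there is $+\infty$.
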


\begin{proof}
If $M\succeq_{\rm KL}N$ and
$\bar\rho=\sum_u\pi_u\rho_u$, then
\begin{equation}\label{eq:mutual-relative-decomposition}
 I(U;Y_M)=\sum_u\pi_u
 D(M(\rho_u)\Vert M(\bar\rho)).
\end{equation}
Termwise comparison proves $M\succeq_{\rm MI}N$.

For the converse, apply the all-ensemble inequality to
$\{(\eps,\rho),(1-\eps,\sigma)\}$.  If
$P=M(\rho)$, $Q=M(\sigma)$, and
$R_\eps=\eps P+(1-\eps)Q$, its information is
\begin{align}
 J_M(\eps)
 &=\eps D(P\Vert R_\eps)
 +(1-\eps)D(Q\Vert R_\eps),\label{eq:rare-prior-js}\\
 \lim_{\eps\downarrow0}\frac{J_M(\eps)}{\eps}
 &=D(P\Vert Q).
 \label{eq:rare-prior-relative-limit}
\end{align}
When $P\ll Q$, the limit follows from
\begin{equation}\label{eq:rare-prior-identity}
 J_M(\eps)=\eps D(P\Vert Q)-D(R_\eps\Vert Q).
\end{equation}
Because the alphabet is finite and $P\ll Q$, a Taylor expansion on
$\supp Q$ gives $D(R_\eps\Vert Q)=O(\eps^2)$.  Hence
\eqref{eq:rare-prior-identity} yields
\eqref{eq:rare-prior-relative-limit}.
If $P\not\ll Q$, a coordinate with $P_y>0=Q_y$ contributes
$\eps P_y\log(1/\eps)+O(\eps)$, giving the same extended-value limit.
Thus $J_M(\eps)\geq J_N(\eps)$ for every $\eps$ implies
\eqref{eq:measurement-kl-order}.  The extended cases use the convention
$+\infty\geq x$; a finite left limit and infinite right limit would
contradict the inequalities for small $\eps$.

Finally,
\begin{equation}\label{eq:oe-relative-identity}
 S_M(\rho)=\log d-D(M(\rho)\Vert M(\Id/d)).
\end{equation}
Setting $\sigma=\Id/d$ proves the last implication.  The classical
less-noisy characterization behind the equivalence is reviewed in
\cite{KornerMarton1977,MakurPolyanskiy2018}; Buscemi gives its quantum
channel formulation \cite{Buscemi2016}.
\end{proof}

The converse implication from observational entropy was conjectured in the
authors' October 2023 preprint, whose published version appeared in the
2025 volume of \emph{Physica Scripta}
\cite{TeixidoSchindlerSafranek2025}.  Their Example~16 already provides the
measurement pair, the value $\lambda=1/64$, the observational-entropy
comparison, and failure of the corresponding relative-entropy order.  The
following argument fixes a concrete rare prior and writes an explicit
finite ensemble witness; it is not a claim of a new measurement-pair
separation.

We use the standard binary entropy \cite[Ch.~2]{CoverThomas2006} in the
elementary estimate below.
\begin{lemma}[Binary-entropy bounds]\label{lem:binary-entropy-bounds}
For $|t|\leq1/2$, with entropy measured in bits,
\begin{equation}\label{eq:binary-entropy-bounds}
 \frac{2t^2}{\ln2}
 \leq1-h_2\!\left(\frac12+t\right)
 \leq4t^2.
\end{equation}
\end{lemma}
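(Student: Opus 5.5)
Writing $u=\tfrac12+t$ with $|t|\le 1/2$, I will reduce both inequalities to one-variable statements about the even function
\[
 \phi(t):=1-h_2\!\left(\tfrac12+t\right)
 =\tfrac{1}{\ln2}\Bigl[(\tfrac12+t)\ln(1+2t)+(\tfrac12-t)\ln(1-2t)\Bigr].
\]
Here I have used $h_2(u)=-u\log_2u-(1-u)\log_2(1-u)$ and $\log_2 u=\log_2(1+2t)-1$. Since $\phi$ is even, it suffices to treat $0\le t\le 1/2$. At the endpoint $t=1/2$ the function is defined by continuity, with $\phi(1/2)=1$.

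For the lower bound I will use the power series. For $|x|<1$,
\[
 (1+x)\ln(1+x)+(1-x)\ln(1-x)=\sum_{k\ge1}\frac{x^{2k}}{k(2k-1)} .
\]
Setting $x=2t$ gives
\[
 \phi(t)=\frac{1}{2\ln2}\sum_{k\ge1}\frac{(2t)^{2k}}{k(2k-1)} .
\]
The $k=1$ term equals $\frac{4t^2}{2\ln2}=\frac{2t^2}{\ln2}$, and every remaining term is nonnegative. This proves the left inequality on $|t|<1/2$, and continuity extends it to $|t|=1/2$.

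For the upper bound, consider $\chi(t)=4t^2-\phi(t)$. A first attempt would check $\chi(0)=\chi'(0)=0$ and then study convexity, but this fails. Indeed $\phi''(t)=\frac{4}{\ln2\,(1-4t^2)}$, which exceeds $8$ near $t=1/2$, so $\chi$ is not convex on the whole interval.

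I will instead use concavity of a quotient. Put $y=4t^2\in[0,1]$. The series becomes
\[
 \phi=\frac{1}{2\ln2}\sum_{k\ge1}\frac{y^k}{k(2k-1)} .
\]
This is a power series in $y$ with nonnegative coefficients, so $y\mapsto\phi$ is convex on $[0,1]$. Its values at the endpoints are $0$ at $y=0$ and $1$ at $y=1$. A convex function lies below the chord through its endpoint values, so $\phi\le y=4t^2$, which is the right inequality.

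The only delicate point is the endpoint $y=1$. The series converges there, because its terms are of order $1/k^2$. By Abel's theorem its sum equals the continuous value $\phi(1/2)=1$. This is also confirmed directly: at $y=1$ the series is $\frac{1}{2\ln2}\sum_k\frac{1}{k(2k-1)}=\frac{1}{2\ln2}\cdot 2\ln2=1$. The main step of the argument is recognizing this convexity in $y=4t^2$; the rest is routine.
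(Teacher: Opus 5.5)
Your proof is correct, but it follows a different route from the paper's, which works purely with second derivatives in the original variable.

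\emph{The paper's argument.} For the lower bound, the paper notes that $1-h_2(\tfrac12+t)-2t^2/\ln2$ vanishes to first order at $t=0$ and has second derivative $\frac1{\ln2}\bigl(\frac1{1/4-t^2}-4\bigr)\ge0$. For the upper bound, it sets $u=2|t|$ and considers $u^2-1+h_2\bigl(\frac{1+u}2\bigr)$. This function vanishes at $u=0$ and $u=1$, and its second derivative $2-\frac1{(1-u^2)\ln2}$ is strictly decreasing from a positive value to $-\infty$. The difference is therefore convex-then-concave, and hence nonnegative. That sign change is exactly the obstruction you identified to the naive convexity argument for $4t^2-\phi$; the paper handles it with a single-inflection shape analysis rather than avoiding it.

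\emph{Your argument.} You expand in $y=4t^2$, where $1-h_2$ becomes a power series with nonnegative coefficients $\frac1{2\ln2}\cdot\frac1{k(2k-1)}$ summing to $1$. The lower bound is then just the first term. The upper bound is the chord, or more simply $y^k\le y$ on $[0,1]$, which gives $S(y)\le yS(1)=y$ without invoking convexity.

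\emph{What each buys.} Your route gets both bounds from one expansion and gives the whole hierarchy of truncation bounds. It also makes sharpness transparent: the lower bound is tangent at $y=0$, and the upper bound is attained at both $y=0$ and $y=1$. The cost is a series identity and the endpoint check at $y=1$, which you handle correctly via $\sum_k\frac1{k(2k-1)}=2\ln2$ and Abel continuity. The paper's argument stays within elementary calculus, but needs the slightly more delicate shape argument for the upper bound.
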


\begin{proof}
For the lower bound, subtract $2t^2/\ln2$ from the middle expression;
the result and its first derivative vanish at zero, while its second
derivative is
\begin{equation}\label{eq:entropy-lower-second}
 \frac1{\ln2}\left(\frac1{1/4-t^2}-4\right)\geq0.
\end{equation}
For the upper bound, set $u=2|t|$ and subtract
$1-h_2((1+u)/2)$ from $u^2$.  The difference vanishes at $0$ and $1$;
its derivative increases once and then decreases to $-\infty$, because
\begin{equation}\label{eq:entropy-upper-second}
 \frac{\dd^2}{\dd u^2}
 \left[u^2-1+h_2\!\left(\frac{1+u}{2}\right)\right]
 =2-\frac1{(1-u^2)\ln2}
\end{equation}
is strictly decreasing.  Hence the difference is nonnegative.
\end{proof}

\begin{theorem}[Finite-ensemble witness for the observational-entropy reversal]
\label{thm:entropic-order-reversal}
On a qubit, define
\begin{align}
 M&=\left(\frac12\proj{0}+\frac14\Id,
           \frac12\proj{1}+\frac14\Id\right),
 \label{eq:noisy-binary-measurement}\\
 N_\lambda&=\left(\lambda\proj{0},\lambda\proj{1},
                   (1-\lambda)\Id\right),
 \qquad \lambda=\frac1{64}.
 \label{eq:erasure-measurement}
\end{align}
Then $M\succeq_{\rm OE}N_\lambda$, while the binary ensemble
\begin{equation}\label{eq:explicit-rare-ensemble}
 \{(\eps,\proj{0}),(1-\eps,\proj{1})\},
 \qquad \eps=2^{-64},
\end{equation}
satisfies
\begin{equation}\label{eq:explicit-information-reversal}
 I(U;Y_{N_\lambda})>I(U;Y_M).
\end{equation}
Consequently observational-entropy order and all-ensemble
mutual-information order are inequivalent.
\end{theorem}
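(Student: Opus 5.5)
The plan is to verify the two claims separately and by direct computation, working in bits throughout. Both $M$ and $N_\lambda$ are diagonal in the computational basis, so every outcome probability depends on a state $\rho$ only through $x=\bra0\rho\ket0$. I will write $x=\tfrac12+t$ with $|t|\le\tfrac12$. This reduces the observational-entropy comparison to a one-variable inequality, which Lemma~\ref{lem:binary-entropy-bounds} is designed to settle.

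\emph{Step 1: the OE order.} Compute both entropies from \eqref{eq:observational-entropy}.
\begin{itemize}
\item For $M$, each effect has trace $1$. The outcome probabilities are $\tfrac12\pm\tfrac t2$, so $S_M(\rho)=h_2(\tfrac12+\tfrac t2)$.
\item For $N_\lambda$, the effect traces are $\lambda,\lambda,2(1-\lambda)$ and the probabilities are $\lambda x,\lambda(1-x),1-\lambda$. The normalizations cancel, giving $S_{N_\lambda}(\rho)=\lambda h_2(x)+(1-\lambda)$.
\end{itemize}
Hence $S_M\le S_{N_\lambda}$ is equivalent to
\begin{equation*}
 1-h_2\!\left(\tfrac12+\tfrac t2\right)\ \ge\ \lambda\left(1-h_2\!\left(\tfrac12+t\right)\right).
\end{equation*}
Apply the lower bound of Lemma~\ref{lem:binary-entropy-bounds} on the left, with argument $t/2$, and the upper bound on the right. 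The left side is at least $t^2/(2\ln 2)$ and the right side is at most $4\lambda t^2=t^2/16$. Since $1/(2\ln2)>1/16$, the inequality holds for every $|t|\le\tfrac12$, i.e. for every qubit state. This establishes $M\succeq_{\rm OE}N_\lambda$.

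\emph{Step 2: the information reversal.}
\begin{itemize}
\item For $N_\lambda$ the channel is an erasure channel with non-erasure probability $\lambda$: $\ket0$ gives $(\lambda,0,1-\lambda)$ and $\ket1$ gives $(0,\lambda,1-\lambda)$. Hence $I(U;Y_{N_\lambda})=\lambda h_2(\eps)$. Using $h_2(\eps)\ge\eps\log_2(1/\eps)=64\eps$, this gives $I(U;Y_{N_\lambda})\ge\eps$.
\item For $M$ the outputs are $P=(\tfrac34,\tfrac14)$ and $Q=(\tfrac14,\tfrac34)$, with $P\ll Q$. The identity \eqref{eq:rare-prior-identity} gives $I(U;Y_M)=\eps D(P\Vert Q)-D(R_\eps\Vert Q)\le\eps D(P\Vert Q)$. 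A direct computation gives $D(P\Vert Q)=\tfrac12\log_23<0.8$.
\end{itemize}
Thus $I(U;Y_M)<0.8\,\eps<\eps\le I(U;Y_{N_\lambda})$, which is \eqref{eq:explicit-information-reversal}.

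\emph{Step 3: conclusion.} Together with Theorem~\ref{thm:measurement-orders}, Step 2 shows $M\not\succeq_{\rm MI}N_\lambda$, while Step 1 gives $M\succeq_{\rm OE}N_\lambda$. Hence the OE and all-ensemble MI orders are inequivalent.

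The main obstacle is Step 1: getting the direction of the OE inequality right and checking that the bounds of Lemma~\ref{lem:binary-entropy-bounds} are used with the correct arguments ($t/2$ versus $t$). I would first check the small-$t$ regime, where the constants $1/(2\ln2)$ and $2\lambda/\ln2$ compare favorably, before relying on the global bounds. Step 2 is routine once the crude estimates $h_2(\eps)\ge\eps\log_2(1/\eps)$ and $D(R_\eps\Vert Q)\ge0$ are accepted.
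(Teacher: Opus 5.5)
Your proposal is correct and follows the paper's proof essentially line by line. It uses the same reduction to $x=\bra0\rho\ket0$, the same application of Lemma~\ref{lem:binary-entropy-bounds} with argument $t/2$ for the lower bound and $t$ for the upper bound (giving the gap $(1/(2\ln2)-4\lambda)t^2$), the same erasure value $\lambda h_2(\eps)$, and the same bound $I(U;Y_M)\le\eps D_2(P\Vert Q)=\tfrac12\log_2 3\,\eps$ from \eqref{eq:rare-prior-identity}. The only quibble is that Step~3 does not need Theorem~\ref{thm:measurement-orders}, since the exhibited ensemble already gives $M\not\succeq_{\rm MI}N_\lambda$ directly from the definition.
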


\begin{proof}
Let $x=\bra0\rho\ket0$ and use bits.  Since the two effects of $M$ have
unit trace, while the volumes of $N_\lambda$ are
$(\lambda,\lambda,2(1-\lambda))$,
\begin{align}
 S_M(\rho)&=h_2\!\left(\frac14+\frac x2\right),
 \label{eq:noisy-measurement-oe}\\
 S_{N_\lambda}(\rho)&=1-\lambda+\lambda h_2(x).
 \label{eq:erasure-measurement-oe}
\end{align}
Writing $\delta=x-1/2$ and using
Lemma~\ref{lem:binary-entropy-bounds} gives
\begin{align}
 S_{N_\lambda}(\rho)-S_M(\rho)
 &\geq\left(\frac1{2\ln2}-4\lambda\right)\delta^2>0
 \label{eq:oe-positive-gap}
\end{align}
unless $\delta=0$, where equality holds.

For the ensemble \eqref{eq:explicit-rare-ensemble}, $N_\lambda$ reveals
the letter perfectly with probability $\lambda$ and otherwise returns an
input-independent outcome.  Therefore
\begin{equation}\label{eq:erasure-information-lower}
 I(U;Y_{N_\lambda})=\lambda h_2(\eps)
 >\lambda\eps\log_2(1/\eps)=\eps.
\end{equation}
The conditional distributions generated by $M$ are
$P=(3/4,1/4)$ and $Q=(1/4,3/4)$.  Applying
\eqref{eq:rare-prior-identity} in bits yields
\begin{equation}\label{eq:noisy-information-upper}
 I(U;Y_M)\leq\eps D_2(P\Vert Q)
 =\frac\eps2\log_2 3<\eps.
\end{equation}
This proves the strict reversal.
\end{proof}

\subsection{Atomic convergence at the boundary}

Return to natural logarithms.  The preceding order theorem is the boundary
form of the weighted Jensen--Shannon family in
Proposition~\ref{prop:triangular-weight}, whose classical provenance goes
back to Lin \cite{Lin1991} and whose positive $f$-decomposition follows the
resolvent framework of \cite{SalazarAtomic2026}.

\begin{proposition}[Rare-prior atomic limit]
\label{prop:rare-prior-atomic-limit}
For $f_{\eps,1-\eps}$ from \eqref{eq:js-generator} and its triangular
density $w_{\eps,1-\eps}$ from \eqref{eq:triangular-weight},
\begin{align}
 \lim_{\eps\downarrow0}\frac{f_{\eps,1-\eps}(t)}{\eps}
 &=t\log t-t+1,\qquad t>0,
 \label{eq:rare-generator-limit}\\
 \left\|\frac{w_{\eps,1-\eps}(\lambda)}{\eps}-\lambda
 \right\|_{L^1(0,1)}
 &=\frac\eps2.
 \label{eq:rare-atomic-l1}
\end{align}
Thus the complete positive atomic density converges in $L^1$ to the
relative-entropy density $\lambda$.
\end{proposition}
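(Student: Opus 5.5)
Both claims are elementary computations. For the generator limit \eqref{eq:rare-generator-limit}, I would set $a=\eps$, $b=1-\eps$ in \eqref{eq:js-generator}, so that
\[
f_{\eps,1-\eps}(t)=\eps t\log t-(1+\eps(t-1))\log(1+\eps(t-1)).
\]
Since $\log(1+x)=x+O(x^2)$, the second term equals $\eps(t-1)+O(\eps^2)$ for fixed $t>0$. Dividing by $\eps$ then gives $t\log t-(t-1)=t\log t-t+1$. This is the relative-entropy generator $t\log t$ plus an affine term, consistent with the remark after \eqref{eq:power-relative-limit}.

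For the $L^1$ identity \eqref{eq:rare-atomic-l1}, I would compute directly from \eqref{eq:triangular-weight} with $a=\eps$ and apex $b=1-\eps$. The normalized density is
\[
\frac{w_{\eps,1-\eps}(\lambda)}{\eps}=
\begin{cases}
\lambda, & 0\le\lambda\le1-\eps,\\[2pt]
\dfrac{(1-\eps)(1-\lambda)}{\eps}, & 1-\eps\le\lambda\le1.
\end{cases}
\]
On $[0,1-\eps]$ this agrees exactly with $\lambda$, so the whole $L^1$ error comes from the final interval of length $\eps$.

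On $[1-\eps,1]$, substitute $u=1-\lambda\in[0,\eps]$. The difference becomes
\[
\frac{(1-\eps)u}{\eps}-(1-u)=\frac{u}{\eps}-1\le0.
\]
Its absolute value is $1-u/\eps$, which integrates over $[0,\eps]$ to $\eps-\eps/2=\eps/2$. This gives the stated identity.

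The step that needs the most care is the sign check on the last interval. The absolute value only integrates cleanly because the difference $u/\eps-1$ keeps one sign there, so the integral has no cancellation to track.

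Finally, I would identify $\lambda$ as the relative-entropy density. For $f(t)=t\log t$ we have $f''(t)=1/t$, and Lemma~\ref{lem:curvature-transform} with $\nu(\dd\lambda)=\lambda\,\dd\lambda$ gives
\[
2\int_0^1\frac{\lambda\,\dd\lambda}{((1-\lambda)t+\lambda)^3}.
\]
I would confirm that this integral equals $1/t$, either by antiderivative or by taking the limit $\eps\downarrow0$ in the identity of Proposition~\ref{prop:triangular-weight}. That limit is justified by dominated convergence, since $L^1$ convergence combined with the kernel being bounded on compact sets of $t$ suffices. The affine normalization of $t\log t-t+1$ is consistent, because both sides vanish together with their first derivative at $t=1$.
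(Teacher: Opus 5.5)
Your proposal is correct and follows essentially the same route as the paper. You expand $(1+\eps(t-1))\log(1+\eps(t-1))$ to first order in $\eps$ for the generator limit, and you note that $w_{\eps,1-\eps}/\eps=\lambda$ on $[0,1-\eps]$ while the defect on $[1-\eps,1]$ has one sign and integrates to $\eps/2$. Your extra check that $2\int_0^1\lambda\,\dd\lambda/((1-\lambda)t+\lambda)^3=1/t$ justifies calling $\lambda$ the relative-entropy density, a point the paper's proof leaves implicit.
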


\begin{proof}
The first limit follows by differentiating
$(1+\eps(t-1))\log(1+\eps(t-1))$ at $\eps=0$.  On
$0\leq\lambda\leq1-\eps$, the normalized triangular density equals
$\lambda$.  On the remaining interval, direct integration gives
\begin{equation}\label{eq:rare-atomic-endpoint-integral}
 \int_{1-\eps}^1
 \left[\lambda-\frac{1-\eps}{\eps}(1-\lambda)\right]\dd\lambda
 =\frac\eps2.
\end{equation}
The bracket is nonnegative there, proving \eqref{eq:rare-atomic-l1}.
\end{proof}

The affine terms $-t+1$ vanish in every classical $f$-divergence.
Accordingly, \eqref{eq:rare-generator-limit} recovers relative entropy,
while \eqref{eq:rare-atomic-l1} upgrades the scalar limit in
\eqref{eq:rare-prior-relative-limit} to convergence of the full positive
atomic spectrum.

%% file: sections/10-outlook.tex
\section{Limitations and questions left open}\label{sec:outlook}

The lift--block--atom architecture now appears in the introduction.  What
remains is to characterize its scope, extend it to singular strata, and
turn the exact optimality test into a robust numerical method.

\subsection{Intrinsic liftability}

Definition~\ref{def:ofl} gives a sufficient condition.  We do not provide
an intrinsic characterization, partly because the auxiliary coordinate
$\psi$ need not be unique.

\begin{question}[Intrinsic liftability]\label{q:intrinsic-liftability}
Which strictly convex generators admit an operator-concave coordinate
$\psi$ for which $f^*\!\circ\psi$ is operator convex?  Can the answer be
expressed directly in terms of $f''$, a L\"owner measure
\cite{HansenPedersen1982,BhatiaMatrix1997}, or the atomic measure $\nu_f$?
\end{question}

The triangular Jensen--Shannon density and logarithmic posterior coordinate
provide a concrete test case for any intrinsic characterization.

\subsection{Singular and quantitative geometry}

For nonfaithful states the flag manifold is stratified by zero-probability
patterns, and ordinary Hessians must be supplemented by tangent-cone or
one-sided conditions.

\begin{question}[Stratified strict ascent]\label{q:stratified}
For arbitrary states, does every nonglobal projective critical point admit
either a smooth two-level ascent direction within its probability stratum
or a one-sided ascent direction leaving that stratum?
\end{question}

Theorem~\ref{thm:quadratic-robust} answers the robust value-gap and
optimal-score questions globally for the quadratic generator
$h_1(t)=(t-1)^2$.  Its two-sided constants expose the minimum score gap
$\delta_P$ as distinct scores approach coalescence, and
Corollary~\ref{cor:quadratic-flag-bound} converts the estimate to a
flag-manifold distance under simple-score separation.  The remaining
difficulty lies in the nonconstant Hessian of a general lift.

\begin{question}[Nonquadratic robust landscape]\label{q:robust}
For a nonquadratic smoothly operator-Fenchel liftable generator $f$, does
an analogue of Theorem~\ref{thm:quadratic-robust} hold on compact
likelihood-ratio intervals?  Can the global value gap and the distance to
the optimal score set be bounded above and below by a divided-difference
residual combining $[\Gamma_P,G_P]$, within-cluster block residuals, and the
minimum score gap $\delta_P$, with constants controlled explicitly by
$\rho,\sigma$ and the curvature of the lift?  Under simple-score
separation, can this score estimate be converted into a flag-manifold
distance and a Polyak--Lojasiewicz inequality
\cite{KarimiNutiniSchmidt2016}?
\end{question}

\subsection{From the exact test to a numerical method}

Away from exact criticality, score clustering is unstable and a divided-
difference interpolation should replace hard block thresholds.  Turning
Algorithm~\ref{alg:score-block} into a floating-point method with rigorous
termination is left for future work.  General matrix-manifold optimization and
strict-saddle avoidance results provide useful algorithmic context
\cite{AbsilMahonySepulchre2008,Boumal2023,LeeSimchowitzJordanRecht2016}.
The problem-specific residual test and its extension beyond the quadratic
benchmark remain outside those general results.

\subsection{Dynamic and boundary extensions}

Theorem~\ref{thm:adaptive-capacity} uses two properties of the lifted value
function: concavity under posterior averaging and a continuation inequality
under local instruments.  This suggests a criterion for other families of
local experiments.

\begin{question}[Adaptivity collapse]\label{q:adaptivity-collapse}
Which measured divergences or decision values admit a jointly concave lift
whose partial maximum is a supermartingale potential for adaptive local
protocols?
\end{question}

Proposition~\ref{prop:rare-prior-atomic-limit} shows that a boundary tangent
of one measured-divergence family can determine a complete comparison
order.  It is natural to ask which other atomic limits have the same
order-completeness property.

\begin{question}[Atomic boundary orders]\label{q:atomic-boundary-orders}
For which positive atomic families does comparison at every interior
parameter reduce to comparison of one or more boundary divergences?
\end{question}

\subsection{Beyond binary experiments}

For $m\ge3$ letters, accessible information couples $m$ output
distributions and the scalar posterior score becomes a vector in an
$(m-1)$-simplex.  Several posterior coordinates need not be functions of
one Hermitian operator, creating a simultaneous-spectrality obstruction.

\begin{question}[Multiletter score geometry]\label{q:multiletter}
Is there a matrix-simplex lift of multiletter accessible information whose
projective critical points admit block rigidity?  If not, can the failure
be characterized by noncommuting posterior operators?
\end{question}

\subsection{Conclusion}

A projective measurement can be a nonglobal critical point only when
distinct outcomes share a scalar likelihood score while the two states
remain nonproportional inside that score subspace.  A rotation of two basis
vectors exposes the hidden defect, and every resolvent atom detects the
same ascent direction with positive weight.  This turns projective
sufficiency from an existence theorem into a local-to-global landscape
criterion.  The same posterior lift also supplies a dynamic potential for
adaptive decoding, while its rare-prior atomic boundary governs a complete
measurement-comparison order.  For the quadratic atom, the same residuals
control the global value gap and optimal-score distance with explicit
conditioning on the minimum score gap.

%% file: appendices/A-calculus.tex
\section{Matrix-calculus identities}\label{app:calculus}

This appendix records the finite-dimensional identities used in the main proofs. They are standard; the purpose is to make the dependence of the argument explicit.

\subsection{Divided differences}

Let $A=\sum_\alpha a_\alpha E_\alpha$ be Hermitian and let $h$ be continuously differentiable on an interval containing $\spec A$. Polynomial approximation or the Cauchy integral formula gives
\begin{equation}\label{eq:app-divdiff}
 Dh_A(X)=\sum_{\alpha,\beta}h^{[1]}(a_\alpha,a_\beta)E_\alpha XE_\beta.
\end{equation}
Because $h^{[1]}(x,y)=h^{[1]}(y,x)$ is real, $Dh_A$ is self-adjoint for the Hilbert--Schmidt inner product:
\begin{equation}
 \Tr(YDh_A(X))=\Tr(XDh_A(Y)).
\end{equation}
Multiplying \eqref{eq:app-divdiff} by $E_\alpha$ on both sides gives
\begin{equation}
 E_\alpha Dh_A(X)E_\alpha=h'(a_\alpha)E_\alpha XE_\alpha.
\end{equation}
Furthermore,
\begin{align}
 [A,Dh_A(X)]
 &=\sum_{\alpha,\beta}(a_\alpha-a_\beta)
 h^{[1]}(a_\alpha,a_\beta)E_\alpha XE_\beta\\
 &=\sum_{\alpha,\beta}(h(a_\alpha)-h(a_\beta))E_\alpha XE_\beta\\
 &=[h(A),X].
\end{align}
This proves \eqref{eq:block-derivative} and \eqref{eq:commutator-derivative}.

\subsection{Perspective derivatives}

Let
\begin{equation}
 \varphi_f(p,q)=qf(p/q),\qquad p,q>0,
\end{equation}
and $t=p/q$. Then
\begin{align}
 \partial_p\varphi_f(p,q)&=f'(t),\label{eq:perspective-p}\\
 \partial_q\varphi_f(p,q)&=f(t)-tf'(t)=-f^*(f'(t)).\label{eq:perspective-q}
\end{align}
Differentiating once more,
\begin{equation}\label{eq:perspective-matrix}
 \nabla^2\varphi_f(p,q)
 =\frac{f''(t)}q
 \begin{pmatrix}
 1&-t\\
 -t&t^2
 \end{pmatrix}.
\end{equation}
Hence, for an increment $(u,v)$,
\begin{equation}
 \dd^2\varphi_f[(u,v),(u,v)]
 =\frac{f''(t)}q(u-tv)^2.
\end{equation}
This is \eqref{eq:perspective-hessian}.

Suppose two outcomes have the same ratio $t$ and their probabilities vary with fixed pairwise sums. Their first derivatives satisfy
\begin{equation}
 \dot p_j+\dot p_k=0,
 \qquad
 \dot q_j+\dot q_k=0,
\end{equation}
and the same holds for second derivatives. Equations~\eqref{eq:perspective-p}--\eqref{eq:perspective-q} show that the linear acceleration terms cancel, because the two gradients agree. This proves the cancellation used in Lemma~\ref{lem:pair-rotation}.

\subsection{Pair rotations}

For the rotation \eqref{eq:two-level-j}--\eqref{eq:two-level-k}, let $X=X^*$ be any Hermitian matrix. At $\theta=0$,
\begin{align}
 \frac{\dd}{\dd\theta}\bra{j(\theta)}X\ket{j(\theta)}\Big|_0
 &=2\Re(e^{i\phi}X_{jk}),\label{eq:pair-der-j}\\
 \frac{\dd}{\dd\theta}\bra{k(\theta)}X\ket{k(\theta)}\Big|_0
 &=-2\Re(e^{i\phi}X_{jk}).\label{eq:pair-der-k}
\end{align}
Applying this to $X=\rho-t\sigma$ and choosing $\phi$ to align the phase of $X_{jk}$ gives the derivatives used in Theorem~\ref{thm:strict-ascent}.

\subsection{The logarithm}

For $X>0$,
\begin{equation}\label{eq:log-first-app}
 D\log_X(H)=\int_0^\infty(X+t\Id)^{-1}H(X+t\Id)^{-1}\,\dd t.
\end{equation}
Differentiating the resolvents gives the second directional derivative
\begin{equation}\label{eq:log-second-app}
 \begin{gathered}
 B_t=(X+t\Id)^{-1},\\
 D^2\log_X[H,H]
 =-2\int_0^\infty B_tHB_tHB_t\,\dd t.
 \end{gathered}
\end{equation}
For $B=(X+t\Id)^{-1}$,
\begin{equation}
 BHBHB=B^{1/2}(B^{1/2}HB^{1/2})^2B^{1/2}\ge0.
\end{equation}
Thus $\log$ is operator concave and each trace functional $X\mapsto\Tr(\rho\log X)$ is concave. If $\Tr(\rho BHBHB)=0$ for all sufficiently large $t$, multiplication by $t^3$ and passage to the limit gives
\begin{equation}
 \Tr(\rho H^2)=0.
\end{equation}
Since $\Tr(\rho H^2)=\Tr(H\rho H)=\|\rho^{1/2}H\|_2^2$, this is equivalent to $H$ annihilating $\supp\rho$. This is the strictness criterion used in Lemma~\ref{lem:posterior-strict}.

%% file: appendices/B-triangular.tex
\section{Evaluation of the triangular atomic measure}\label{app:triangular}

Let $a+b=1$ and put
\begin{equation}
 m(t)=at+b=t+b(1-t).
\end{equation}
For $t\ne1$, set $c=1-t$ and use $u=t+c\lambda$. The two elementary antiderivatives are
\begin{align}
 \int\frac{\lambda\,\dd\lambda}{(t+c\lambda)^3}
 &=\frac1{c^2}\left(-\frac1u+\frac{t}{2u^2}\right),\label{eq:anti-one}\\
 \int\frac{1-\lambda}{(t+c\lambda)^3}\,\dd\lambda
 &=\frac1{c^2}\left(\frac1u-\frac1{2u^2}\right).
 \label{eq:anti-two}
\end{align}
The relevant bounds are
\begin{equation}
 u(0)=t,
 \qquad
 u(b)=m(t),
 \qquad
 u(1)=1.
\end{equation}
Therefore
\begin{align}
 &2a\int_0^b\frac{\lambda\,\dd\lambda}{((1-\lambda)t+\lambda)^3}
 +2b\int_b^1\frac{(1-\lambda)\,\dd\lambda}{((1-\lambda)t+\lambda)^3}
 \notag\\
 &\quad=\frac{2a}{c^2}
 \left(-\frac1m+\frac{t}{2m^2}+\frac1{2t}\right)
 +\frac{2b}{c^2}
 \left(\frac12+\frac1{2m^2}-\frac1m\right).
 \label{eq:triangular-unsimplified}
\end{align}
Using $m=at+b$ and $a+b=1$, the right-hand side simplifies to
\begin{equation}\label{eq:triangular-evaluated}
 \frac{ab}{t(at+b)}.
\end{equation}
The expression is continuous at $t=1$, where both sides equal $ab$.

Let
\begin{equation}
 H(t)=\int_0^1w_{a,b}(\lambda)h_\lambda(t)\,\dd\lambda.
\end{equation}
Equation~\eqref{eq:triangular-evaluated} and \eqref{eq:atom-second} give
\begin{equation}
 H''(t)=\frac{ab}{t(at+b)}=f_{a,b}''(t).
\end{equation}
Both $H$ and $f_{a,b}$ vanish with their first derivative at $t=1$. Hence $H=f_{a,b}$, proving \eqref{eq:js-atomic}.

There is also a Green-function interpretation. For a twice differentiable scalar function $G$ on $[0,1]$,
\begin{equation}
 aG(0)+bG(1)-G(b)
 =\int_0^1w_{a,b}(\lambda)G''(\lambda)\,\dd\lambda.
\end{equation}
The triangular function $w_{a,b}$ is the Green kernel of the second derivative with a point source at the prior $b$. Applying this identity to the negative entropy along the affine segment between two classical distributions gives the atomic decomposition of binary mutual information directly.

%% file: appendices/C-boundary.tex
\section{Singular qubit boundary analysis}\label{app:boundary}

We give the endpoint calculation underlying Lemma~\ref{lem:boundary-stationary}. Let the state plane be parameterized by a smooth unit axis $n(\theta)$ and suppose
\begin{equation}
 p(\theta)=\frac{1+r_0\cdot n(\theta)}2,
 \qquad
 q(\theta)=\frac{1+r_1\cdot n(\theta)}2.
\end{equation}
Assume $p(0)=0$. Then $|r_0|=1$ and $n(0)=-r_0$. Because $n'(0)\perp n(0)$,
\begin{equation}
 p'(0)=\frac12r_0\cdot n'(0)=0.
\end{equation}
A Taylor expansion on the circle gives
\begin{equation}\label{eq:p-boundary-order}
 p(\theta)=c\theta^2+O(\theta^3),
 \qquad
 p'(\theta)=2c\theta+O(\theta^2)
\end{equation}
for some $c>0$ in a nondegenerate angular chart.

The binary entropy satisfies
\begin{equation}
 h'(u)=\log\frac{1-u}{u}.
\end{equation}
Consequently,
\begin{equation}\label{eq:entropy-boundary-limit}
 h'(p(\theta))p'(\theta)=O(\theta\log|\theta|)\longrightarrow0.
\end{equation}
Let
\begin{equation}
 m(\theta)=ap(\theta)+bq(\theta).
\end{equation}
For $0<q(0)<1$, all nonsingular entropy derivatives are finite. Differentiating
\begin{equation}
 I(\theta)=h(m(\theta))-ah(p(\theta))-bh(q(\theta))
\end{equation}
and using \eqref{eq:entropy-boundary-limit} yields
\begin{equation}
 I'(0)=bq'(0)\bigl[h'(bq(0))-h'(q(0))\bigr].
\end{equation}
Because $0<b<1$ and $0<q(0)<1$, one has $bq(0)\ne q(0)$. Strict monotonicity of $h'$ makes the bracket nonzero. Therefore boundary stationarity is equivalent to $q'(0)=0$.

Choose the angular chart so that $n'(0)$ spans the in-plane direction perpendicular to $n(0)$. The equation
\begin{equation}
 q'(0)=\frac12r_1\cdot n'(0)=0
\end{equation}
then implies that $r_1$ is parallel to $n(0)$, hence parallel to $r_0$. The two density matrices commute.

For commuting qubit states, their common eigenbasis is globally sufficient for every classical $f$-divergence and in particular for mutual information. Indeed, if $\rho_0$ and $\rho_1$ are diagonal in a basis $(e_z)$, then any POVM produces
\begin{equation}
 \Tr(\rho_xM_y)=\sum_z\langle e_z,\rho_xe_z\rangle
 \langle e_z,M_ye_z\rangle.
\end{equation}
The matrix $T(y\mid z)=\langle e_z,M_ye_z\rangle$ is a classical stochastic channel. Thus every measurement is a post-processing of the common eigenbasis measurement, and classical data processing proves optimality of that basis.

The endpoint subcases are also rigid. If $q(0)=0$, then $r_1\cdot n(0)=-1$, forcing $|r_1|=1$ and $r_1=-n(0)=r_0$: the states are the same pure state. If $q(0)=1$, then $r_1=n(0)=-r_0$: the states are orthogonal pure states and the boundary PVM distinguishes them perfectly. Exchanging $p$ and $q$, or the plus and minus outcomes, covers every singular configuration.

This calculation also shows why the entropy singularity is harmless: a probability reaches zero quadratically under a projective rotation, while the entropy derivative diverges only logarithmically. For other $f$-divergences the product $f'(p/q)p'$ may require a separate boundary-growth analysis.

%% file: appendices/D-literature.tex
\section{Relation to Prior Work}\label{app:literature}

\subsection{Scope of comparison}

The comparison separates imported theorems, prior corollaries,
optimizer-level structural overlap, and statements about the geometry of
arbitrary projective critical points.  The contribution claims below are
limited to these stated distinctions and are not absolute priority claims.

\begin{table*}[!t]
\caption{Attribution and scope of the principal results.}
\label{tab:novelty}
\centering
\scriptsize
\renewcommand{\arraystretch}{1.08}
\begin{tabularx}{\textwidth}{@{}p{0.25\textwidth}p{0.15\textwidth}Y@{}}
\toprule
Statement & Status & Basis of attribution \\
\midrule
Concave one-operator lift and equality of POVM and projective optima under
the operator-Jensen hypothesis
& Imported
& Hiai's Theorems~5.7--5.8 \cite{HiaiBook2021} and the finite-dimensional
operator-Jensen formulation of Fang--Fawzi--Fawzi
\cite{FangFawziFawzi2026}; the R\'enyi and relative-entropy formulas are
also prior \cite{BertaFawziTomamichel2017}. \\

Positive $\chi^2_\lambda$ atomization of operator-convex generators
& Prior
& The classical L\"owner/Hansen--Pedersen representation; see
\cite{HansenPedersen1982,BhatiaMatrix1997}.  Salazar gives the corresponding
resolvent-atom decomposition of Petz quantum $f$-divergences
\cite{SalazarAtomic2026}. \\

Projective sufficiency for binary accessible information
& Prior corollary
& Follows from Fang--Fawzi--Fawzi after the weighted Jensen--Shannon
specialization.  Wang--Wang--Chen \cite{WangWangChen2026} explicitly record
the same implication and claim no priority for it. \\

Compressed-state proportionality on optimal posterior blocks and
refinement of optimal finite POVMs
& Prior structural result
& Proved by Wang--Wang--Chen for faithful ensembles whenever their affine
bound is exact and the optimized posterior algebra is abelian; binary and
collinear ensembles are automatic instances. \\

Local-to-global classification of every critical PVM for the liftable
$f$-class, with a pair-rotation Hessian at every nonglobal critical point
& Established here; scoped comparison
& The cited optimizer literature supplies first-order equations and
optimizer-level posterior structure.  The statement added here is the
arbitrary-critical-PVM equivalence, flag-manifold commutator, and explicit
positive pair-rotation Hessian. \\

Atomic resolution of the projective escape curvature
& Established here; scoped comparison
& Combines the pair-rotation Hessian with the prior positive atomization. \\

Two-sided commutator--residual error bounds for the quadratic atom,
including optimal-score and separated-score flag estimates
& Established here; scoped comparison
& Follows by exact completion of the quadratic matrix lift; the minimum
score gap quantifies conditioning as distinct scores approach coalescence. \\

Joint prior--effect concavity and the adaptive equality
$C_{1,A}=C_{1,1}$ for binary mixed-state channels
& Proof in this framework of a previously claimed conclusion
& Shor isolated prior concavity as the missing step
\cite{ShorAdaptive2004}.  Xiao's 2021 report states and presents an
argument for both conclusions \cite{Xiao2021}.  The present derivation adds
the joint operator perspective and an explicit transcript potential, with
no first-proof claim. \\

All-ensemble mutual-information order, its relative-entropy boundary, and
the observational-entropy reversal
& Direct synthesis and explicit finite witness
& The less-noisy/relative-entropy equivalence is classical
\cite{KornerMarton1977,MakurPolyanskiy2018}; the measurement pair and its
observational-entropy/relative-entropy separation are from
\cite{TeixidoSchindlerSafranek2025}.  This paper writes a concrete
rare-prior ensemble and supplies the $L^1$ atomic-limit interpretation. \\

Exactly two stationary in-plane qubit measurements and the
Thai--Dall'Arno shape/bisection consequences
& Established here; scoped comparison
& The exact stationary-point conjecture is Conjecture~2 on p.~77 of Keil's
2009 thesis \cite{KeilThesis2009}; Thai--Dall'Arno's December 2025 preprint
formulates Conjectures~1--2 and the conditional bisection scheme
\cite{ThaiDallArno2026}. \\

Benign projective landscapes for measured R\'enyi divergences and measured
relative entropy
& Established here; scoped comparison
& Equality of optimal values and the relevant variational formulas are
prior \cite{BertaFawziTomamichel2017,FangFawziFawzi2026}.  The statement
added here is their critical-point classification and pair-rotation
curvature. \\
\bottomrule
\end{tabularx}
\end{table*}

\subsection{Imported variational theorem}

Theorem~2 of Fang, Fawzi, and Fawzi gives a sufficient operator-Jensen
condition under which a measured $f$-divergence equals its projectively
measured version and a one-matrix variational optimum
\cite{FangFawziFawzi2026}.  They identify their projective formula with
Hiai's Theorem~5.7 and describe their criterion as similar to Hiai's
Theorem~5.8 \cite{HiaiBook2021}.  Their printed hypotheses use a globally
specified proper lower-semicontinuous convex $f$ and a one-to-one
operator-concave $\psi$ parametrizing $\operatorname{dom}f^*$, with
$f^*\!\circ\psi$ operator convex.  Definition~\ref{def:ofl} is a stronger
smooth-interior specialization for the differential theory; the
distinction and boundary treatment are stated explicitly in
Section~\ref{sec:preliminaries}.

The measured R\'enyi and relative-entropy formulas used here were already
established by Berta, Fawzi, and Tomamichel
\cite{BertaFawziTomamichel2017}.  The broader measured-divergence framework
is developed in \cite{HiaiBook2021}.  All of these variational ingredients
are imported.

\subsection{Binary accessible information and Wang--Wang--Chen}

The binary assertion now called Shor's orthogonal-measurement conjecture
grew from Fuchs--Peres numerical evidence reported in Shor's 2000 preprint,
which also refuted Levitin's broader conjecture
\cite{Levitin1995,Shor2000}.  Shor did
not state the surviving binary assertion as a numbered conjecture; the
association with his name is later nomenclature.  Thai and Dall'Arno still
described the arbitrary-dimensional assertion as open in their December
2025 preprint, published in 2026 \cite{ThaiDallArno2026}.
Section~\ref{sec:shor} records its weighted Jensen--Shannon specialization
of the Fang--Fawzi--Fawzi theorem.

Wang, Wang, and Chen posted a posterior-algebra treatment on August 13,
2026 \cite{WangWangChen2026}.  Their preprint explicitly observes that
Fang--Fawzi--Fawzi already implies projective sufficiency for binary and,
more generally, collinear ensembles, and makes no priority claim for that
existence statement.  For faithful ensembles they prove that their affine
bound is exact exactly when the optimized posterior algebra is abelian.  In
the exact regime they characterize all optimal finite POVMs as
label-independent block refinements of a canonical joint-spectral PVM,
prove compressed-state proportionality on those blocks, and obtain
quantitative near-optimal rigidity and finite-iterate optimality guarantees.

For binary ensembles their concave variable is related to our posterior
effect by an affine change of coordinates, so the two lifted programs
coincide.  Their work therefore overlaps the optimizer-level posterior
structure and the uniqueness of the faithful binary lifted optimizer.  The
distinct statement established here is the first- and second-order geometry
of arbitrary projective measurements, including a two-level escape Hessian
at each nonglobal critical PVM and the exact two-stationary-axis qubit
theorem.

\subsection{Adaptive capacity and prior concavity}

Shor formulated the mixed-state binary equality as Conjecture~2 in the
June 2002 preprint, published in 2004 \cite{ShorAdaptive2004}, and explained
that concavity of binary accessible information in the input prior would
close the information-tree argument.  Xiao's August 2021 Research Science
Institute report publicly states and presents an argument for both prior
concavity and the adaptive equality \cite{Xiao2021}.  We therefore make no
first-proof claim for either conclusion.

The fixed-prior posterior-effect program in
\eqref{eq:joint-posterior-program} is equivalent, after an affine change of
variable, to the concave program used by Wang--Wang--Chen
\cite{WangWangChen2026}.  The contribution in
Theorem~\ref{thm:prior-concavity} is its joint operator-perspective
concavity in the prior and effect.  Theorem~\ref{thm:adaptive-capacity}
then gives a transcript-potential reduction for every finite adaptive local
decoder in Shor's model.

\subsection{Measurement orders and the rare-prior boundary}

The all-ensemble mutual-information comparison is the less-noisy preorder.
Its classical equivalence with pointwise relative-entropy contraction goes
back to the channel-comparison literature
\cite{KornerMarton1977,MakurPolyanskiy2018}; Buscemi gives the corresponding
quantum-channel formulation \cite{Buscemi2016}.  Theorem
\ref{thm:measurement-orders} includes a direct proof for measurement
channels with a quantum input.

Teixid\'o-Bonfill, Schindler, and \v{S}afr\'anek explicitly proposed the
all-ensemble/observational-entropy equivalence in their October 2023
preprint; the article was published online on December 27, 2024, in the
2025 volume of \emph{Physica Scripta}
\cite{TeixidoSchindlerSafranek2025}.  Their Example~16 already contains the
measurements in \eqref{eq:noisy-binary-measurement}--
\eqref{eq:erasure-measurement}, the value $\lambda=1/64$, the
observational-entropy dominance, and failure of the corresponding
relative-entropy order.  Theorem~\ref{thm:entropic-order-reversal} combines
those prior ingredients with the rare-prior equivalence and writes the explicit
finite ensemble \eqref{eq:explicit-rare-ensemble}.  Proposition
\ref{prop:rare-prior-atomic-limit} adds the core-theoretic statement that
the normalized triangular atomic density converges in $L^1$ to the
relative-entropy density.

\subsection{Qubit and numerical antecedents}

Foundational accessible-information work established ensemble-dependent
bounds, lower bounds, binary-state analyses, and exact results for important
symmetric sources
\cite{FuchsCaves1994,JozsaRobbWootters1994,Levitin1995,Fuchs1996,
SasakiBarnettJozsaOsakiHirota1999}.  The related informational-power
viewpoint and later tight-bound program study the capacity of a fixed
measurement and global information bounds
\cite{DallArnoDArianoSacchi2011,DallArnoBuscemiOzawa2014}.  These works are
part of the context for receiver optimization.  The result established
here adds the arbitrary-critical-PVM Hessian classification.

Keil's 2008 preprint, published in 2024, proves projective sufficiency for
binary qubit ensembles and the von-Neumann nature of every local POVM
maximum \cite{Keil2024}.  The exact two-stationary-point conjecture appears
as Conjecture~2 on p.~77 of his 2009 thesis \cite{KeilThesis2009}.  Thai and
Dall'Arno quote that thesis conjecture, reformulate it as their
quasi-concavity and pseudo-concavity Conjectures~1--2, and make their
derivative-bisection guarantee conditional on the latter
\cite{ThaiDallArno2026}.  Varga, Adam, and Bergou give an
analytic/parametric characterization of the information-optimal binary
qubit measurement for known priors \cite{VargaAdamBergou2024}.  The
all-stationary-axis classification is the additional statement proved here.

Davies' outcome bounds \cite{Davies1978}, the iterative method of
Reh\'a\v{c}ek, Englert, and Kaszlikowski
\cite{RehacekEnglertKaszlikowski2005}, and Matsumoto's general stationary
equations \cite{Matsumoto2014} establish substantial prior work on first-
order conditions and numerical receiver updates.  The distinction of the
present result is the finite score-block test for global POVM
optimality and its explicit second-order escape witness.

%% file: references.bib
@article{AliSilvey1966,
  author  = {Ali, S. M. and Silvey, S. D.},
  title   = {A general class of coefficients of divergence of one distribution from another},
  journal = {J. Roy. Statist. Soc. Ser. B},
  volume  = {28},
  number  = {1},
  pages   = {131--142},
  year    = {1966},
  doi     = {10.1111/j.2517-6161.1966.tb00626.x}
}

@article{Csiszar1967,
  author  = {Csisz\'ar, Imre},
  title   = {Information-type measures of difference of probability distributions and indirect observations},
  journal = {Studia Sci. Math. Hungar.},
  volume  = {2},
  pages   = {299--318},
  year    = {1967},
  url     = {https://ndlsearch.ndl.go.jp/en/books/R100000136-I1572824501190134016}
}

@article{Petz1986,
  author  = {Petz, D\'enes},
  title   = {Quasi-entropies for finite quantum systems},
  journal = {Rep. Math. Phys.},
  volume  = {23},
  number  = {1},
  pages   = {57--65},
  year    = {1986},
  doi     = {10.1016/0034-4877(86)90067-4}
}

@book{HiaiBook2021,
  author    = {Hiai, Fumio},
  title     = {Quantum $f$-Divergences in von Neumann Algebras: Reversibility of Quantum Operations},
  series    = {Mathematical Physics Studies},
  publisher = {Springer},
  address   = {Singapore},
  year      = {2021},
  doi       = {10.1007/978-981-33-4199-8}
}

@article{BertaFawziTomamichel2017,
  author  = {Berta, Mario and Fawzi, Omar and Tomamichel, Marco},
  title   = {On variational expressions for quantum relative entropies},
  journal = {Lett. Math. Phys.},
  volume  = {107},
  number  = {12},
  pages   = {2239--2265},
  year    = {2017},
  doi     = {10.1007/s11005-017-0990-7},
  eprint  = {1512.02615},
  archivePrefix = {arXiv}
}

@article{FangFawziFawzi2026,
  author       = {Fang, Kun and Fawzi, Hamza and Fawzi, Omar},
  title        = {Uhlmann's theorem for measured divergences},
  journal      = {IEEE Trans. Inform. Theory},
  volume       = {72},
  number       = {3},
  pages        = {1751--1760},
  year         = {2026},
  doi          = {10.1109/TIT.2025.3649040},
  eprint       = {2502.07745},
  archivePrefix= {arXiv},
  primaryClass = {quant-ph},
  note         = {doi: 10.1109/TIT.2025.3649040; arXiv:2502.07745 [quant-ph], first posted February 11, 2025; version 2, March 2, 2026}
}

@article{Davies1978,
  author  = {Davies, E. B.},
  title   = {Information and quantum measurement},
  journal = {IEEE Trans. Inform. Theory},
  volume  = {24},
  number  = {5},
  pages   = {596--599},
  year    = {1978},
  doi     = {10.1109/TIT.1978.1055941}
}

@incollection{Shor2000,
  author       = {Shor, Peter W.},
  title        = {On the number of elements needed in a {POVM} attaining the accessible information},
  booktitle    = {Quantum Communication, Computing, and Measurement 3},
  editor       = {Tombesi, Paolo and Hirota, Osamu},
  publisher    = {Springer},
  address      = {Boston},
  pages        = {107--114},
  year         = {2002},
  doi          = {10.1007/0-306-47114-0_16},
  eprint       = {quant-ph/0009077},
  archivePrefix= {arXiv},
  note         = {doi: 10.1007/0-306-47114-0\_16; preprint arXiv:quant-ph/0009077, posted September 19, 2000}
}

@incollection{Levitin1995,
  author    = {Levitin, Lev B.},
  title     = {Optimal quantum measurements for two pure and mixed states},
  booktitle = {Quantum Communications and Measurement},
  editor    = {Belavkin, V. P. and Hirota, O. and Hudson, R. L.},
  pages     = {439--448},
  publisher = {Plenum Press},
  address   = {New York},
  year      = {1995},
  doi       = {10.1007/978-1-4899-1391-3_43}
}

@phdthesis{Fuchs1996,
  author       = {Fuchs, Christopher A.},
  title        = {Distinguishability and Accessible Information in Quantum Theory},
  school       = {University of New Mexico},
  year         = {1996},
  eprint       = {quant-ph/9601020},
  archivePrefix= {arXiv}
}

@article{Keil2024,
  author  = {Keil, Andreas},
  title   = {Proof of the orthogonal measurement conjecture for qubit states},
  journal = {Int. J. Quantum Inf.},
  volume  = {22},
  number  = {05},
  pages   = {2440008},
  year    = {2024},
  doi     = {10.1142/S0219749924400082},
  eprint  = {0809.0232},
  archivePrefix = {arXiv},
  note          = {Preprint arXiv:0809.0232 posted September 1, 2008; published online June 17, 2024}
}

@phdthesis{KeilThesis2009,
  author  = {Keil, Andreas},
  title   = {Proof of the Orthogonal Measurement Conjecture for Two States of a Qubit},
  school  = {National University of Singapore},
  address = {Singapore},
  year    = {2009},
  type    = {Ph.D. thesis},
  url     = {https://www.quantumlah.org/media/thesis/NCQT_AndreasKeil_PhDthesis.pdf},
  note    = {Conjecture 2, p. 77}
}

@misc{ThaiDallArno2026,
  author        = {Thai, Khac Duc An and Dall'Arno, Michele},
  title         = {On {Shor}'s conjecture on the accessible information of quantum dichotomies},
  year          = {2026},
  howpublished  = {IEEE Trans. Inform. Theory, early access},
  doi           = {10.1109/TIT.2026.3715995},
  eprint        = {2512.11233},
  archivePrefix = {arXiv},
  note          = {Preprint first posted December 12, 2025; early-access article 2026, doi: 10.1109/TIT.2026.3715995}
}

@misc{WangWangChen2026,
  author       = {Wang, Jinbo and Wang, Qihang and Chen, Kun},
  title        = {A proof of {Shor}'s orthogonal-measurement conjecture and the structure of information-optimal quantum measurements},
  year         = {2026},
  month        = aug,
  howpublished = {Zenodo preprint},
  doi          = {10.5281/zenodo.21911604},
  note         = {doi: 10.5281/zenodo.21911604; posted August 13, 2026}
}

@article{VargaAdamBergou2024,
  author  = {Varga, {\'A}rp\'ad and Adam, Peter and Bergou, J\'anos A.},
  title   = {Maximum information measurement for qubit states},
  journal = {Sci. Rep.},
  volume  = {14},
  pages   = {11888},
  year    = {2024},
  doi     = {10.1038/s41598-024-62446-9}
}

@article{RehacekEnglertKaszlikowski2005,
  author  = {\v{R}eh\'a\v{c}ek, Jaroslav and Englert, Berthold-Georg and Kaszlikowski, Dagomir},
  title   = {Iterative procedure for computing accessible information in quantum communication},
  journal = {Phys. Rev. A},
  volume  = {71},
  pages   = {054303},
  year    = {2005},
  doi     = {10.1103/PhysRevA.71.054303},
  eprint  = {quant-ph/0408134},
  archivePrefix = {arXiv}
}

@misc{Matsumoto2014,
  author       = {Matsumoto, Keiji},
  title        = {On maximization of measured $f$-divergence between a given pair of quantum states},
  year         = {2014},
  eprint       = {1412.3676},
  archivePrefix= {arXiv},
  primaryClass = {quant-ph},
  doi          = {10.48550/arXiv.1412.3676},
  note         = {arXiv:1412.3676 [quant-ph]}
}

@article{SalazarAtomic2026,
  author  = {Salazar, Domingos S. P.},
  title   = {Universal thermodynamic uncertainty relation for quantum $f$-divergences},
  journal = {Phys. Rev. E},
  volume  = {113},
  pages   = {064150},
  year    = {2026},
  doi     = {10.1103/n2bt-jcm6},
  eprint  = {2511.10817},
  archivePrefix = {arXiv},
  primaryClass = {quant-ph},
  note    = {Preprint arXiv:2511.10817 posted November 13, 2025; published June 23, 2026; doi: 10.1103/n2bt-jcm6}
}

@book{BhatiaMatrix1997,
  author    = {Bhatia, Rajendra},
  title     = {Matrix Analysis},
  series    = {Graduate Texts in Mathematics},
  volume    = {169},
  publisher = {Springer},
  address   = {New York},
  year      = {1997},
  doi       = {10.1007/978-1-4612-0653-8}
}

@book{Rockafellar1970,
  author    = {Rockafellar, R. Tyrrell},
  title     = {Convex Analysis},
  series    = {Princeton Mathematical Series},
  volume    = {28},
  publisher = {Princeton University Press},
  address   = {Princeton},
  year      = {1970},
  doi       = {10.1515/9781400873173}
}

@book{AbsilMahonySepulchre2008,
  author    = {Absil, P.-A. and Mahony, Robert and Sepulchre, Rodolphe},
  title     = {Optimization Algorithms on Matrix Manifolds},
  publisher = {Princeton University Press},
  address   = {Princeton},
  year      = {2008},
  doi       = {10.1515/9781400830244}
}

@article{FuchsCaves1994,
  author  = {Fuchs, Christopher A. and Caves, Carlton M.},
  title   = {Ensemble-dependent bounds for accessible information in quantum mechanics},
  journal = {Phys. Rev. Lett.},
  volume  = {73},
  number  = {23},
  pages   = {3047--3050},
  year    = {1994},
  doi     = {10.1103/PhysRevLett.73.3047}
}

@article{JozsaRobbWootters1994,
  author  = {Jozsa, Richard and Robb, Daniel and Wootters, William K.},
  title   = {Lower bound for accessible information in quantum mechanics},
  journal = {Phys. Rev. A},
  volume  = {49},
  number  = {2},
  pages   = {668--677},
  year    = {1994},
  doi     = {10.1103/PhysRevA.49.668}
}

@article{SasakiBarnettJozsaOsakiHirota1999,
  author  = {Sasaki, Masahide and Barnett, Stephen M. and Jozsa, Richard and Osaki, Masao and Hirota, Osamu},
  title   = {Accessible information and optimal strategies for real symmetrical quantum sources},
  journal = {Phys. Rev. A},
  volume  = {59},
  number  = {5},
  pages   = {3325--3335},
  year    = {1999},
  doi     = {10.1103/PhysRevA.59.3325}
}

@article{DallArnoBuscemiOzawa2014,
  author  = {Dall'Arno, Michele and Buscemi, Francesco and Ozawa, Masanao},
  title   = {Tight bounds on accessible information and informational power},
  journal = {J. Phys. A: Math. Theor.},
  volume  = {47},
  number  = {23},
  pages   = {235302},
  year    = {2014},
  doi     = {10.1088/1751-8113/47/23/235302},
  eprint  = {1402.0602},
  archivePrefix = {arXiv}
}

@article{DallArnoDArianoSacchi2011,
  author  = {Dall'Arno, Michele and D'Ariano, Giacomo Mauro and Sacchi, Massimiliano F.},
  title   = {Informational power of quantum measurements},
  journal = {Phys. Rev. A},
  volume  = {83},
  number  = {6},
  pages   = {062304},
  year    = {2011},
  doi     = {10.1103/PhysRevA.83.062304},
  eprint  = {1103.1972},
  archivePrefix = {arXiv}
}

@article{HansenPedersen1982,
  author  = {Hansen, Frank and Pedersen, Gert Kj{\ae}rg{\aa}rd},
  title   = {Jensen's inequality for operators and {L}\"owner's theorem},
  journal = {Math. Ann.},
  volume  = {258},
  pages   = {229--241},
  year    = {1982},
  doi     = {10.1007/BF01450679}
}

@article{EdelmanAriasSmith1998,
  author  = {Edelman, Alan and Arias, Tom{\'a}s A. and Smith, Steven T.},
  title   = {The geometry of algorithms with orthogonality constraints},
  journal = {SIAM J. Matrix Anal. Appl.},
  volume  = {20},
  number  = {2},
  pages   = {303--353},
  year    = {1998},
  doi     = {10.1137/S0895479895290954}
}

@book{Boumal2023,
  author    = {Boumal, Nicolas},
  title     = {An Introduction to Optimization on Smooth Manifolds},
  publisher = {Cambridge University Press},
  address   = {Cambridge},
  year      = {2023},
  doi       = {10.1017/9781009166164}
}

@article{HiaiMosonyi2017,
  author  = {Hiai, Fumio and Mosonyi, Mil{\'a}n},
  title   = {Different quantum $f$-divergences and the reversibility of quantum operations},
  journal = {Rev. Math. Phys.},
  volume  = {29},
  number  = {7},
  pages   = {1750023},
  year    = {2017},
  doi     = {10.1142/S0129055X17500234},
  eprint  = {1604.03089},
  archivePrefix = {arXiv},
  primaryClass = {math-ph}
}

@article{Effros2009,
  author  = {Effros, Edward G.},
  title   = {A matrix convexity approach to some celebrated quantum inequalities},
  journal = {Proc. Natl. Acad. Sci. USA},
  volume  = {106},
  number  = {4},
  pages   = {1006--1008},
  year    = {2009},
  doi     = {10.1073/pnas.0807965106},
  eprint  = {0802.1234},
  archivePrefix = {arXiv}
}

@inproceedings{LeeSimchowitzJordanRecht2016,
  author    = {Lee, Jason D. and Simchowitz, Max and Jordan, Michael I. and Recht, Benjamin},
  title     = {Gradient descent only converges to minimizers},
  booktitle = {Proceedings of the 29th Conference on Learning Theory},
  series    = {Proceedings of Machine Learning Research},
  volume    = {49},
  pages     = {1246--1257},
  publisher = {PMLR},
  year      = {2016},
  url       = {https://proceedings.mlr.press/v49/lee16.html}
}

@article{ShorAdaptive2004,
  author       = {Shor, Peter W.},
  title        = {The Adaptive Classical Capacity of a Quantum Channel, or Information Capacities of Three Symmetric Pure States in Three Dimensions},
  journal      = {IBM J. Res. Develop.},
  volume       = {48},
  number       = {1},
  pages        = {115--138},
  year         = {2004},
  doi          = {10.1147/rd.481.0115},
  eprint       = {quant-ph/0206058},
  archivePrefix= {arXiv},
  note         = {Preprint arXiv:quant-ph/0206058 posted June 10, 2002; journal publication 2004}
}

@techreport{Xiao2021,
  author      = {Xiao, Lucy},
  title       = {The Adaptive Capacity for Two Mixed States},
  institution = {Research Science Institute, Massachusetts Institute of Technology},
  address     = {Cambridge, MA},
  year        = {2021},
  month       = aug,
  url         = {https://math.mit.edu/research/highschool/rsi/documents/2021/Xiao.pdf}
}

@book{CoverThomas2006,
  author    = {Cover, Thomas M. and Thomas, Joy A.},
  title     = {Elements of Information Theory},
  edition   = {2},
  publisher = {Wiley-Interscience},
  address   = {Hoboken, NJ},
  year      = {2006},
  doi       = {10.1002/047174882X}
}

@book{NielsenChuang2010,
  author    = {Nielsen, Michael A. and Chuang, Isaac L.},
  title     = {Quantum Computation and Quantum Information},
  edition   = {10th anniversary},
  publisher = {Cambridge University Press},
  address   = {Cambridge},
  year      = {2010},
  doi       = {10.1017/CBO9780511976667}
}

@incollection{KarimiNutiniSchmidt2016,
  author       = {Karimi, Hamed and Nutini, Julie and Schmidt, Mark},
  title        = {Linear Convergence of Gradient and Proximal-Gradient Methods Under the Polyak--Lojasiewicz Condition},
  booktitle    = {Machine Learning and Knowledge Discovery in Databases},
  series       = {Lecture Notes in Computer Science},
  volume       = {9851},
  pages        = {795--811},
  publisher    = {Springer},
  address      = {Cham},
  year         = {2016},
  doi          = {10.1007/978-3-319-46128-1_50},
  eprint       = {1608.04636},
  archivePrefix= {arXiv}
}

@article{TeixidoSchindlerSafranek2025,
  author       = {Teixid{\'o}-Bonfill, Adam and Schindler, Joseph and {\v{S}}afr{\'a}nek, Dominik},
  title        = {Entropic Partial Orderings of Quantum Measurements},
  journal      = {Phys. Scr.},
  volume       = {100},
  pages        = {015298},
  year         = {2025},
  doi          = {10.1088/1402-4896/ad977c},
  eprint       = {2310.14086},
  archivePrefix= {arXiv},
  primaryClass = {quant-ph},
  note         = {Preprint arXiv:2310.14086 posted October 21, 2023; published online December 27, 2024; volume year 2025}
}

@article{SafranekDeutschAguirre2019,
  author       = {{\v{S}}afr{\'a}nek, Dominik and Deutsch, J. M. and Aguirre, Anthony},
  title        = {Quantum Coarse-Grained Entropy and Thermodynamics},
  journal      = {Phys. Rev. A},
  volume       = {99},
  pages        = {010101},
  year         = {2019},
  doi          = {10.1103/PhysRevA.99.010101},
  eprint       = {1707.09722},
  archivePrefix= {arXiv},
  primaryClass = {quant-ph}
}

@article{BuscemiSchindlerSafranek2023,
  author       = {Buscemi, Francesco and Schindler, Joseph and {\v{S}}afr{\'a}nek, Dominik},
  title        = {Observational Entropy, Coarse Quantum States, and {Petz} Recovery: Information-Theoretic Properties and Bounds},
  journal      = {New J. Phys.},
  volume       = {25},
  pages        = {053002},
  year         = {2023},
  doi          = {10.1088/1367-2630/accd11},
  eprint       = {2209.03803},
  archivePrefix= {arXiv},
  primaryClass = {quant-ph}
}

@incollection{KornerMarton1977,
  author    = {K{\"o}rner, J{\'a}nos and Marton, Katalin},
  title     = {Comparison of Two Noisy Channels},
  booktitle = {Topics in Information Theory},
  editor    = {Csisz{\'a}r, Imre and Elias, Peter},
  series    = {Colloquia Mathematica Societatis J{\'a}nos Bolyai},
  volume    = {16},
  pages     = {411--423},
  publisher = {North-Holland},
  address   = {Amsterdam},
  year      = {1977},
  url       = {https://hdl.handle.net/11573/189672}
}

@article{MakurPolyanskiy2018,
  author       = {Makur, Anuran and Polyanskiy, Yury},
  title        = {Comparison of Channels: Criteria for Domination by a Symmetric Channel},
  journal      = {IEEE Trans. Inform. Theory},
  volume       = {64},
  number       = {8},
  pages        = {5704--5725},
  year         = {2018},
  doi          = {10.1109/TIT.2018.2839743},
  eprint       = {1609.06877},
  archivePrefix= {arXiv}
}

@article{Buscemi2016,
  author       = {Buscemi, Francesco},
  title        = {Degradable Channels, Less Noisy Channels, and Quantum Statistical Morphisms: An Equivalence Relation},
  journal      = {Problems Inform. Transmission},
  volume       = {52},
  number       = {3},
  pages        = {201--213},
  year         = {2016},
  doi          = {10.1134/S0032946016030017},
  eprint       = {1511.08893},
  archivePrefix= {arXiv},
  primaryClass = {quant-ph}
}

@article{Lin1991,
  author  = {Lin, Jianhua},
  title   = {Divergence Measures Based on the {Shannon} Entropy},
  journal = {IEEE Trans. Inform. Theory},
  volume  = {37},
  number  = {1},
  pages   = {145--151},
  year    = {1991},
  doi     = {10.1109/18.61115}
}
